\numberwithin{equation}{section}
\newtheorem{thm}{Theorem}[section]
\newtheorem{defn}[thm]{Definition}
\newtheorem{propn}[thm]{Proposition}
\newtheorem{lemma}[thm]{Lemma}
\newtheorem{cor}[thm]{Corollary}
\newtheorem{ex}[thm]{Example}
\newcommand{\ddt}{\frac{\mathrm{d}}{\mathrm{d}t}}
\newcommand{\ddx}{\frac{\mathrm{d}}{\mathrm{d}x}}
\newcommand{\dt}{\mathrm{d}t\,}
\newcommand{\dtau}{\mathrm{d}\tau\,}
\newcommand{\dx}{\mathrm{d}x\,}
\newcommand{\dy}{\mathrm{d}y\,}
\newcommand{\dd}{\mathrm{d}}
\newcommand{\dpsi}{\mathrm{d}\psi}
\newcommand{\dddagger}{\dd^{\dagger}}
\newcommand{\psibar}{\bar{\psi}}
\newcommand{\epsbar}{\bar{\epsilon}}
\newcommand{\ddp}{\frac{\mathrm{d}p}{2\pi\hbar}}
\newcommand{\nn}{\nonumber}
\newcommand{\pathmeasure}{\mathcal{D}[x(t)]\,}
\newcommand{\pathmeasurey}{\mathcal{D}[y(t)]\,}
\newcommand{\Det}{\mathrm{Det}\,}
\begin{document}

\begin{titlepage}

~\\
\vskip 2cm

\begin{center}
{\Large \bf Path Integral Methods in Index Theorems}
\end{center}
\bigskip

\vspace{.5cm}
\centerline{
{\large Mark van Loon}
\footnote{mark.vanloon@merton.ox.ac.uk }
}
\vspace*{3.0ex}

\begin{center}
{\it
{
{
Merton College, University of Oxford, \\
Oxford, OX1 4JD, UK.\\
}
}}
\end{center}

\vspace*{1.8cm}
\centerline{\textbf{Abstract}} \bigskip
This paper provides a pedagogical introduction to the quantum mechanical path integral and its use in proving index theorems in geometry, specifically the Gauss-Bonnet-Chern theorem and Lefschetz fixed point theorem. It also touches on some other important concepts in mathematical physics, such as that of stationary phase, supersymmetry and localization. It is aimed at advanced undergraduates and beginning graduates, with no previous knowledge beyond undergraduate quantum mechanics assumed. The necessary mathematical background in differential geometry is reviewed, though a familiarity with this material is undoubtedly helpful.

\end{titlepage}

\chapter*{Preface}
As mentioned in the abstract, this paper provides a pedagogical introduction to the quantum mechanical path integral and its use in proving index theorems in geometry.

Several other works, such as \cite{Mirrorsymmetry, Nakahara, ooguri,  SiLi}, introduce some of the ideas in this paper, but tend to focus on other applications of these concepts. I hope that students will find this paper useful as it provides a single introduction to all these ideas without requiring advanced background knowledge.

The proofs of the Gauss-Bonnet-Chern theorem and Lefschetz Fixed Point theorem in chapter 3 are based on the proof outlines in \cite{Si Li} and \cite{ooguri}. A lot of the details in these proofs are worked out explicitly and some numerical factors that were stated incorrectly in \cite{SiLi} and \cite{ooguri} have been corrected. I hope these proofs are useful to students new to this material and am unaware of any other source that works out these proofs in detail using these methods.
\\

The paper is based on a dissertation submitted to The University of Oxford in partial fulfilment of the requirements for the degree of Master of Mathematics.

\chapter*{Acknowledgements}
I would like to extend my deepest gratitude to Prof. James Sparks, who supervised me for this dissertation. His suggested reading material, comments and ideas for improvement were invaluable and I would not have been able to put this dissertation together without him.

\tableofcontents

\chapter*{Introduction} \label{chap:intro}
Since its inception by Richard Feynman in the forties, applications of the path integral in physics abound and a $10,000$-word essay can hardly do justice to all of these. While commonly introduced as a gateway to perturbation theory in quantum field theory, or as a useful calculational tool in statistical mechanics, it is also an object that is intrinsically of interest: physically, as it provides new insight into quantum mechanical phenomena and relates quantum and classical physics, and mathematically, as it provides new proofs of index theorems.

The path integral, with the appearance of the action, and its interpretation as a ``sum over all paths", gives a nicer interpretation of what quantum mechanics is fundamentally about. Furthermore, it shows more clearly the correspondence with classical mechanics, and extends easily to quantum field theory, though we shall not consider that here.

It is also very well-suited to problems in supersymmetric quantum mechanics and systems defined on Riemannian manifolds. Specifically, we shall show how the path integral can be used to evaluate certain topological invariants of manifolds through the Gauss-Bonnet-Chern and Lefschetz fixed-point theorems.\\
\\
The overview of this dissertation is as follows. 

Chapter \ref{chap:pathintro} outlines the path integral in quantum mechanics. Section \ref{sec:pathintro} introduces the path integral and deduces some elementary results. Section \ref{sec:schrod} shows the equivalence with the Schr{\"o}dinger formulation and shows some interesting correspondences with classical mechanics. Section \ref{sec:mathcons} looks at some mathematical properties of the path integral. First we consider the stationary phase approximation, which is useful for systems with action $S \gg \hbar$ and provides a ``derivation" of classical mechanics. Secondly, we consider zeta-regularization to assign finite values to otherwise infinite quantities, an idea that is widely used in theoretical physics and other disciplines.

Chapter \ref{chap:mathprelims} gives the necessary mathematical background on Grassmann variables and differential geometry to discuss supersymmetry in chapter \ref{chap:SUSYQM}. The focus lies on differential forms, which are treated in section \ref{sec:diffforms}.

Chapter \ref{chap:SUSYQM} introduces supersymmetric quantum mechanics by analyzing some simple examples. The general structure of supersymmetry is reviewed in section \ref{sec:SUSYstructure}. Section \ref{sec:localization} looks at the property of localization, a method for evaluating exactly certain quantities in supersymmetric models. These ideas are then used in section \ref{sec:geomthms} to give ``physics proofs" of the Gauss-Bonnet-Chern and Lefschetz fixed-point theorems.

\chapter{Path integral approach to quantum mechanics} \label{chap:pathintro}

\section{Introduction to the path integral} \label{sec:pathintro}
In this section we define the propagator and path integral. We will assume a basic understanding of the Schr{\"o}dinger/Heisenberg picture of quantum mechanics, and will show how the Feynman `sum over all paths' emerges from it. Section \ref{sec:schrod} shows that the converse is also true, so that these two formulations are equivalent.

\subsection{Brief review of quantum mechanics}
We start with a brief review of quantum mechanics. For simplicity we will deal with quantum mechanics in $1+1$ dimensions, although the discussion generalises naturally to more spatial dimensions.

Physical states of a particle are described by a Hilbert space $\mathcal{H}$ of kets $\ket{\psi}$. Physical observables are self-adjoint linear operators on $\mathcal{H}$. Two particularly important operators are the position operator $\hat{x}$ and momentum operator $\hat{p}$, which satisfy the canonical commutation relation \cite{Hannabuss}
\begin{equation} 
\left[\hat{x},\hat{p} \right] = i\hbar
\end{equation}
with square brackets indicating a commutator: $[\hat{A},\hat{B}] = \hat{A}\hat{B} - \hat{B}\hat{A}$.

The position eigenstates $\ket{x}$ and momentum eigenstates $\ket{p}$, defined by $\hat{x}\ket{x} = x\ket{x}$ and $\hat{p}\ket{p} = p\ket{p}$, are interpreted as states in which the particle has definite position $x$ or momentum $p$, respectively. \\
\\
A particular realization of $\mathcal{H}$ is $\mathcal{H} = L^2(\mathbb{R})$, in which case the states are square-integrable functions $\psi(x)$ depending on position $x$. The position and momentum operators are $\hat{x} = x$ and $\hat{p} = -i\hbar\frac{\partial}{\partial x}$. We choose the norm of the eigenstates such that:
\begin{align}
\braket{x'|x} &= \delta(x-x') \nn\\
\braket{p'|p} &= 2\pi\hbar\cdot \delta(p'-p)\nn\\
\braket{x|p}  &= \exp(ipx/\hbar) \label{eq:planewave}
\end{align}
where $\delta(y)$ is the Dirac delta function. From equation \ref{eq:planewave} we recognize the momentum eigenstates as plane waves.

The position and momentum eigenstates both independently form a basis for $\mathcal{H}$. Combining this with our choice of normalization above, we get the very useful results:
\begin{equation} \label{eq:identityres}
 \int \dx \ket{x}\bra{x} = \mathds{1} =  \int \ddp \ket{p}\bra{p}
\end{equation}
where the integration is over $\mathbb{R}$.\\
\\
The evolution of states in $\mathcal{H}$ is governed by the Hamiltonian $H$ of the system, a Hermitian operator, through the Schr{\"o}dinger equation: \cite{Hannabuss}
\begin{equation}
i\hbar \ddt \ket{\psi(t)} = H \ket{\psi(t)}.
\end{equation}
This equation can be formally integrated to give evolution from initial time $t_i$ to final time $t_f$
\begin{equation}
\ket{\psi(t_f)} = U(t_f, t_i) \ket{\psi(t_i)}
\end{equation}
where $U(t_f,t_i) = U(t_f-t_i) =  \exp\left(-\frac{i(t_f-t_i)}{\hbar}H\right)$ is the time-evolution operator in the case of a time-independent Hamiltonian \cite{Hannabuss}. \\
\\
An important property of $U$ that immediately follows is that for any $t_i < t < t_f$:
\begin{equation} \label{eq:multiplU}
U(t_f,t_i) = U(t_f,t)U(t,t_i).
\end{equation}

The time-evolution operator gives the probability amplitude for transitioning from an initial state $\ket{\psi_i}$ to final state $\ket{\psi_f}$ as
\begin{equation} \label{eq:transitprob}
\braket{\psi_f|U(t_f,t_i)|\psi_i}.
\end{equation}

\subsection{Propagator and path integral} \label{sec:propdefn}
We can now define the propagator, which shall be the main object of interest to us.

\begin{defn} \label{def:propagator}
The \textbf{propagator} (also called \textbf{kernel}) $K(x_f, t_f; x_i, t_i)$ is the transition amplitude to go from position eigenstate $\ket{x_i}$ at time $t_i$ to position eigenstate $\ket{x_f}$ at time $t_f$: \cite{Mackenzie}
\begin{equation}\label{eq:propagatordef}
K(x_f, t_f; x_i, t_i) = \braket{x_f | U(t_f , t_i) | x_i}.
\end{equation}
\end{defn}

The propagator can be used to calculate the transition probability in \ref{eq:transitprob} by using a resolution of the identity given by \ref{eq:identityres}
\begin{align} \label{eq:transitprob2}
\braket{\psi_f|U(t_f,t_i)|\psi_i} &=\bra{\psi_f}\left(\int \dd x_f\, \ket{x_f}\bra{x_f}\right)U(t_f,t_i)\left(\int \dd x_i\, \ket{x_i}\bra{x_i}\right)\ket{\psi_i} \nonumber \\
&= \int \int \dd x_f \,\dd x_i \braket{\psi_f|x_f}\braket{x_f|U(t_f,t_i)|x_i}\braket{x_i|\psi_i} \nonumber \\
&= \int \int \dd x_f \,\dd x_i\, \psi_f^*(x_f) \psi_i(x_i) K(x_f,t_f;x_i,t_i)
\end{align}
where we used linearity of all operators to move the integrals to the front, and used the particular realization $\mathcal{H} \cong L^2({\mathbb{R}})$. From equation \ref{eq:transitprob2}, we conclude that the propagator uniquely determines all transition probabilities \cite{Blau}. \\
\\
As a simple example, let us evaluate the propagator for a free particle. 
\begin{ex}
\textbf{(Free particle propagator)} \\ The free particle has propagator \cite{Blau, hitoshipath}
\begin{equation}\label{eq:freeparticle}
K(x_f, t_f; x_i, t_i) = \sqrt{\frac{m}{2\pi i \hbar (t_f-t_i)}} \exp\left( \frac{im}{2\hbar}\frac{(x_f-x_i)^2}{t_f-t_i} \right).
\end{equation}

\begin{proof}
The free particle is described by a simple Hamiltonian: $\hat{H} = \frac{\hat{p}^2}{2m}$. Substituting this into definition \ref{eq:propagatordef}, we get propagator:
\begin{align}
K(x_f, t_f; x_i, t_i) &= \braket{x_f | \exp\left(-\frac{i(t_f-t_i)}{2m\hbar}\hat{p}^2\right) | x_i} \nonumber \\
&= \int \ddp \, \braket{x_f | \exp\left(-\frac{i(t_f-t_i)}{2m\hbar}\hat{p}^2\right) |p}\braket{p| x_i} 
\end{align}
where we used equation \ref{eq:identityres} to insert a resolution of the identity. Note that $\ket{p}$ is an eigenstate of $\hat{p}$ and hence of $\exp\left(-\frac{i(t_f-t_i)}{2m\hbar}\hat{p}^2\right)$, with eigenvalue $\exp\left(-\frac{i(t_f-t_i)}{2m\hbar}p^2\right)$. Further recognize the plane wave from equation \ref{eq:planewave}: $\braket{p|x_i} = \exp(-ipx_i/\hbar)$. Thus:
\begin{align}
K(x_f, t_f; x_i, t_i) &= \int \ddp \, \exp \left( -\frac{i(t_f-t_i)}{2m\hbar}p^2+\frac{i(x_f - x_i)}{\hbar}p\right) \nonumber \\
&= \sqrt{\frac{m}{2\pi i \hbar (t_f-t_i)}} \exp\left( \frac{im}{2\hbar} \frac{(x_f-x_i)^2}{t_f-t_i} \right).
\end{align}
Here we made use of the following identity:
\begin{equation} \label{eq:Fresnel}
\int \dx \exp\left(-\frac{1}{2}iax^2 + bx\right) = \left(\frac{2\pi}{ai}\right)^{\frac{1}{2}}\exp\left(-\frac{i}{2a}b^2\right).
\end{equation}
This follows by completing the square and using the Fresnel integral formula in appendix \ref{sec:fresnel}.

\end{proof}
\end{ex}

A noteworthy point is that the term in the exponential is exactly $\frac{i}{\hbar} S[x_c(t)]$, where $S[x_c(t)]$ is the action of the classical path. \cite{Mackenzie} As we shall see later, this is no coincidence, but a result of the Lagrangian describing the system being at most quadratic in the position.\\
\\
As the propagator represents propagation from an initial state to a final state, we would expect that we can express propagation from time $t_i$ to $t_f$ by propagation first from $t_i$ to intermediate time $t$ and then from $t$ to $t_f$ (where $t_i < t < t_f$). This result is known as the \emph{convolution property}: 

\begin{propn} \label{propn:convolution}
\textbf{(Convolution property)} For any $t_i<t<t_f$, the propagator satisfies: \cite{hitoshipath, Mackenzie}
\begin{equation}
K(x_f, t_f; x_i, t_i) = \int \dx \,\, K(x_f, t_f; x, t)K(x,t; x_i, t_i).
\end{equation}
\begin{proof}
We use definition \ref{def:propagator}, equation \ref{eq:multiplU} and a resolution of the identity from equation \ref{eq:identityres}
\begin{align}
K(x_f, t_f; x_i, t_i) &= \braket{x_f | U(t_f - t_i) | x_i} \nonumber \\
&= \braket{x_f|U(t_f-t) \mathds{1} U(t-t_i) |x_i} \nonumber \\
&= \int \dx \, \bra{x_f}U(t_f-t) \ket{x}\bra{x} U(t-t_i) \ket{x_i} \nonumber \\
&= \int \dx \, K(x_f, t_f; x, t)K(x,t; x_i, t_i)
\end{align}
where we used the fact that $\bra{x_f}U(t_f-t)$ is a linear operator to move the integral out of the inner product.
\end{proof}
\end{propn}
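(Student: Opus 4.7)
The plan is to chain together three ingredients already established: the definition of the propagator as a matrix element of $U$, the group law $U(t_f,t_i)=U(t_f,t)U(t,t_i)$ from equation \ref{eq:multiplU}, and the position-space resolution of the identity from equation \ref{eq:identityres}.

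First I would rewrite the left-hand side using Definition \ref{def:propagator} as $\braket{x_f|U(t_f,t_i)|x_i}$. Next I would apply the multiplicative property of the time-evolution operator to split the single $U(t_f,t_i)$ into the product $U(t_f,t)U(t,t_i)$; this step uses precisely the hypothesis $t_i<t<t_f$. Then I would insert $\mathds{1}=\int \dx\,\ket{x}\bra{x}$ between these two factors. The resulting integrand factorizes as $\braket{x_f|U(t_f,t)|x}\braket{x|U(t,t_i)|x_i}$, and each matrix element is, by Definition \ref{def:propagator}, exactly one of the propagators appearing on the right-hand side.

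The whole argument is essentially a one-line manipulation once these ingredients are assembled, and it closely mirrors the derivation of equation \ref{eq:transitprob2}, which used the same resolution-of-identity trick. The only subtle point worth flagging is the interchange of the integral over $x$ with the outer bra $\bra{x_f}$ and ket $\ket{x_i}$; this is justified by the linearity of the inner product, or, more carefully, by viewing $\bra{x_f}U(t_f-t)$ as a continuous linear functional on $\mathcal{H}$ so that the integral may be pulled out. I do not anticipate any genuine obstacle beyond this bookkeeping remark.
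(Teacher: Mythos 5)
Your proposal is correct and follows exactly the paper's own argument: rewrite the propagator via Definition \ref{def:propagator}, split $U(t_f,t_i)$ using equation \ref{eq:multiplU}, insert the resolution of the identity from equation \ref{eq:identityres}, and identify the two resulting matrix elements as propagators. Even your closing remark about pulling the integral through $\bra{x_f}U(t_f-t)$ by linearity is the same justification the paper gives.
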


The free particle propagator found earlier can be checked to satisfy this equation. The proposition has an immediate corollary:

\begin{cor} \label{corr:smalltimepropagator}
Let $[t_i,t_f]$ be a time interval and let $N \in \mathbb{N}$. Define $\epsilon = \frac{1}{N} (t_f - t_i)$ and for $j=0,1,\dots,N$ let $t_j = t_i + j\epsilon $, so that the time interval is partitioned into $N-1$ time intervals of length $\epsilon$.

Then we have expression for the propagator:
\begin{align} \label{eq:smalltimepropagator}
K(x_f, t_f; x_i, t_i) = \int \dd x_1 &\dots \dd x_{N-1}\, K(x_f, t_f; x_{N-1}, t_{N-1}) K(x_{N-1}, t_{N-1}; x_{N-2}, t_{N-2}) \times\nn\\
&\times  \dots \times K(x_2,t_2; x_1,t_1) K(x_1,t_1; x_i,t_i)
\end{align}
\begin{proof}
This follows from repeated applications of proposition \ref{propn:convolution}.
\end{proof}
\end{cor}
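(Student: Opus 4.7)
The plan is to prove this by straightforward induction on $N$, since the statement is essentially just the convolution property of Proposition~\ref{propn:convolution} iterated $N-1$ times. I would not attempt anything fancier, as the corollary is a purely combinatorial consequence of a single identity.

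First I would set up the base case $N=2$. In that situation there is exactly one intermediate time $t_1 = t_i + \epsilon$ strictly between $t_i$ and $t_f$, and the claimed equation
\begin{equation*}
K(x_f, t_f; x_i, t_i) = \int \dx_1\, K(x_f, t_f; x_1, t_1)\,K(x_1, t_1; x_i, t_i)
\end{equation*}
is exactly Proposition~\ref{propn:convolution} applied at the time $t = t_1$, which is permitted since $t_i < t_1 < t_f$.

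For the inductive step, I would assume the statement holds for some $N \geq 2$ with time-step $\epsilon = (t_f - t_i)/N$ and intermediate times $t_1, \dots, t_{N-1}$. To obtain the statement for $N+1$ with step $\epsilon' = (t_f - t_i)/(N+1)$ and new nodes $t'_j = t_i + j\epsilon'$, I would apply Proposition~\ref{propn:convolution} once to the factor $K(x'_1, t'_1; x_i, t_i)$ appearing furthest to the right (or equivalently take each $K$ factor at one step and subdivide it further). The ordering $t_i < t'_1 < t'_2 < \dots < t'_N < t_f$ makes each invocation of the convolution property valid, and Fubini's theorem (which is implicit in rearranging the integrals) lets us collect the resulting integrations into a single multiple integral.

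The only genuine issue to watch is bookkeeping: one must be careful that each time the convolution identity is applied, the intermediate time lies strictly between the two endpoints of the $K$ factor being split, so that the hypotheses of Proposition~\ref{propn:convolution} are literally satisfied. Because the subdivision is uniform and the $t_j$ are strictly increasing in $j$, this is automatic. No subtle analytic issue (such as interchanging integrals with limits) arises at this level, since everything is a finite product of integrals over $\mathbb{R}$ against the kernels $K$.
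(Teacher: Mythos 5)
Your proof is correct and is essentially the paper's own argument---the paper simply says the result follows from repeated application of Proposition \ref{propn:convolution}, and your induction on $N$ is a formalization of exactly that. The only point worth making explicit is that the inductive step must invoke the hypothesis on the subinterval $[t_i+\epsilon',\,t_f]$ of length $N\epsilon'$ (not on $[t_i,t_f]$ itself, whose uniform $N$-partition does not refine the uniform $(N+1)$-partition), which is legitimate because the corollary is stated for an arbitrary interval.
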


The convolution property proves crucial in the definition of the path integral as it allows us to calculate the propagator by splitting the time interval $[t_i,t_f]$ into smaller intervals of length $\epsilon$ and taking the limit $\epsilon \rightarrow 0$. \\
\\
We now work out what such a propagator looks like for a Hamiltonian of the form
\begin{equation} \label{eq:simplehamiltonian}
\hat{H} = \hat{T} + \hat{V} = \frac{\hat{p}^2}{2m} + V(\hat{x}).
\end{equation}
This Hamiltonian describes a particle with mass $m$ moving in a potential given by $V(x)$.

To find the propagator, we take the partition of the time interval $[t_i,t_f]$ as defined in corollary \ref{corr:smalltimepropagator}. Note that the time-evolution operator satisfies for any $N \in \mathbb{N}$:
\begin{align} \label{eq:smalltimeevol}
U(t_f,t_i) = \exp\left(-\frac{i(t_f-t_i)}{\hbar}\hat{H}\right) &= \left(\exp\left(-\frac{i(t_f-t_i)}{N\hbar}\hat{H}\right)\right)^N \nonumber \\
&= \left(\exp\left(-\frac{i\epsilon}{\hbar}\hat{H}\right)\right)^N \nonumber\\
&= \left(\exp\left(-\frac{i\epsilon}{\hbar}(\hat{T} + \hat{V})\right)\right)^N
\end{align}
where again $\epsilon = \frac{t_f - t_i}{N}$.\\
\\
Due to the non-commutativity of $\hat{T}$ and $\hat{V}$, we cannot simply expand this exponential into a product of two exponentials, i.e. the equation 
\begin{equation} \label{eq:wrongexponential}
\exp\left(-\frac{i\epsilon}{\hbar}(\hat{T} + \hat{V})\right) = \exp\left(-\frac{i\epsilon}{\hbar}\hat{T}\right)\exp\left(-\frac{i\epsilon}{\hbar}\hat{V}\right)
\end{equation} 
does not hold in general.

However, this equation is approximately correct, the error being $\mathcal{O}(\epsilon^2)$ \cite{Blau}. See appendix \ref{sec:CBH} for the details. As we are interested in the limit $\epsilon \rightarrow 0$, we can discard these terms and approximate the small-time propagator by:
\begin{equation} \label{eq:smalltimeprop1}
\braket{x_{j+1}|U(t_{j+1},t_j)|x_{j}} \approx \braket{x_{j+1}|e^{-\frac{i\epsilon}{\hbar}\hat{T}}e^{-\frac{i\epsilon}{\hbar}\hat{V}}|x_{j}} = \braket{x_{j+1}|e^{-\frac{i\epsilon}{\hbar}\hat{T}}|x_{j}} e^{-\frac{i\epsilon}{\hbar}V(x_j)}
\end{equation}
where we used that $\ket{x_j}$ is an eigenstate of the operator $\hat{V}$ and hence of $e^{-\frac{i\epsilon}{\hbar}\hat{V}}$.\\
\\
Now insert an identity: $\mathds{1} = \int \ddp \, \ket{p}\bra{p}$, to get
\begin{align} \label{eq:smalltimeprop2}
\braket{x_{j+1}|e^{-\frac{i\epsilon}{\hbar}\hat{T}}|x_{j}} &= \int \ddp \, \braket{x_{j+1}|e^{-\frac{i\epsilon}{2m\hbar}\hat{p}^2}|p}\braket{p|x_{j}} \nonumber \\
&= \int \ddp \, \exp \left( -\frac{i\epsilon}{2m\hbar}p^2+\frac{i(x_{j+1} - x_j)}{\hbar}p\right) \nonumber \\
&= \left(\frac{m}{2i\pi\hbar\epsilon}\right)^{\frac{1}{2}} \exp\left(\frac{im}{2\hbar} (x_{j+1} - x_j)^2 \right)
\end{align}
where we used that $\ket{p}$ is an eigenstate of $\hat{T}$ and that $\braket{x|p} = e^{ipx/\hbar}$. Furthermore we used equation \ref{eq:Fresnel} to evaluate the Fresnel integral.\\
\\
We use these equations \ref{eq:smalltimeprop1} and \ref{eq:smalltimeprop2} in \ref{corr:smalltimepropagator} to finally obtain the following expression for the propagator:
\begin{align} \label{eq:propagatorlim}
  K(x_f,t_f;&x_i,t_i) = \lim_{N\rightarrow \infty} \left(\frac{mN}{2i\pi\hbar(t_f-t_i)}\right)^{\frac{N}{2}} \times \nn\\
&\times\int \dd x_1\dots\dd x_{N-1} \exp\left[\frac{i\epsilon}{\hbar} \sum_{j=0}^{N-1}\left( \frac{m}{2} \left(\frac{x_{j+1} - x_j}{\epsilon}\right)^2 - V(x_j) \right) \right] 
\end{align}
with $x_0 = x_i$ and $x_N = x_f$.\\
\\
This equation gives a correct formal expression for the propagator. One can use it to calculate propagators for any system, but the calculations are usually prohibitive and we resort to other ways of finding the propagator. For example, we can check our earlier expression for the free particle propagator, but this would take several pages and involves lots of non-trivial trigonometric identities (see \cite{hitoshipath} for details). A useful analogy is calculus: we rarely use the technical definitions of derivatives and integrals, instead resorting to theorems characterizing their properties, such as the product rule, chain rule and the fundamental theorem of calculus. \\
\\
Now note that in the exponent in equation \ref{eq:propagatorlim}, we have a term:
\begin{equation}
\epsilon \sum_{j=0}^{N-1} \left[ \frac{m}{2}\left(\frac{x_{j+1} - x_j}{\epsilon}\right)^2 - V(x_j)\right].
\end{equation}
We then take a limit $\epsilon \rightarrow 0$ over the whole integral. Assuming we can move this limit through the integration measure and the exponential (the latter is possible by continuity of $\exp$), we get an integral as a limit of a Riemann sum. We use that $\lim_{\epsilon \rightarrow 0} \left(\frac{x_{j+1} - x_j}{\epsilon}\right) = \dot{x_j}$ to get as exponent:
\begin{equation}
\frac{i}{\hbar} \epsilon \sum_{j=0}^{N-1} \left[ \frac{m}{2}\left(\frac{x_{j+1} - x_j}{\epsilon}\right)^2 - V(x_j)\right] \rightarrow \frac{i}{\hbar} \int_{t_i}^{t_f} \dt \, L(x(t),\dot{x}(t)) = \frac{i}{\hbar} S[x(t)]
\end{equation}
where we recognized the Lagrangian $L(t) = L(x(t),\dot{x}(t)) = \frac{m}{2} \dot{x}(t)^2 - V(x(t)) $ and used the definition of the action: $S = \int \dt L(t)$. Recall we already encountered an exponential of $\frac{i}{\hbar} S$ in our free particle example.\\
\\
We now define the path integral as the limit of integrals appearing in our expression for the propagator.

\begin{defn} \label{def:pathintegral}
Let $(t_i,x_i)$ and $(t_f,x_f)$ be two points in spacetime such that $t_i < t_f$. Let $t_0,t_1,\dots,t_N$ be a partition of $[t_i,t_f]$ into intervals of length $\epsilon = \frac{t_f-t_i}{N}$ and define $x_j = x(t_j)$.  We formally define:\cite{Blau}
\begin{align}
\mathcal{N} &= \lim_{N\rightarrow \infty} \left(\frac{mN}{2i\pi\hbar(t_f-t_i)}\right)^{\frac{N}{2}} \label{eq:Ndef} \\
\int_{x(t_i) = x_i}^{x(t_f) = x_f} \mathcal{D}[x(t)] &= \lim_{N\rightarrow \infty} \int \dd x_1 \dots \dd x_{N-1}. \label{eq:pathintegraldef}
\end{align} 

The \textbf{path integral} is formally defined to be: 
\begin{align}
& \int_{x(t_i) = x_i}^{x(t_f) = x_f} \mathcal{D}[x(t)] \exp\left(\frac{i}{\hbar} S[x(t)] \right) = \nonumber \\
&= \lim_{N\rightarrow \infty} \int \dd x_1 \dots \dd x_{N-1} \exp\left[\frac{i\epsilon}{\hbar} \sum_{j=0}^{N-1}\left( \frac{m}{2} \left(\frac{x_{j+1} - x_j}{\epsilon}\right)^2 - V(x_j) \right) \right] 
\end{align}
where we identify $x_0 = x_i$ and $x_N = x_f$.\\

The propagator then satisfies:
\begin{equation} \label{eq:proppathint}
\boxed{ K(x_f,t_f;x_i,t_i) = \mathcal{N} \int \mathcal{D}[x(t)] \exp\left(\frac{i}{\hbar} S[x(t)] \right) }
\end{equation}
with implicit limits of integration $x(t_i) = x_i, x(t_f) = x_f$.
\end{defn}

One should take this ``integral" as a formal construct, not as an integral in a strict mathematical sense. In fact, the path integral does not exist in a strict mathematical sense, as our ``measure" $\mathcal{D}[x(t)]$ is not a measure that can be imposed on the space of all paths \cite{Blau}.

Furthermore, our definition of $\mathcal{N}$ implies it is an infinite constant. The propagator is finite though, so equation \ref{eq:proppathint} implies that the path integral must be zero! However, this is not a problem as the product of $\mathcal{N}$ and $\mathcal{D}x$ is the relevant quantity, and we only ever see them together. In certain situations, such as when calculating scattering amplitudes \cite{Mackenzie}, one is concerned with ratios of path integrals, which actually are finite.\\
\\
For the physical point of view: we can think of this path integral as summing $\exp\left(\frac{i}{\hbar} S[x(t)]\right)$ over all paths $x(t)$ between $(t_i,x_i)$ and $(t_f,x_f)$. Equivalently all paths $x(t)$ are contributing to the probability amplitude of propagation, each path weighted by the phase $\exp\left(\frac{i}{\hbar} S[x(t)]\right)$.

This is the origin of the idea of a ``Feynman sum over all histories", the histories referring to different paths a particle can take. This is sometimes phrased as \emph{``a particle takes all possible paths between two points"}, though this can be slightly misleading. Firstly, while all paths contribute to the propagation amplitude, they are weighted by a phase depending on the action of the path. Secondly, quantum theory is inherently a theory of \emph{measurement} and if we only measure at times $t_i$ and $t_f$ - and not at any intermediate time $t$ - then the question ``Where was the particle at time $t$?" is the wrong question to ask within the quantum mechanical framework. 

\subsubsection{Free particle and normalization $\mathcal{N}$} \label{sec:freeparticle}
In this section we revisit the free particle, now evaluating the propagator via a path integral.

First we need to define the determinant of an infinite-dimensional matrix. 
\begin{defn} \label{defn:det}
Let $\mathcal{H}$ be a separable Hilbert space and $A: \mathcal{H} \rightarrow \mathcal{H}$ an elliptic, self-adjoint linear operator with a complete set of eigenvectors, with associated eigenvalues $\{ \lambda_n \}_{n= 1}^{\infty}$. Analogously to the finite-dimensional case, we define the determinant of $A$ as the product of its eigenvalues:
\begin{equation}
\Det A = \prod_{n= 1}^{\infty} \lambda_n.
\end{equation}
\end{defn}
Note that generally this determinant is infinite. However, this is not a problem, as we shall only be interested in ratios of determinants. 

In section \ref{sec:zeta}, we shall see another type of determinant, which is made finite by employing \emph{zeta-regularization}.

\begin{ex} \textbf{(Free particle revisited)} \label{ex:freeparticle2}
Recall our discussion of the free particle, in which we found the propagator:
\begin{equation} \label{eq:freeparticle2}
K(x_f, t_f; x_i, t_i) = \sqrt{\frac{m}{2\pi i \hbar (t_f-t_i)}} \exp\left( \frac{im}{2\hbar} \frac{(x_f-x_i)^2}{t_f-t_i} \right).
\end{equation}
We show that we also have expression:
\begin{equation} \label{eq:freeparticle3}
K(x_f, t_f; x_i, t_i) = \mathcal{N} \exp\left(\frac{im}{2\hbar} \frac{(x_f-x_i)^2}{t_f-t_i} \right)\, \Det \left(-\frac{m}{2\pi i \hbar}\partial_t^2\right)^{-1/2}.
\end{equation}
Equating these two, we get the important result:
\begin{equation} \label{eq:Nexpression}
\boxed{ \mathcal{N} = \sqrt{\frac{m}{2\pi i \hbar (t_f-t_i)}} \sqrt{\Det \left(-\frac{m}{2\pi i \hbar}\partial_t^2\right)} . }
\end{equation}
\end{ex}
\begin{proof}
First we find the classical path $x_c(t)$. This is easily found as 
\begin{equation}
x_c(t) = x_i + \frac{t - t_i}{t_f-t_i} (x_f - x_i)
\end{equation}
with associated classical action
\begin{equation}
S[x_c(t)] = \frac{m}{2}\frac{(x_f-x_i)^2}{t_f-t_i}.
\end{equation}
Expand paths around the classical path $x(t) = x_c(t) + y(t)$, so that $y(t)$ satisfies the boundary conditions 
\begin{equation} \label{eq:yboundary}
y(t_i) = 0 = y(t_f).
\end{equation}
We put this in the definition of the path integral:
\begin{align} 
K(x_f, t_f; x_i, t_i) &= \mathcal{N}\int \pathmeasure \exp\left(\frac{i}{\hbar} S[x(t)] \right)  \nn\\
&= \mathcal{N}\exp\left(\frac{i}{\hbar} S[x_c(t)] \right) \int \pathmeasurey \exp\left(\frac{i}{\hbar} S[y(t)] \right)  \nn\\
&= \mathcal{N}\exp\left(\frac{im}{2\hbar}\frac{(x_f-x_i)^2}{t_f-t_i} \right)\int \pathmeasurey \exp\left(\frac{im}{2\hbar}\int_{t_i}^{t_f} \dt \dot{y}^2(t) \right).
\end{align}
Now use integration by parts, noting that the boundary term vanishes by equation \ref{eq:yboundary}. Hence:
\begin{align} \label{eq:freeparticlecalc}
K(x_f, t_f; x_i, t_i) &=  \mathcal{N} \exp\left(\frac{im}{2\hbar} \frac{(x_f-x_i)^2}{t_f-t_i} \right)\int \pathmeasurey \exp\left(\frac{im}{2\hbar}\int_{t_i}^{t_f} \dt y(t) (-\partial_t^2) y(t) \right)\nn\\
&=\mathcal{N} \exp\left(\frac{im}{2\hbar}\frac{(x_f-x_i)^2}{t_f-t_i} \right)\, \Det \left(-\frac{m}{2\pi i \hbar}\partial_t^2\right)^{-1/2}
\end{align}
where we used the familiar Fresnel integral formula. This gives the stated result for $\mathcal{N}$.
\end{proof}

\subsection{Imaginary time propagator}
One may worry about convergence issues relating to the propagator as an integral of $\exp\left( i S[x(t)] /\hbar\right)$, which has unit modulus.

Given an initial time $t_i$, we have defined the propagator for any time $t_f > t_i$. Assuming the propagator is suitably analytic, we can extend its definition into the complex plane to get the imaginary time propagator in terms of $\tau = i t$. This is called a \emph{Wick rotation} and is how the path integral relates quantum (field) theory and statistical mechanics \cite{Mackenzie}.

\begin{defn}
We define the imaginary time propagator (or Euclidean propagator) from $x_i$ to $x_f$ to be
\begin{equation}
K_E(x_f, \tau_f; x_i, \tau_i) = \braket{x_f| \exp\left(-\frac{1}{\hbar}(\tau_f - \tau_i) H\right)|x_i}
\end{equation}
where $\tau_i=i t_i,\tau_f = i t_f \in i\mathbb{R}$ and $t_f > t_i$.

Assuming suitable conditions on $H$ so that this is analytic (except for a possible pole when $\tau_f = \tau_i$), it is related to the normal propagator by \cite{hitoshipath, Mackenzie}
\begin{equation}
K(x_f, t_f; x_i, t_i) = K_E(x_f, it_f; x_i, it_i).
\end{equation}
\end{defn}

The advantage is that for the common Hamiltonian $L = \frac{1}{2m} \hat{p}^2 + \hat{V}(x)$, the oscillatory $\exp(iS/\hbar)$ is replaced by a negative exponential: $\exp(iS/\hbar) \rightarrow \exp(-S_E/\hbar)$ where 
\begin{equation}
S_E = \int_{\tau_i}^{\tau_f} \dtau \left[ \frac{1}{2m} \left( \frac{\dd x}{\dtau} \right)^2 + V(x(\tau)) \right].
\end{equation}

As a result, quantities are generally better behaved when working in imaginary time, making it useful in calculations. We shall see this when we examine the harmonic oscillator in section \ref{sec:harmosc}.\\
\\
Note that the normalization constant $\mathcal{N}$ changes to 
\begin{align}
\mathcal{N} &\rightarrow  \lim_{N\rightarrow \infty} \left(\frac{mN}{2i\pi\hbar(\tau_f-\tau_i)}\right)^{\frac{N}{2}} \nn\\
&=\left[ \lim_{N\rightarrow \infty} \left(\frac{1}{i}\right)^{\frac{N}{2}} \right]\left[\lim_{N\rightarrow \infty} \left(\frac{mN}{2i\pi\hbar(t_f-t_i)}\right)^{\frac{N}{2}} \right] \nn\\
&= \mathcal{N} \prod_{j=1}^{\infty} \frac{1}{\sqrt{i}}. \label{eq:imNexpression}
\end{align}
For now we should understand this as a formal expression; in section \ref{sec:zeta} we discuss a way to regulate this.\\
\\
There is an important link between the Euclidean propagator and traces, such as those encountered in statistical mechanics. Consider an operator $\mathcal{O}$ and let $\beta >0$. Then
\begin{equation}
\mathrm{Tr} \left( \mathcal{O} e^{-\beta H} \right) = \int \dd x\, \braket{x|\mathcal{O}e^{-\beta H}|x}.
\end{equation}
From this we note the link with the Euclidean propagator by putting $\mathcal{O} = \mathds{1}$:
\begin{equation}
\braket{x|e^{-\beta H}|x} = K_E (x,\beta\hbar,x,0) = \mathcal{N}\int\mathcal{D}[x(\tau)] \exp \left( -\frac{1}{\hbar}S_E[x(\tau)] \right)
\end{equation}
where the integration is over a \emph{periodic} path: $x(0) = x(\beta\hbar) = x$ for fixed $x$. Varying all possible $x$ gives a path integral expression for the \textbf{partition function} \cite{Blau, hitoshipath, Mackenzie, Mirrorsymmetry}
\begin{equation}
\boxed{Z(\beta) \equiv \mathrm{Tr}\left(e^{-\beta H}\right) = \mathcal{N} \int_{x(0) = x(\beta\hbar)} \mathcal{D}[x(\tau)] \exp \left( -\frac{1}{\hbar}S_E[x(\tau)] \right).}
\end{equation}
We will use the path integral to calculate traces when we discuss the Witten index in chapter \ref{chap:SUSYQM}.

\subsection{Semi-classical approximation and harmonic oscillator}\label{sec:harmosc}
In this section, we analyse the harmonic oscillator, one of the few exactly solvable systems in the path integral formalism and one that is ubiquitous in physics. It is described by the Lagrangian density:
\begin{equation} \label{eq:harmosclagrangian}
\mathcal{L} = \frac{1}{2}m \left( \dot{x}^2(t) - \omega^2(t) x^2(t) \right).
\end{equation}

Often we are interested in the case of constant $\omega(t) = \omega_0$, which we consider separately.

First we show how it arises as an approximation to other systems through the semi-classical approximation. \\
\\
Consider a general system described by some action $S[x(t)] = \int_{t_i}^{t_f} L(t,x(t),\dot{x}(t))$. From Part A Classical Mechanics, this has a classical solution $x_c(t)$ extremizing the action:
\begin{equation}
\left. \frac{\delta S}{\delta x(t)}\right|_{x(t) = x_c(t)} = 0.
\end{equation}

We expand the action $S[x(t)]$ around the classical solution:
\begin{equation}
x(t) = x_c(t) + \sqrt{\hbar} y(t).
\end{equation}
The factor of $\sqrt{\hbar}$ is included to elucidate the dependence on powers of $\hbar$.

Now Taylor expand $S[x(t)]$ around $x_c$:
\begin{align}
S[x(t)] &= S[x_c] + \sqrt{\hbar} \left. \frac{\delta S}{\delta x(t)}\right|_{x(t) = x_c(t)} \delta y(t) +  \left. \frac{1}{2} \hbar \frac{\delta S}{\delta x(t) \delta x(t')}\right|_{x(t) = x_c(t)} y(t) y(t') + \mathcal{O}(\hbar^{3/2}) \nn\\
&= S[x_c] + \left. \frac{1}{2} \hbar \frac{\delta S}{\delta x(t) \delta x(t')}\right|_{x(t) = x_c(t)} y(t) y(t') + \mathcal{O}(\hbar^{3/2}) .
\end{align}
Ignoring terms of $\mathcal{O}(\hbar^{3/2})$, i.e. approximating
\begin{equation}
S[x(t)] \approx S[x_c] + \left. \frac{1}{2} \hbar \frac{\delta S}{\delta x(t) \delta x(t')}\right|_{x(t) = x_c(t)} y(t) y(t')
\end{equation}
is what we refer to as the \textbf{semi-classical approximation} \cite{Blau,Mackenzie}.\\
\\
For the common Lagrangian $L(t) = \frac{1}{2} m (\dot{x}(t))^2  - V(x(t))$, and for small $\hbar$, we get approximate action:
\begin{equation} \label{eq:semiclassical}
S[x_c(t) + y(t)] \approx S[x_c] + \hbar \int \dt \left( \frac{1}{2} m (\dot{y}(t))^2  - V''(x_c(t))y^2(t) \right)
\end{equation}
Most physical situations of interest take place around a minimum $x_c(t)$ of the potential, i.e. $V''(x_c(t)) > 0$. A comparison with equation \ref{eq:harmosclagrangian} reveals that the approximate action \ref{eq:semiclassical} is that of a harmonic oscillator with (angular) frequency $\omega(t) = \sqrt{V''(x_c(t))}$, thus showing the importance of evaluating this path integral.

\begin{ex} \label{ex:harmosc} \textbf{(Harmonic oscillator)} The harmonic oscillator with Lagrangian density
\begin{equation}
\mathcal{L} = \frac{1}{2}m \left( \dot{x}^2(t) - \omega^2(t) x^2(t) \right)
\end{equation}
has the propagator
\begin{equation} \label{eq:harmoscprop}
K(x_f, t_f; x_i, t_i) = \sqrt{\frac{m}{2\pi i \hbar (t_f-t_i)}} \sqrt{ \frac{ \Det \left(-\partial_{\tau}^2\right)}{\Det \left(-\partial_{\tau}^2 + \tilde{\omega}^2(\tau)) \right)}}\exp \left(\frac{i}{\hbar} S[x_c] \right)
\end{equation}
where $x_c$ is the classical path, $\tau = it$ and $\tilde{\omega}(\tau) = \omega(t)$.\\
\\
For a time-independent harmonic oscillator $(\omega(t) = \omega_0)$: \cite{Blau, hitoshipath}
\begin{equation} \label{eq:timeindharmoscprop}
K(x_f, t_f; x_i, t_i) = \sqrt{\frac{m}{2\pi i \hbar (t_f-t_i)}}\sqrt{\frac{\omega_0 (t_f-t_i)}{\sin(\omega_0(t_f-t_i))}} \exp \left(\frac{i}{\hbar} S[x_c] \right)
\end{equation}
\end{ex}
\begin{proof}
We find the Euclidean propagator with $\tau_f = i t_f, \tau_i = i t_i$ by expanding paths around the classical path: $\tilde{x}(\tau) = \tilde{x}_c(\tau) + \tilde{y}(\tau)$ where $\tilde{x}(\tau) = x(t)$ with boundary conditions $\tilde{y}(\tau_i) = 0 = \tilde{y}(\tau_f)$. Again we integrate by parts:
\begin{align}
K_E(x_f, \tau_f; x_i, \tau_i) &= \mathcal{N}_E \exp\left(-\frac{1}{\hbar} S_E[\tilde{x}_c] \right) \times \nn\\
&\times \int \mathcal{D}\tilde{y}[\tau] \exp\left(-\frac{m}{2\hbar}\int_{\tau_i}^{\tau_f} \dtau \tilde{y}(\tau) (-\partial_{\tau}^2 +\tilde{\omega}(\tau)^2 ) \tilde{y}(\tau) \right).
\end{align}
where $\tilde{y}(\tau) = y(t)$, $\tilde{\omega}(\tau) = \omega(t)$.

We substitute our expression 
\begin{equation}
\mathcal{N}_E = \sqrt{\frac{m}{2\pi i \hbar (\tau_f-\tau_i)}}  \sqrt{\Det\left(-\frac{m}{2\pi\hbar}\partial_{\tau}^2\right)}
\end{equation}
to get
\begin{equation}
K_E(x_f, \tau_f; x_i, \tau_i) = \sqrt{\frac{m}{2\pi i \hbar (\tau_f-\tau_i)}} \exp\left(-\frac{1}{\hbar} S_E[\tilde{x}_c] \right) \sqrt{\frac{\Det\left(-\frac{m}{2\pi\hbar}\partial_{\tau}^2\right)}{\Det \left(\frac{m}{2\pi\hbar}\left(-\partial_{\tau}^2 + \tilde{\omega}^2(\tau)\right) \right) }}
\end{equation}
Now we use $\exp\left(-\frac{1}{\hbar} S_E[\tilde{x}_c] \right) = \exp \left(\frac{i}{\hbar} S[x_c] \right)$ and the relation \\$K(x_f,t_f;x_i,t_i) = K_E(x_f,\tau_f;x_i,\tau_i)$. Furthermore we cancel the constants $\frac{m}{2\pi\hbar}$ from inside the determinants, as they both yield the same multiplicative constant $\prod_{j=1}^{\infty} \frac{m}{2\pi\hbar}$, to get
\begin{equation}
K(x_f, t_f; x_i, t_i) = \sqrt{\frac{m}{2\pi i \hbar (t_f-t_i)}} \sqrt{ \frac{ \Det \left(-\partial_{\tau}^2\right)}{\Det \left(-\partial_{\tau}^2 + \tilde{\omega}^2(\tau)) \right)}}\exp \left(\frac{i}{\hbar} S[x_c] \right)
\end{equation}
which proves equation \ref{eq:harmoscprop}.

Note the operators $-\partial_{\tau}^2$ and $-\partial_{\tau}^2+\tilde{\omega}(\tau)^2$ are positive-definite, so these determinants are well-defined. \\
\\
To get the time-independent solution \ref{eq:timeindharmoscprop}, we evaluate these determinants explicitly. 

Consider the operator $-\partial_{\tau}^2$ with boundary conditions $\tilde{y}(\tau_i) = 0 = \tilde{y}(\tau_f)$. It has eigenfunctions $\sin(\sqrt{\lambda_n} \tau)$ and $\cos(\sqrt{\lambda_n} \tau)$ with eigenvalues $\lambda_n$. From standard results of Fourier analysis, we know the sines (together with $y(\tau) = 1$) form an orthonormal basis for the Hilbert space of functions $\tilde{y}(\tau)$ on $[\tau_i,\tau_f]$ with boundary conditions $\tilde{y}(t_i) = 0 = \tilde{y}(t_f)$.  For notational clarity, define $\Delta \tau = \tau_f - \tau_i = i(t_f - t_i)$. The boundary conditions impose that
\begin{equation}
\sqrt{\lambda_n} = \frac{\pi n}{\Delta \tau}, \qquad n \in \mathbb{N}.
\end{equation}
Hence we see that
\begin{equation}
\Det (-\partial_{\tau}^2) = \prod_{n=1}^{\infty} \frac{\pi^2n^2}{(\Delta \tau)^2}.
\end{equation}
By similar considerations, we get
\begin{equation}
\Det (-\partial_{\tau}^2 + \omega_0^2) = \prod_{n=1}^{\infty} \left( \omega_0^2 + \frac{\pi^2n^2}{(\Delta \tau)^2}  \right).
\end{equation}
Therefore their ratio satisfies:
\begin{align} \label{eq:detratio}
\frac{ \Det (-\partial_{\tau}^2+\omega_0^2) }{\Det (-\partial_{\tau}^2)} &= \prod_{n=1}^{\infty} \left( \frac{\frac{\pi^2n^2}{(\Delta \tau)^2} + \omega_0^2}{\frac{\pi^2n^2}{(\Delta \tau)^2}} \right) \nn\\
&= \prod_{n=1}^{\infty} \left( 1 + \frac{\omega_0^2 (\Delta \tau)^2}{\pi^2n^2} \right) \nn\\
&= \prod_{n=1}^{\infty} \left( 1 - \frac{\omega_0^2 (t_f-t_i)^2}{\pi^2n^2} \right) \nn\\
&= \frac{\sin(\omega_0(t_f-t_i))}{\omega_0(t_f-t_i)}
\end{align}
where we used the relation $\Delta \tau = i(t_f-t_i)$ and the infinite product representation:
\begin{equation}
\frac{\sin(z)}{z} = \prod_{n=1}^{\infty} \left( 1 - \frac{z^2}{\pi^2n^2} \right) .
\end{equation}
Substituting equation \ref{eq:detratio} into equation \ref{eq:harmoscprop} gives the claimed result.
\end{proof}

\section{Schr{\"o}dinger equation from the path integral} \label{sec:schrod}
Having derived the path integral picture of quantum mechanics from the Schr{\"o}dinger formulation, we will now show the correspondence goes both ways. This shows that quantum mechanics can actually be defined in terms of the path integral, rather than by imposing the initially rather mysterious Schr{\"o}dinger equation. \\
\\
We follow the derivations in \cite{Blau} and \cite{hitoshipath}, the central idea being variations of the paths.\\
\\
We assume definition \ref{def:pathintegral} of the path integral and propagator. Consider a variation of the path:
\begin{equation} \label{eq:xvar1}
x(t) \rightarrow x(t) + \delta x(t).
\end{equation}
Such a variation leaves the path integral unchanged by invariance of the integration measure: $\mathcal{D}[x(t) + \delta x(t)] = \mathcal{D}[x(t)]$.\\
\\
We shall need a lemma from Part A Calculus of Variations:
\begin{lemma} \label{lemma:actionvar}
For a variation of the path $x(t)$ as in equation \ref{eq:xvar1}, the variation of $S$ is:
\begin{equation} \label{eq:actionvar}
\delta S[x(t)]  = \left[ \frac{\partial L}{\partial \dot{x}} \delta x(t) \right]^{t_f}_{t_i} + \int_{t_i}^{t_f} \dt\left( \frac{\partial L}{\partial x} - \ddt \frac{\partial L}{\partial \dot{x}} \right) \delta{x} .
\end{equation}
\end{lemma}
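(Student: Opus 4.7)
The plan is to prove the formula by a direct first-order variation of the action functional, followed by a single integration by parts. Since $S[x(t)] = \int_{t_i}^{t_f} dt \, L(t, x(t), \dot{x}(t))$, I would form the difference $\delta S = S[x + \delta x] - S[x]$ and Taylor expand the Lagrangian to first order in the small quantities $\delta x$ and $\delta \dot{x} = \frac{d}{dt}\delta x$, discarding terms of order $(\delta x)^2$ and higher. This gives
\begin{equation}
\delta S = \int_{t_i}^{t_f} dt \left( \frac{\partial L}{\partial x} \delta x + \frac{\partial L}{\partial \dot{x}} \delta \dot{x} \right),
\end{equation}
where partial derivatives of $L$ are evaluated along the unperturbed path $x(t)$.

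Next I would handle the $\delta \dot{x}$ term by integration by parts, using $\delta \dot{x} = \frac{d}{dt}(\delta x)$. This yields
\begin{equation}
\int_{t_i}^{t_f} dt \, \frac{\partial L}{\partial \dot{x}} \frac{d}{dt}(\delta x) = \left[\frac{\partial L}{\partial \dot{x}} \delta x \right]^{t_f}_{t_i} - \int_{t_i}^{t_f} dt \, \left(\frac{d}{dt}\frac{\partial L}{\partial \dot{x}}\right) \delta x.
\end{equation}
Combining this with the $\frac{\partial L}{\partial x}\delta x$ term gives exactly the stated expression, with the boundary contribution sitting outside the integral and the Euler--Lagrange combination $\frac{\partial L}{\partial x} - \frac{d}{dt}\frac{\partial L}{\partial \dot{x}}$ multiplying $\delta x$ inside.

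There is no genuine obstacle here; the only thing to be careful about is that the variation $\delta x(t)$ is \emph{not} assumed to vanish at the endpoints (unlike in the usual derivation of the Euler--Lagrange equations), which is precisely why the boundary term must be retained. This will matter in the subsequent application to the path integral, where $\delta x(t_i)$ and $\delta x(t_f)$ are allowed to be nonzero and the boundary term encodes the canonical momentum conjugate to $x$.
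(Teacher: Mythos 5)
Your proof is correct and follows the same route as the paper's: a first-order Taylor expansion of $L$ in $\delta x$ and $\delta\dot{x}$ followed by a single integration by parts, retaining the boundary term since $\delta x$ is not assumed to vanish at the endpoints. No gaps.
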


\begin{proof}
We use the definition $S = \int \dt \,L(t, x(t),\dot{x}(t))$ and the fact that $L$ does not depend explicitly on time:
\begin{equation} 
\delta S[x(t)] = \int \dt \, \delta\left( L(x(t),\dot{x}(t))\right) = \int_{t_i}^{t_f}\dt \left( \frac{\partial L}{\partial x}\delta x + \frac{\partial L}{\partial \dot{x}} \delta \dot{x} \right) .
\end{equation}
The result now follows by integrating by parts.
\end{proof}

Some important results follow.

\begin{propn}
The following relation, which is a path integral version of Ehrenfest's theorem, holds: \cite{hitoshipath}
\begin{equation} \label{eq:ehrenfest}
\int \pathmeasure \left( \frac{\partial L}{\partial x} - \ddt \frac{\partial L}{\partial \dot{x}} \right) \exp\left(\frac{i}{\hbar} S[x(t)] \right) = 0.
\end{equation}
Furthermore  
\begin{equation} \label{eq:momentum}
\frac{\partial}{\partial x_f} \int \pathmeasure \exp\left(\frac{i}{\hbar} S[x(t)] \right) = \frac{i}{\hbar} p_f\int \pathmeasure  \exp\left(\frac{i}{\hbar} S[x(t)] \right)
\end{equation}
where $p_f$ is defined by $p_f = p(t_f) = \frac{\partial L}{\partial \dot{x}}(t_f)$.
\end{propn}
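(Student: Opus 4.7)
The plan is to derive both identities by exploiting invariance of the formal path integration measure $\mathcal{D}[x(t)]$ under translations $x(t) \to x(t) + \delta x(t)$, combined with Lemma~\ref{lemma:actionvar} for the variation of the action. The two parts differ only in the boundary behaviour of $\delta x$ at the endpoints.

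For equation \ref{eq:ehrenfest}, I would first choose a variation $\delta x(t)$ that vanishes at both $t_i$ and $t_f$. Since the path integral's endpoints $x_i,x_f$ are thereby preserved, invariance of the measure gives
\begin{equation}
\int \pathmeasure \exp\left(\tfrac{i}{\hbar} S[x(t)]\right) = \int \pathmeasure \exp\left(\tfrac{i}{\hbar} S[x(t) + \delta x(t)]\right).
\end{equation}
The boundary term in Lemma~\ref{lemma:actionvar} vanishes by the choice of $\delta x$. Taylor-expanding the right-hand side to first order in $\delta x$, subtracting, and dividing by the arbitrary infinitesimal parameter leaves
\begin{equation}
0 = \frac{i}{\hbar}\int \pathmeasure \exp\left(\tfrac{i}{\hbar} S[x(t)]\right) \int_{t_i}^{t_f} \dt \left( \frac{\partial L}{\partial x} - \ddt \frac{\partial L}{\partial \dot{x}} \right) \delta x(t).
\end{equation}
Since $\delta x(t)$ is arbitrary at each interior time, the fundamental lemma of the calculus of variations lets me strip off $\delta x(t)$ and the time integral to obtain equation \ref{eq:ehrenfest} pointwise in $t$.

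For equation \ref{eq:momentum}, I would instead choose a variation with $\delta x(t_i) = 0$ but $\delta x(t_f) = \delta x_f \neq 0$ (any smooth interpolation suffices; the specific profile will drop out). The shift $x \to x + \delta x$ now identifies paths ending at $x_f$ with paths ending at $x_f + \delta x_f$, so invariance of the measure gives
\begin{equation}
K(x_f + \delta x_f, t_f; x_i, t_i) = \int_{x(t_f)=x_f} \pathmeasure \exp\left(\tfrac{i}{\hbar}(S + \delta S)[x(t)]\right).
\end{equation}
By Lemma~\ref{lemma:actionvar}, $\delta S$ now carries the non-vanishing boundary piece $\frac{\partial L}{\partial \dot{x}}(t_f)\,\delta x_f$ on top of the same bulk integral as before. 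To linear order in $\delta x_f$ the bulk piece is killed by equation \ref{eq:ehrenfest}, so expanding the exponential, subtracting $K(x_f, t_f; x_i, t_i)$, dividing by $\delta x_f$ and letting it tend to zero yields equation \ref{eq:momentum}.

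The genuine subtleties are not computational but interpretive. First, invariance of the formal measure under translations and the interchange of Taylor expansion with the formal sum over paths must be taken as working postulates at this level of rigour, since $\mathcal{D}[x(t)]$ is not a true measure. Second, the symbol $p_f$ on the right-hand side of equation \ref{eq:momentum} should be read as shorthand for the insertion $\frac{\partial L}{\partial \dot{x}}(t_f)$ evaluated on each path inside the integral; writing it outside reflects that this boundary-time insertion plays the role of the final momentum, in line with the classical relation $p = \partial L/\partial \dot{x}$.
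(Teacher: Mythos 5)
Your proposal is correct and follows essentially the same route as the paper: invariance of the (formal) measure under $x \to x + \delta x$, Lemma \ref{lemma:actionvar} for $\delta S$, endpoint-fixed variations for equation \ref{eq:ehrenfest} and a variation with $\delta x(t_f) \neq 0$ for equation \ref{eq:momentum}. Your explicit remarks on stripping off $\delta x$ via the fundamental lemma and on reading $p_f$ as an insertion inside the integral are slightly more careful statements of steps the paper performs implicitly.
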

\begin{proof}
Let us consider variations keeping the endpoints $x_i$ and $x_f$ fixed: 
\begin{equation} \label{eq:xvar}
\delta x(t_i) = 0 = \delta x(t_f).
\end{equation}

Using lemma \ref{lemma:actionvar}, we see that
\begin{equation} \label{eq:actionvar1}
\delta S[x(t)] = \int_{t_i}^{t_f} \dt\left( \frac{\partial L}{\partial x} - \ddt \frac{\partial L}{\partial \dot{x}} \right) \delta{x} .
\end{equation}

Then from our definition of the propagator:
\begin{align}
\delta K(x_f,t_f;x_i,t_i) &= \mathcal{N} \int \pathmeasure \delta \left( \exp\left(\frac{i}{\hbar} S[x(t)] \right) \right)\nn \\
&= \mathcal{N} \int \pathmeasure \frac{i}{\hbar}\delta S[x(t)] \exp\left(\frac{i}{\hbar} S[x(t)] \right).
\end{align}

This quantity vanishes, since the propagator only depends on the beginning and end of the path and these are fixed by equation \ref{eq:xvar}. So by \ref{eq:actionvar1}:
\begin{equation}
\int \pathmeasure \int_{t_i}^{t_f} \dt\left( \frac{\partial L}{\partial x} - \ddt \frac{\partial L}{\partial \dot{x}} \right) \delta{x}(t)  \exp\left(\frac{i}{\hbar} S[x(t)] \right) = 0.
\end{equation}

As this must hold for all variations $\delta x(t)$, we get equation \ref{eq:ehrenfest}.\\
\\
To get equation \ref{eq:momentum}, we again perform a variation of $x$, but only keep $x_i$ fixed:
\begin{equation}
\delta x(t_i) = 0, \qquad \delta x(t_f) \neq 0.
\end{equation}

Most of our previous argument goes through; however by lemma \ref{lemma:actionvar}:
\begin{equation}
\delta S[x(t)] = p(t_f)\delta x(t_f) + \int_{t_i}^{t_f} \dt\left( \frac{\partial L}{\partial x} - \ddt \frac{\partial L}{\partial \dot{x}} \right) \delta{x} .
\end{equation}

We can ignore the second term, as by equation \ref{eq:ehrenfest} this vanishes in the path integral. Therefore
\begin{align}
\delta \int \pathmeasure \exp\left(\frac{i}{\hbar} S[x(t)] \right) &= \int \pathmeasure \frac{i}{\hbar}p(t_f) \delta x(t_f) \exp\left(\frac{i}{\hbar} S[x(t)] \right) \nn \\
&= \frac{i}{\hbar} p(t_f) \delta x(t_f) \int \pathmeasure \exp\left(\frac{i}{\hbar} S[x(t)] \right)
\end{align}
from which we deduce equation \ref{eq:momentum}.\\
\end{proof}

We see that this is what motivates the definition $p = -i\hbar \frac{\partial}{\partial x}$ in quantum mechanics. Note the path integral and the propagator only satisfy this equation for $x(t_f)$, not for $x(t_i)$.\\
\\
Given that we got interesting results by taking a non-zero variation of $x_f$, we should consider non-zero variations of $t_f$. This is exactly where Schr{\"o}dinger's equation comes from.

First we define the wavefunction in terms of the propagator:

\begin{defn} \textbf{(Wavefunction)}
Given the propagator $K(x_f,t_f;x_i,t_i)$ and an initial wavefunction $\psi(x_i,t_i) = f(x_i)$, define the wavefunction $\psi(x_f,t_f)$ at time $t_f > t_i$ by: \cite{hitoshipath}
\begin{equation}
\psi(x_f,t_f) = \int \dd x_i\, K(x_f,t_f;x_i,t_i) f(x_i)
\end{equation}
\end{defn}

We now prove Schr{\"o}dinger's equation.

\begin{thm} \label{thm:schrod} \textbf{(Schr{\"o}dinger's equation)}\\
The propagator $K(x_f,t_f;x_i,t_i)$ satisfies the following differential equation: \cite{hitoshipath}
\begin{equation} \label{eq:schrodingerkernel}
\left(i\hbar\frac{\partial}{\partial t_f} - H(t_f)\right) K(x_f,t_f;x_i,t_i) = 0
\end{equation}
Furthermore, the wavefunction $\psi(x,t)$ satisfies the same equation:
\begin{equation}
\boxed{ \left(i\hbar\frac{\partial}{\partial t} - H(t)\right) \psi(x,t) = 0}
\end{equation}
\end{thm}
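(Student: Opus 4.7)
The plan is to follow the author's hint and perform a variation of the final time $t_f \to t_f + \delta t_f$, analogous to how variations of $x_f$ produced the momentum identity \ref{eq:momentum}. The set of paths contributing to the propagator changes from those ending at $(x_f, t_f)$ to those ending at $(x_f, t_f+\delta t_f)$, which I will implement (at a physics level) by combining the shift $\delta t_f$ with a compensating variation $\delta x(t_f) = -\dot x(t_f)\,\delta t_f$ at the endpoint of the path, while keeping $\delta x(t_i) = 0$.

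Under this combined variation, Lemma \ref{lemma:actionvar} gives
\begin{equation*}
\delta S = L(t_f)\,\delta t_f + p(t_f)\,\delta x(t_f) + \int_{t_i}^{t_f} \dt \left(\frac{\partial L}{\partial x}-\ddt\frac{\partial L}{\partial \dot x}\right)\delta x,
\end{equation*}
where the first term is the contribution from extending the integration range. The interior integral vanishes once the path integral is applied, by the Ehrenfest identity \ref{eq:ehrenfest}; and substituting the compensating variation into the boundary term reduces the whole right-hand side to $\delta S = -H(t_f)\,\delta t_f$ inside the path integral, where $H = p\dot x - L$ is the classical Hamiltonian. On the other hand, the change in the propagator itself is simply $\delta K = (\partial K/\partial t_f)\,\delta t_f$, so equating these two expressions gives
\begin{equation*}
i\hbar \frac{\partial K}{\partial t_f} = \mathcal{N}\int \pathmeasure\, H(t_f)\, \exp\!\left(\frac{i}{\hbar}S[x(t)]\right).
\end{equation*}

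To identify the right-hand side with $\hat H(t_f) K$, note that $x(t_f) = x_f$ is fixed so $V(x(t_f)) = V(x_f)$ factors out of the path integral, while equation \ref{eq:momentum} says that the insertion of $p(t_f)$ is equivalent to $-i\hbar\,\partial_{x_f}$ acting on $K$ from outside; iterating converts $p(t_f)^2$ into $-\hbar^2 \partial_{x_f}^2$, producing equation \ref{eq:schrodingerkernel}. The wavefunction equation then follows by acting with $i\hbar\partial_t - H(t)$ on $\psi(x,t) = \int \dd x_i\, K(x,t;x_i,t_i) f(x_i)$ and pulling the operator inside the $x_i$-integral by linearity.

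The main obstacle will be the iteration of the momentum identity: equation \ref{eq:momentum} is stated only for a single insertion of $p(t_f)$, and extending it to $p(t_f)^2$ requires showing that two successive $x_f$-derivatives genuinely correspond to two insertions of $p(t_f)$ in the discretised path integral --- that is, that there is no operator-ordering obstruction for this quadratic-in-$p$ Hamiltonian. A secondary technical point is the proper interpretation of the $t_f$-variation as a manipulation of the formal path-space measure; this is most cleanly justified by returning to the discretised form in Corollary \ref{corr:smalltimepropagator} and adding one infinitesimal time slice.
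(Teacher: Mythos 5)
Your proposal follows essentially the same route as the paper's proof: a variation of $t_f$ combined with the compensating endpoint shift $\delta x(t_f) = -\dot{x}(t_f)\,\delta t_f$, the Ehrenfest identity \ref{eq:ehrenfest} to discard the interior term, the resulting Hamilton--Jacobi-type relation $\delta S = -H(t_f)\,\delta t_f$ inside the path integral, and finally linearity of the $x_i$-integral for the wavefunction. If anything you are more explicit than the paper, which simply asserts that equation \ref{eq:schrodingerkernel} ``follows'' from the inserted classical $H(t_f)$ without spelling out the conversion of the $p(t_f)$ insertions into $-i\hbar\,\partial_{x_f}$ acting on $K$ via iteration of equation \ref{eq:momentum} --- precisely the step you correctly identify as the remaining technical point.
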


\begin{proof}
We consider a variation 
\begin{equation}
\delta t_i = 0, \qquad \delta t_f \neq 0.
\end{equation}

We need to be careful with our limits, as we still need the same endpoint, i.e. $\delta x(t_f) = 0$. So we have to change $x(t_f)$ to $x(t_f) - \dot{x}(t_f) \delta t_f$ \cite{hitoshiclassical}. Similarly to lemma \ref{lemma:actionvar}, we get variation of the action:
\begin{equation} \label{eq:actionvar2}
\delta S = L(t_f) \delta t_f - \dot{x} \frac{\partial L}{\partial \dot{x}} (t_f) \delta t_f = -H(t_f) \delta t_f
\end{equation}
where we recognized the Hamiltonian $H = \dot{x} \frac{\partial L}{\partial \dot{x}} - L$ and used equation \ref{eq:ehrenfest} to ignore a term that vanishes in the path integral. \\
\\
Using this on our path integral, we find that 
\begin{equation}
\frac{\partial}{\partial t_f}\int \pathmeasure \exp\left(\frac{i}{\hbar} S[x(t)] \right) = - \int \pathmeasure \frac{i}{\hbar}H(t_f) \exp\left(\frac{i}{\hbar} S[x(t)] \right)
\end{equation}
from which equation \ref{eq:schrodingerkernel} follows.\\
\\
Using our definition of the wavefunction, and that $\psi(x_i,t_i) = f(x_i)$ is independent of time, we get the final result. \\
\end{proof}

In the proof above, we found that the following relation holds in the path integral:
\begin{equation}
\frac{\partial S}{\partial t_f} + H(t_f) = 0.
\end{equation}
This is the \textbf{Hamilton-Jacobi} relation familiar from classical mechanics \cite{hitoshiclassical}. Again we see an equation from classical mechanics that is not true exactly in quantum mechanics, but holds inside the path integral.\\ 
\\
Theorem \ref{thm:schrod} shows that Schr{\"o}dinger's equation follows from the path integral definition of quantum mechanics and thus the two formulations are equivalent. In doing so, we have also seen a correspondence between relations in classical and quantum mechanics.

\section{Mathematical considerations: stationary phase and zeta-regularization} \label{sec:mathcons}

In this section we explore two important mathematical aspects of the path integral: the \emph{stationary phase approximation} and \emph{zeta-regularization}. 

\subsection{Stationary phase approximation}
The stationary phase approximation is an approximation to path integrals with $S \gg \hbar$. It further provides a ``derivation" of the principle of least action in classical mechanics. It also has connections with localization in supersymmetry, which we examine in section \ref{sec:localization}.

\subsubsection{Single-variable stationary phase}
We shall be concerned with the behaviour as $\hbar \rightarrow 0$ of the integral:
\begin{equation} \label{eq:Idefn}
I(\hbar) := \int \dx\, \exp\left(\frac{i}{\hbar} f(x)\right)
\end{equation}
for some suitably differentiable real function $f(x)$. Note that the integrand satisfies $|\exp\left(i f(x)/\hbar\right)| = 1$, so it is not Lebesgue integrable over $\mathbb{R}$. However, it exists as an improper Riemann integral\cite{Parissis}:  $\lim_{R\rightarrow\infty} \int_{-R}^R \dx\, \exp\left(if(x)/\hbar\right)$, which is how we consider it.

\begin{propn} \label{propn:statphase1}
Suppose $f(x)$ is a $C^{\infty} (\mathbb{R})$ function with a single non-degenerate stationary point $x = x_0$, i.e. $f'(x_0) = 0, f''(x_0) \neq 0$. Then as $\hbar \rightarrow 0$:
\begin{equation}
I(\hbar) = \int \dx\, \exp\left(\frac{i}{\hbar} f(x)\right)  = \sqrt{\frac{2\pi i h}{|f''(x_0)|}} + \mathcal{O} (\hbar^{3/2}).
\end{equation}
\end{propn}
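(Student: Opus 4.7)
The plan is to exploit the fact that the oscillatory integrand $\exp(if(x)/\hbar)$ oscillates arbitrarily fast as $\hbar\to 0$ away from any point where $f'(x)=0$, so that contributions away from $x_0$ cancel by destructive interference and only a shrinking neighbourhood of the stationary point survives. Since the problem is translation invariant, I would first reduce to the case $x_0=0$ by shifting the integration variable, and then pull the constant phase $\exp(if(x_0)/\hbar)$ out in front (as is standard in such expansions, even if suppressed in the stated formula).

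Concretely, I would isolate the stationary point by introducing a smooth cutoff $\chi(x)$ equal to $1$ in a small neighbourhood $U$ of $x_0$ and vanishing outside a slightly larger neighbourhood. On the complementary region $\mathrm{supp}(1-\chi)$, $f'(x)$ is bounded away from zero, so one may write
\begin{equation}
\exp\!\left(\tfrac{i}{\hbar}f(x)\right) \;=\; \frac{\hbar}{i\,f'(x)}\,\ddx\exp\!\left(\tfrac{i}{\hbar}f(x)\right),
\end{equation}
integrate by parts repeatedly against $(1-\chi)$, and conclude that this contribution is $\mathcal{O}(\hbar^{N})$ for every $N$. (This is the non-stationary phase lemma; it is the step where I would quote standard results rather than rewrite the inductive proof.) The tail at $|x|\to\infty$ is handled by the same integration-by-parts argument, assuming mild growth conditions on $f$ that make the improper Riemann integral well-defined.

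On the cutoff region near $x_0$ I would rescale by $x = x_0 + \sqrt{\hbar}\,y$. Taylor expansion gives
\begin{equation}
\tfrac{1}{\hbar}\bigl(f(x_0+\sqrt{\hbar}\,y) - f(x_0)\bigr) \;=\; \tfrac{1}{2}f''(x_0)\,y^{2} + \sqrt{\hbar}\,R(\sqrt{\hbar}\,y),
\end{equation}
where $R$ is smooth and $\mathcal{O}(y^3)$ for bounded $y$. Extracting the constant phase and the Jacobian $\sqrt{\hbar}$, and expanding $\exp(i\sqrt{\hbar}\,R)=1+i\sqrt{\hbar}\,R+\mathcal{O}(\hbar)$, the leading term becomes
\begin{equation}
\sqrt{\hbar}\,\exp\!\left(\tfrac{i}{\hbar}f(x_0)\right)\int \dy\,\exp\!\left(\tfrac{i}{2}f''(x_0)\,y^{2}\right),
\end{equation}
which by the Fresnel formula stated earlier in equation \ref{eq:Fresnel} (applied with the appropriate branch of the square root according to $\mathrm{sgn}\,f''(x_0)$) gives exactly $\sqrt{2\pi i\hbar/|f''(x_0)|}$ up to the standard phase. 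The subleading correction is odd in $y$ at order $\sqrt{\hbar}$ and integrates against the Gaussian to zero, so the next genuine correction appears at $\mathcal{O}(\hbar^{3/2})$.

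The main obstacle is the rigorous justification of interchanging $\hbar\to 0$ with the integration over $y$: the integrand is bounded in modulus but not absolutely integrable, so dominated convergence is unavailable. I would handle this by combining the integration-by-parts tail estimate above with the observation that on any \emph{compact} rescaled window $|y|\le \hbar^{-1/2}\delta$ the cubic remainder is uniformly small, and that the Fresnel integral over the complementary window $|y|>\hbar^{-1/2}\delta$ is $\mathcal{O}(\sqrt{\hbar})$ by another integration by parts. Stitching these pieces together yields the claimed asymptotic with remainder $\mathcal{O}(\hbar^{3/2})$.
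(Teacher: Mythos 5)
Your argument is correct, and it takes a genuinely more careful route than the paper's. The paper performs the rescaling $x = x_0 + \sqrt{\hbar}\,y$ globally, Taylor-expands $f$ about $x_0$ over all of $\mathbb{R}$, and then expands $\exp(ir(y)/\hbar)$ term by term, discarding the half-integer powers of $\hbar$ by parity; the justification for localizing to a neighbourhood of $x_0$ (Van der Corput's lemma, proved by a single integration by parts) appears only \emph{after} the proposition and yields only an $\mathcal{O}(\hbar)$ bound on the non-stationary contribution. You instead localize first, using a smooth cutoff and the non-stationary phase lemma with repeated integration by parts, which gives the stronger $\mathcal{O}(\hbar^{N})$ bound away from $x_0$, avoids relying on a global Taylor expansion of $f$ (which need not converge outside a neighbourhood of $x_0$), and squarely addresses the interchange-of-limits issue for the conditionally convergent Fresnel integral that the paper's term-by-term manipulation glosses over. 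You are also right to flag that the constant phase $e^{if(x_0)/\hbar}$ and the branch of the square root determined by $\mathrm{sgn}\,f''(x_0)$ are suppressed in the stated formula: as written it is only literally correct when $f(x_0)=0$ and $f''(x_0)>0$. The only cost of your route is the need for mild hypotheses on $f$ at infinity to run the integration by parts on the unbounded complement of the cutoff, which you acknowledge and which the paper's version implicitly needs anyway for the improper integral to exist.
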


\begin{proof}
We change variable: $x = x_0 + \sqrt{\hbar} y$ and Taylor expand $f$ in $y$: \cite{Blau}
\begin{align}
f(x_0 + \sqrt{\hbar}y) &= f(x_0) + \sqrt{\hbar} f'(x_0) y + \frac{1}{2!}\hbar f''(x_0)y^2 + \frac{1}{3!}\hbar^{3/2} f^{(3)}(x_0)y^3 +  \dots \nn \\
&=  f(x_0) + \frac{1}{2!}\hbar f''(x_0)y^2 + r(y)
\end{align}
where
\begin{equation}
r(y) = \hbar \sum_{n=3}^{\infty} \frac{\hbar^{n/2}}{n!}f^{(n)}(x_0)y^n.
\end{equation}

Important to note here is that $r(y)/\hbar$ is a power series in strictly positive powers of $\sqrt{\hbar}$ \cite{Blau}. Thus a Taylor expansion of the exponential $\exp(ir(y)/\hbar)$ in the integral yields:
\begin{equation}
 I(\hbar) = \sqrt{\hbar} \int \dy\, \exp\left(i \frac{1}{2} f''(x_0) y^2\right)\left(1 + \mathcal{O}(\hbar^{1/2})\right).
\end{equation}

In fact, all non-integer powers of $\hbar$ disappear as the integral of any odd power $y^n$ in this expansion vanishes (the product of $y^n$ and the exponential is an odd function):
\begin{align}
I(\hbar) &=  \sqrt{\hbar}  \int \dy\, \exp\left(i \frac{1}{2} f''(x_0) y^2\right)\left(1 + \hbar(\dots) + \hbar^2(\dots) + \dots\right)\nn \\
&= \sqrt{\frac{2\pi i h}{|f''(x_0)|}} + \mathcal{O} (\hbar^{3/2})
\end{align}
where we used our familiar Fresnel integral formula.

Hence in the limit $\hbar \rightarrow 0$, the integral satisfies:
\begin{equation}
I(\hbar) \approx \sqrt{\frac{2\pi i h}{|f''(x_0)|}}.
\end{equation}
as claimed.
\end{proof}

We took the limits of integration to be over all of $\mathbb{R}$, but in fact the leading $\mathcal{O}(\sqrt{\hbar})$ contribution is given by integration over any small neighbourhood around $x_0$. To see this, we need Van der Corput's lemma: \cite{Parissis}

\begin{lemma} \label{lemma:corput} \textbf{(Van der Corput's lemma)}
Suppose $f$ is $C^1$ function on $[a,b]$ such that $|f'(x)| \geq \gamma>0$ for all $x \in [a,b]$ and that $f'(x)$ is monotonic. Then
\begin{equation}
\left| \int_a^b \dx\, \exp\left(\frac{i}{\hbar} f(x)\right) \right| \leq C \hbar
\end{equation}
where $C$ is some constant not depending on $\hbar$, $a$ or $b$.
\end{lemma}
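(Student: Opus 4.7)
The plan is to prove this by the standard integration-by-parts trick, exploiting the fact that the lower bound $|f'| \geq \gamma$ lets us divide by $f'$ safely, and the monotonicity of $f'$ lets us control the total variation of $1/f'$ without needing $f''$ to exist pointwise.

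First I would rewrite the integrand as a derivative of the exponential. Since $f'$ never vanishes on $[a,b]$, we have the identity
\begin{equation}
\exp\!\left(\tfrac{i}{\hbar} f(x)\right) = \frac{\hbar}{i f'(x)} \cdot \frac{\dd}{\dd x}\exp\!\left(\tfrac{i}{\hbar} f(x)\right).
\end{equation}
Plugging this into $\int_a^b \dx \exp(i f(x)/\hbar)$ and integrating by parts yields a boundary contribution
\begin{equation}
\left[\frac{\hbar}{i f'(x)} \exp\!\left(\tfrac{i}{\hbar} f(x)\right)\right]_a^b
\end{equation}
together with a remainder integral involving $\frac{\dd}{\dd x}\!\left(\frac{1}{f'(x)}\right)$ times the exponential.

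Next I would bound each piece. The boundary term has modulus at most $\hbar/|f'(a)| + \hbar/|f'(b)| \leq 2\hbar/\gamma$, since each exponential has unit modulus and $|f'| \geq \gamma$ at the endpoints. For the remainder, I would bound the exponential by $1$ and estimate
\begin{equation}
\hbar \int_a^b \left|\frac{\dd}{\dd x}\!\left(\frac{1}{f'(x)}\right)\right| \dd x \,=\, \hbar \left|\frac{1}{f'(b)} - \frac{1}{f'(a)}\right| \leq \frac{2\hbar}{\gamma}.
\end{equation}
The crucial step here is the equality: because $f'$ is monotonic and nonvanishing, $1/f'$ is also monotonic, so its total variation equals $|1/f'(b) - 1/f'(a)|$. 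This is precisely where the hypothesis of monotonicity (rather than, say, a $C^2$ hypothesis) earns its keep, and it is the only slightly delicate point in the argument. Adding the two bounds gives the claim with $C = 4/\gamma$, which depends only on $\gamma$ and is independent of $a,b,\hbar$ as required.

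The main obstacle, or rather subtlety, is the need to avoid assuming $f \in C^2$: without this one cannot blindly write the remainder as $\int f''/(f')^2\,\dx$ and integrate. The remedy is to interpret $\frac{\dd}{\dd x}(1/f')$ as the derivative of a monotone function, which exists almost everywhere, is integrable, and whose integral equals the endpoint difference of $1/f'$ by the fundamental theorem of calculus for monotone functions. No other step requires cleverness.
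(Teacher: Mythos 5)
Your proposal is correct and is essentially identical to the proof in the paper: the same integration by parts against $\frac{\hbar}{i f'(x)}\frac{\dd}{\dd x}\exp(if(x)/\hbar)$, the same bound of $2\hbar/\gamma$ on the boundary terms, the same use of monotonicity of $f'$ to collapse $\int_a^b \left|\frac{\dd}{\dd x}\left(\frac{1}{f'(x)}\right)\right|\dx$ to the endpoint difference, and the same final constant $C = 4/\gamma$. Your closing remark on avoiding a $C^2$ hypothesis is a nice extra touch of care that the paper's proof glosses over (though for full rigour one would phrase that step via Riemann--Stieltjes integration by parts against the monotone function $1/f'$, since a merely monotone function need not be absolutely continuous); it does not change the argument.
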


\begin{proof}
We follow the proof in \cite{Parissis}. 

Integrate by parts:
\begin{align}
&\int_a^b \dx\, \exp\left(\frac{i}{\hbar} f(x)\right) = \int_a^b \dx\, \frac{\hbar}{i} \frac{1}{f'(x)} \ddx \exp\left(\frac{i}{\hbar} f(x)\right) \nonumber\\
&= -i\hbar \left[ \frac{\exp\left(\frac{i}{\hbar} f(b)\right)}{f'(b)} - \frac{\exp\left(\frac{i}{\hbar} f(a)\right)}{f'(a)} - \int_a^b \dx\, \exp\left(\frac{i}{\hbar} f(x)\right) \ddx\left(\frac{1}{f'(x)}\right) \right].
\end{align}
Now use the triangle inequality:
\begin{align}
&\frac{1}{\hbar} \left| \int_a^b \dx\, \exp\left(\frac{i}{\hbar} f(x)\right) \right| = \nn\\
& \leq  \left| \frac{1}{f'(b)}\right| + \left| \frac{1}{f'(a)}\right| + \left| \int_a^b \dx\, \exp\left(\frac{i}{\hbar} f(x)\right) \ddx\left(\frac{1}{f'(x)}\right)\right| \nn\\
& =  \left| \frac{1}{f'(b)}\right| + \left| \frac{1}{f'(a)}\right| + \int_a^b \dx\,\left|  \ddx\left(\frac{1}{f'(x)}\right)\right|.
\end{align}
As $f'(x)$ is monotonic, $\ddx\left(\frac{1}{f'(x)}\right)$ is of fixed sign and hence we can move the modulus out of the integral again to get the bound:
\begin{align}
\left| \int_a^b \dx\, \exp\left(\frac{i}{\hbar} f(x)\right) \right| &\leq \hbar \left( \left| \frac{1}{f'(b)}\right| + \left| \frac{1}{f'(a)}\right| + \left| \int_a^b \dx\, \ddx\left(\frac{1}{f'(x)}\right)\right| \right) \nn\\
&\leq\hbar \left( \left| \frac{1}{f'(b)}\right| + \left| \frac{1}{f'(a)}\right| + \left|  \frac{1}{f'(b)} -  \frac{1}{f'(a)}\right| \right) \nn\\
& \leq C\hbar
\end{align}
where 
\begin{equation} 
C = \frac{4}{\gamma} \geq 4 \max \left\{\left| \frac{1}{f'(b)}\right|, \left| \frac{1}{f'(a)}\right| \right\}
\end{equation}
which is indeed independent of $a$, $b$ and $\hbar$.
\end{proof}

Now let $U$ be any small interval around the critical point $x_0$ of $f$, and consider the complement $\mathbb{R}\setminus U$, which is the union of two intervals. Then by the above lemma
\begin{equation} \label{eq:corput}
\boxed{\int_{\mathbb{R}\setminus U} \dx\, \exp\left(\frac{i}{\hbar} f(x)\right) = \mathcal{O}(\hbar).}
\end{equation}
Thus it follows that
\begin{align} \label{eq:statphaselocal}
\int_U \dx \exp\left(\frac{i}{\hbar} f(x)\right) &= \int_{\mathbb{R}} \dx \exp\left(\frac{i}{\hbar} f(x)\right) - \int_{\mathbb{R}\setminus U} \dx \exp\left(\frac{i}{\hbar} f(x)\right) \nn\\
&=  \sqrt{\frac{2\pi i h}{|f''(x_0)|}} + \mathcal{O}(\hbar^{3/2}) - \mathcal{O}(\hbar) \nn\\
&=  \sqrt{\frac{2\pi i h}{|f''(x_0)|}} + \mathcal{O}(\hbar)
\end{align}
where we used proposition \ref{propn:statphase1} and equation \ref{eq:corput}. Thus the leading $\mathcal{O}(\sqrt{\hbar})$ contribution comes just from the stationary point.\\
\\
We can now prove the stationary phase approximation in generality.

\begin{thm} \textbf{(Stationary phase)}
Suppose $f(x)$ is a $C^{\infty} (\mathbb{R})$ function with non-degenerate isolated stationary points $x_1,x_2,\dots,x_n$. Then as $\hbar \rightarrow 0$, we have relation:
\begin{equation} \label{eq:stationaryphase}
\int \dx \exp\left(\frac{i}{\hbar} f(x)\right) = \sum_{i=1}^n\sqrt{\frac{2\pi i h}{|f''(x_i)|}} + \mathcal{O} (\hbar)
\end{equation}
\end{thm}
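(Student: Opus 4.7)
The plan is to reduce to the single stationary point case (Proposition~\ref{propn:statphase1} and the localization statement in equation~\ref{eq:statphaselocal}) by partitioning $\mathbb{R}$ into small neighbourhoods around each $x_i$ plus a complement on which $f$ has no critical points, and then to control the complement with Van der Corput's lemma~\ref{lemma:corput}.

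Concretely, since the $x_i$ are isolated, I would pick disjoint open intervals $U_i = (x_i - \delta, x_i + \delta)$ with $\delta$ small enough that each $U_i$ contains exactly one stationary point and the closures are pairwise disjoint. I then split
\begin{equation*}
\int_{\mathbb{R}} \dx\, \exp\!\left(\tfrac{i}{\hbar} f(x)\right) = \sum_{i=1}^{n} \int_{U_i} \dx\, \exp\!\left(\tfrac{i}{\hbar} f(x)\right) + \int_{\mathbb{R}\setminus \bigcup_i U_i} \dx\, \exp\!\left(\tfrac{i}{\hbar} f(x)\right).
\end{equation*}
Equation~\ref{eq:statphaselocal}, applied separately around each $x_i$ (this works because the derivation there only uses that $x_i$ is the unique non-degenerate critical point inside $U_i$), gives
\begin{equation*}
\int_{U_i} \dx\, \exp\!\left(\tfrac{i}{\hbar} f(x)\right) = \sqrt{\frac{2\pi i \hbar}{|f''(x_i)|}} + \mathcal{O}(\hbar),
\end{equation*}
so summing yields the claimed leading contribution.

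It remains to show that the complementary integral is $\mathcal{O}(\hbar)$. The complement $\mathbb{R}\setminus \bigcup_i U_i$ is a finite union of intervals (two of them unbounded). On each such interval $f'$ is continuous and non-vanishing, hence of constant sign and bounded away from zero on any compact sub-interval; to apply Lemma~\ref{lemma:corput} I further subdivide each interval into finitely many pieces on which $f'$ is monotonic (using that the non-degeneracy of the critical points lets us ensure $|f'|\geq \gamma>0$ near the boundaries of the $U_i$, and assuming a mild regularity/growth hypothesis at infinity so that $f'$ has only finitely many monotonicity switches). On each such piece the lemma gives a contribution bounded by $C\hbar$ with $C$ independent of $\hbar$, so the total complement integral is $\mathcal{O}(\hbar)$.

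The main obstacle is precisely this last step: Van der Corput as stated requires $f'$ to be monotonic on the interval of integration, and for a generic smooth $f$ one has to ensure only finitely many monotonicity changes of $f'$ on $\mathbb{R}\setminus \bigcup_i U_i$ (including control at $\pm\infty$). In the applications to path integrals the function $f$ is polynomial-like near infinity, so this is unproblematic; for the general statement one either tacitly assumes such behaviour or proves a slight strengthening of Lemma~\ref{lemma:corput} allowing a finite number of monotonicity intervals. Everything else is bookkeeping: combining the $n$ local expansions with the $\mathcal{O}(\hbar)$ complement error produces equation~\ref{eq:stationaryphase}.
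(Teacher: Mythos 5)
Your proof follows the paper's argument exactly: split $\mathbb{R}$ into disjoint neighbourhoods $U_i$ of the stationary points plus a complement, apply the local expansion of equation \ref{eq:statphaselocal} on each $U_i$, and use Van der Corput's lemma \ref{lemma:corput} to show the complement contributes $\mathcal{O}(\hbar)$. Your extra care about the monotonicity hypothesis in Van der Corput (and the need to subdivide the complement into finitely many monotonicity intervals) is a point the paper silently glosses over, but it does not change the route.
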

\begin{proof}
Choose small non-overlapping intervals $U_i$ such that $x_i \in U_i$ for all $i$. Then $\mathbb{R}\setminus \bigcup_i U_i$ is a union of intervals and we can apply equation \ref{eq:corput} to see that their contribution is $\mathcal{O}(\hbar)$. Thus:
\begin{align}
\int \dx \exp\left(\frac{i}{\hbar} f(x)\right) &= \sum_{i=1}^n \int_{U_i} \dx \exp\left(\frac{i}{\hbar} f(x)\right) + \int_{\mathbb{R}\setminus \bigcup_i U_i} \dx \exp\left(\frac{i}{\hbar} f(x)\right) \nn\\
&= \sum_{i=1}^n\sqrt{\frac{2\pi i h}{|f''(x_i)|}} + \mathcal{O} (\hbar)
\end{align}
where we used equation \ref{eq:statphaselocal}. 
\end{proof}

If we ignore all terms but the leading $\mathcal{O}(\sqrt{\hbar})$ contribution, we get the \emph{stationary phase approximation}: \cite{Mackenzie}
\begin{equation}
\boxed{ \int \dx \exp\left(\frac{i}{\hbar} f(x)\right) \approx \sum_{\substack{x_i \text{ stationary} \\ \text{ points of }f}}\sqrt{\frac{2\pi i h}{|f''(x_i)|}} }
\end{equation}

\subsubsection{Principle of least action}
We return to the path integral and consider the regime of classical mechanics, i.e. the limit $\hbar \rightarrow 0$, or more precisely: the situation $|S| \gg \hbar$. By analogy with the stationary phase approximation for a function $f: \mathbb{R} \rightarrow \mathbb{R}$, the path integral
\begin{equation}
\lim_{\hbar\rightarrow 0 } \int \pathmeasure \exp\left(\frac{i}{\hbar} S[x(t)]\right) 
\end{equation}
is completely determined by paths extremizing the action, i.e. paths $x_c(t)$ satisfying \cite{Mackenzie}
\begin{equation} \label{eq:leastaction}
\boxed{ \left. \frac{\delta S[x(t)]}{\delta x(t)} \right|_{x(t) = x_c(t)} = 0}
\end{equation}
which is exactly the classical equation of motion! This shows the \emph{correspondence principle}: quantum mechanics reproduces classical mechanics in the appropriate limit: $\hbar \rightarrow 0$.

\subsection{Zeta-regularization} \label{sec:zeta}
In section \ref{sec:freeparticle} we defined the determinant of suitable infinite-dimensional operators as the product of its eigenvalues, generally giving an infinite answer.

Here we shall provide a common, but powerful alternative: the \textbf{zeta-regularized determinant}. The main idea is to use analytic continuation of certain meromorphic functions to make the infinite eigenvalue product finite.

\begin{defn}\textbf{(Zeta-regularized determinant)} \label{defn:zetadet}
As in definition \ref{defn:det}, let $\mathcal{H}$ be a separable Hilbert space and $A: \mathcal{H} \rightarrow \mathcal{H}$ an elliptic, self-adjoint linear operator with a complete set of eigenvectors and associated eigenvalues $\{ \lambda_n \}_{n= 1}^{\infty}$. Define its spectral zeta function $\zeta_A (s)$ by \cite{Barone, Blau}
\begin{equation} \label{eq:spectralzeta}
\zeta_A (s) = \sum_{n=1}^{\infty} \frac{1}{\lambda_n^s}
\end{equation}
for large enough $\mathrm{Re} (s)$ such that this converges. Use analytic continuation to extend this function to the complex plane. By a standard result in functional analysis this function is meromorphic and differentiable at $0$. 
Then define the (zeta-regularized) determinant of $A$ as:
\begin{equation}
\det A = \exp (- \zeta_A'(0) ).
\end{equation}
\end{defn}

To see what motivates this definition, consider differentiating equation \ref{eq:spectralzeta} term-by-term to get the formal expression:
\begin{equation} \label{eq:zetaderiv}
\zeta_A'(s) = \sum_{n=1}^{\infty} -\frac{\log \lambda_n }{\lambda_n^s}
\end{equation}
so that (again formally)
\begin{equation}
\exp ( -\zeta_A'(0) ) = \exp \left( \sum_{n=1}^{\infty} \log \lambda_n \right) = \prod_{n= 1}^{\infty} \lambda_n .
\end{equation}

Of course the difference here is that equation \ref{eq:zetaderiv} only rigorously holds for large enough $\text{Re} (s)$. Therefore the identity $\zeta_A'(0) = - \sum_{n=1}^{\infty} \log \lambda_n$ is merely a formal expression resulting from our interpretation of $\zeta_A'(0)$ as
\begin{equation}
\zeta_A'(0) = \sum_{n=1}^{\infty} \frac{\mathrm{d}}{\mathrm{d}s} \left. \left( \frac{1}{\lambda_n^s} \right) \right|_{s=0}.
\end{equation}
Similar ideas applied to the Riemann zeta-function gives identities such as $\sum_{n=1}^{\infty} n = -1/12$. As absurd as this may seem, these ideas will be useful to us: we are interested in ratios of determinants and these will be the same for the regularized and non-regularized versions.\\
\\
We shall consider the regularized determinants for the free particle and harmonic oscillator and compare the results for the propagator as given in sections \ref{sec:freeparticle} and \ref{sec:harmosc}. Naturally the free particle result will be the same, as we can simply redefine the normalization $\mathcal{N}$. The harmonic oscillator however, will be a non-trivial case and we shall get the same final result.

\begin{ex}\textbf{(Free particle determinant)}
Consider the operator $A = - \partial_{\tau}^2$ on the space of functions $y:[\tau_i,\tau_f] \rightarrow \mathbb{R}$ satisfying $y(\tau_i) = 0 = y(\tau_f)$. Its regularized determinant satisfies \cite{Blau}
\begin{equation}
\det A = 2(\tau_f - \tau_i).
\end{equation}
\end{ex}
\begin{proof}
Define $\Delta\tau = \tau_f - \tau_i$.  Recall from example \ref{ex:harmosc} that the operator $A$ has eigenvalues $\lambda_n = \frac{\pi^2 n^2}{(\Delta\tau)^2}$. We calculate the spectral zeta-function:
\begin{align}
\zeta_A(s) &= \sum_{n=1}^{\infty} \left( \frac{\pi n}{\Delta\tau} \right)^{-2s} \nn\\
&= \left( \frac{\Delta\tau}{\pi}\right)^{2s} \sum_{n=1}^{\infty} n^{-2s} \nn\\
&= \left( \frac{\Delta\tau}{\pi}\right)^{2s} \zeta(2s)
\end{align}
where $\zeta(s) = \sum_{n=1}^{\infty} \frac{1}{n^s}$ is the standard Riemann zeta-function. \\
\\
Thus the derivative satisfies
\begin{align}
\zeta'_A(0) &= \left. 2\left(\frac{\Delta\tau}{\pi}\right)^{2s} \left[ \log \left(\frac{\Delta\tau}{\pi} \right)\zeta(2s) + \zeta'(2s) \right] \right|_{s=0} \nn\\
&= 2 \left[ - \frac{1}{2}\log \left(\frac{\Delta\tau}{\pi} \right) - \frac{1}{2} \log(2\pi) \right] \nn\\
&= -\log\left(2\Delta\tau\right)
\end{align}
where we made use of the well-known identities: $\zeta(0) = -\frac{1}{2}, \, \zeta'(0) = -\frac{1}{2} \log(2\pi)$.\\
\\
Using this in the definition of the regularized determinant, we find:
\begin{equation}
\det A = \exp \left( \log(2\Delta\tau) \right) = 2(\tau_f-\tau_i).
\end{equation}
\end{proof}

The example above describes the general strategy when evaluating spectral zeta-functions: try to write it in terms of well-known functions for which you know relevant values and use them to evaluate the derivative at $0$. We were very fortunate in this example that we were able to express it in a particularly simple form; generally we will not be so lucky.\\
\\
Now let us analyse a more complicated system: the harmonic oscillator.

\begin{ex} \textbf{(Harmonic oscillator determinant)} Consider the operator $A_{\omega} =  -\partial_{\tau}^2 + \omega^2 $  (with $\omega$ a constant) on the space of functions $y:[\tau_i,\tau_f] \rightarrow \mathbb{R}$ satisfying $y(\tau_i) = 0 = y(\tau_f)$. Its regularized determinant satisfies \cite{Barone}
\begin{equation}
\det A_{\omega} = 2\frac{\sinh(\omega(\tau_f - \tau_i))}{\omega} = 2i\frac{\sin(\omega(t_f - t_i))}{\omega} .
\end{equation}
Note this reduces to the free particle determinant in the limit $\omega\rightarrow 0$.
\end{ex}
\begin{proof}
We follow the proof in \cite{Barone}. We will use without proof their expression for the spectral zeta-function as a sum of elementary functions.\\
\\
Note $A_{\omega}$ is positive-definite, so its determinant is well-defined. As before, define $\Delta \tau = \tau_f - \tau_i$.\\
\\
Recall that the eigenvalues of $-\partial_{\tau}^2$ are $\lambda_n = \frac{n^2 \pi^2}{(\tau_f-\tau_i)^2}$. Thus the spectral zeta-function of $\tilde{A}_{\omega}$ is 
\begin{equation}
\zeta_{\tilde{A}_{\omega}}(s) = \left( \frac{\Delta\tau}{\pi}\right)^{2s} \sum_{n=1}^{\infty} (n^2+\nu^2) ^{-s} 
\end{equation}
where we define $\nu = \omega \Delta\tau/\pi$. This is the so-called Epstein zeta-function which gives us the expression
\begin{equation} \label{eq:spectralharmosc1}
\zeta_{\tilde{A}_{\omega}}(s) = -\frac{1}{2} \left(\frac{\Delta\tau}{\pi\nu} \right)^{2s} + \frac{F(s)}{\Gamma(s)}
\end{equation}
where $\Gamma(s)$ is the familiar gamma function and $F(s)$ is a function expressed in terms of gamma functions and a modified Bessel function of the second kind. See \cite{Barone} for details. Crucial here is that $F(s)$ is regular at $s=0$ and that $F(0) = - \pi\nu + \sum_{n=1}^{\infty} \frac{\exp(-2\pi n\nu)}{n}$.

Taking the derivative of equation \ref{eq:spectralharmosc1} yields:
\begin{equation}
\zeta'_{\tilde{A}_{\omega}}(0) = -\log \left( \frac{\Delta\tau}{\pi\nu} \right) + \lim_{s\rightarrow 0} \left( -\frac{\Gamma'(s)}{\Gamma^2(s)} F(s) + \frac{F'(s)}{\Gamma(s)} \right).
\end{equation}
Now use the expression $\Gamma(s) \approx 1/s$ as $s \rightarrow 0$ to find that:
\begin{equation} \label{eq:spectralharmosc2}
\zeta'_{\tilde{A}_{\omega}}(0) = -\log \left( \frac{\Delta\tau}{\pi\nu} \right) + F(0) = -\log \left( \frac{\Delta\tau}{\pi\nu} \right) - \pi\nu + \sum_{n=1}^{\infty} \frac{\exp(-2\pi n\nu)}{n}
\end{equation}
where we eliminated a term by using regularity of $F$ at $0$ and our expression for $\Gamma(s)$.

To evaluate the infinite sum we take a derivative:
\begin{align} 
-\frac{1}{2\pi} \frac{\partial}{\partial\nu} \sum_{n=1}^{\infty} \frac{\exp(-2\pi n\nu)}{n} =  \sum_{n=1}^{\infty} \exp(-2\pi n\nu) = \frac{e^{-2\pi\nu}}{1-e^{-2\pi\nu}}
\end{align} 
Thus, up to an additive constant $C$, we find our sum by integrating:
\begin{align}\label{eq:harmoscinfsum}
\sum_{n=1}^{\infty} \frac{\exp(-2\pi n\nu)}{n} = -2\pi \int \mathrm{d}\nu\, \frac{e^{-2\pi\nu}}{1-e^{-2\pi\nu}} &= C -  \log\left(1-e^{-2\pi\nu}\right) \nn\\
&= C -\log\left(e^{-\pi\nu} (e^{\pi\nu}-e^{-\pi\nu})\right) \nn\\
&= C +\pi\nu - \log(2\sinh(\pi\nu)).
\end{align}
Using now that $\lim_{\nu\rightarrow \infty} \sum_{n=1}^{\infty} \frac{\exp(-2\pi n\nu)}{n} = 0$ and that $\lim_{\nu\rightarrow \infty} \log\left(1-e^{-2\pi\nu}\right) = 0$, we find that $C=0$.\\
\\
Upon substituting equation \ref{eq:harmoscinfsum} into \ref{eq:spectralharmosc2}, we find:
\begin{equation}
- \zeta'_{\tilde{A}_{\omega}}(0) = \log \left( \frac{\Delta\tau}{\pi\nu} \right) + \log(2\sinh(\pi\nu)) = \log \left( \frac{2 \sinh(\omega\Delta\tau)}{\omega} \right).
\end{equation} 
Now use $\Delta\tau = i(t_f-t_i)$, and that $\sinh(ix) = \sin(x)$, to find that
\begin{equation}
- \zeta'_{\tilde{A}_{\omega}}(0) = - \log \left( \frac{2 \sin(\omega(t_f-t_i))}{\omega} \right)
\end{equation} 
as claimed.
\end{proof}
Therefore the ratio
\begin{equation}
\frac{\det (-\partial_{\tau}^2)}{\det (-\partial_{\tau}^2 + \omega^2)} = \frac{\omega(\tau_f-\tau_i)}{\sinh(\omega(\tau_f-\tau_i))} = \frac{\omega(t_f-t_i)}{\sin(\omega(t_f-t_i))}
\end{equation}
is the same as for the non-regularized determinants. When we substitute this into the determinant expression for the harmonic oscillator propagator (equation \ref{eq:harmoscprop}), we see that the physically relevant quantity is indeed unchanged.

\chapter{Mathematical preliminaries for supersymmetry}\label{chap:mathprelims}
In the first chapter, we discussed the path integral in quantum mechanics. In the next chapter, we shall consider supersymmetric quantum mechanics and shall generalise this to take place on arbitrary Riemannian manifolds. To give the necessary background, and to set the notation, we shall discuss Grassmann (fermionic) variables and differential geometry.

\section{Grassmann variables}
For the purposes of describing fermions in supersymmetry, we will use so-called \emph{Grassmann} variables.

\begin{defn}\textbf{(Grassmann variables)}
Grassmann variables are an associative, anticommutative algebra, with the following properties (for Grassmann variables $\psi^a, \psi^b, \psi$ and real variable $X$): \cite{Mirrorsymmetry, ooguri}
\begin{itemize}
	\item Anticommutativity: $\psi^a\psi^b = -\psi^b \psi^a$.
	\item Commutativity with real numbers: $\psi X = X\psi$.
	\item Integration: 
	\begin{equation} \label{eq:grassmannint}
	\int \dpsi = 0, \qquad \int \psi\,\dpsi = 1.
	\end{equation}
\end{itemize}
For multiple Grassmann variables $\psi_1, \dots, \psi_n$ we use the convention:
\begin{equation} \label{eq:multgrassint}
\int \psi_1 \dots \psi_n \dd \psi_1 \dots \dd \psi_n = 1.
\end{equation}
\end{defn}

The correspondence between (anti-)commutation relations above with (anti-)commutation relations of creation operators are why these are sometimes called \emph{fermionic} variables and real variables are called \emph{bosonic} variables.

Note that anti-commutativity implies that $\psi^2 = 0$ for any Grassmann variable. Hence the most general analytic function of a single Grassmann variable is $f(\psi) = a + b\psi$ for $a,b\in\mathbb{R}$.

Finally, we consider Grassmann integrals.

\begin{propn}
Let $f(\psi_1,\dots,\psi_n)$ be an analytic function of Grassmann variables $\psi^1, \dots, \psi^n$ with power series expansion
\begin{equation}
f(\psi_1,\dots,\psi_n) = \sum c_{j_1,\dots,j_n} (\psi^1)^{j_1} \dots (\psi^n)^{j_n}.
\end{equation}

Then
\begin{equation}
\int \dpsi^1 \dots \dpsi^n f(\psi_1,\dots,\psi_n) = c_{1,\dots,1}.
\end{equation}

\end{propn}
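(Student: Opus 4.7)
The plan is to reduce to a finite sum using nilpotency, kill off all but one surviving term by $\int\dpsi = 0$, and then apply the multi-variable convention \eqref{eq:multgrassint} to finish. First I would note that since $\psi^i\psi^i = -\psi^i\psi^i$ by anticommutativity, we have $(\psi^i)^2 = 0$ for each $i$. Hence every monomial $(\psi^1)^{j_1}\cdots(\psi^n)^{j_n}$ with some $j_i \geq 2$ is identically zero, and the power series actually truncates to a finite sum indexed by $(j_1,\ldots,j_n) \in \{0,1\}^n$, i.e.\ by subsets of $\{1,\ldots,n\}$.

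Next I would use linearity of the integral to distribute the integration over this finite sum. For each subset term in which at least one $j_i = 0$, the variable $\psi^i$ is absent from the integrand; pulling the $\dpsi^i$ integration past the other Grassmann differentials and past the monomial (each move flipping at most a sign) produces a factor $\int \dpsi^i \cdot 1 = 0$ by \eqref{eq:grassmannint}. So every term except possibly the one with $(j_1,\ldots,j_n) = (1,\ldots,1)$ integrates to zero, and all the potentially awkward signs get annihilated.

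Finally I would apply the convention \eqref{eq:multgrassint}, which gives directly
\begin{equation}
\int \dpsi^1 \cdots \dpsi^n\, c_{1,\ldots,1}\, \psi^1 \cdots \psi^n \;=\; c_{1,\ldots,1},
\end{equation}
yielding the claim. The only delicate point is bookkeeping of signs when shuffling a $\dpsi^i$ across other Grassmann objects to isolate the vanishing factor, but because those manipulations end in multiplication by zero, the signs are immaterial; the sole non-vanishing contribution is already in the canonical order matching \eqref{eq:multgrassint}, so no extra sign appears in the final answer.
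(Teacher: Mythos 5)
Your proposal is correct and follows essentially the same route as the paper's proof: truncate the series using $(\psi^i)^2=0$, distribute the integral over the finite sum, kill every term with a missing variable via $\int\dpsi = 0$, and evaluate the surviving $(1,\dots,1)$ term by the convention \ref{eq:multgrassint}. Your extra remark that the sign bookkeeping is immaterial because those terms vanish is a sensible clarification but does not change the argument.
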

\begin{proof}
As $(\psi^k)^{j_k} = 0$ for $j_k>1$, the sum is finite. Thus we can exchange sum and integral:
\begin{align}
\int \dpsi^1 \dots \dpsi^n f(\psi_1,\dots,\psi_n) &= \int \dpsi^1 \dots \dpsi^n \sum c_{j_1,\dots,j_n} (\psi^1)^{j_1} \dots (\psi^n)^{j_n} \nn\\
&= \sum_{j_k \in\{0,1\}} c_{j_1,\dots,j_n} \int \dpsi^1 \dots \dpsi^n   (\psi^1)^{j_1} \dots (\psi^n)^{j_n}.
\end{align}
Now use equation \ref{eq:grassmannint} to see that this final integral vanishes unless $j_k = 1$ for all $k$, when it equals $1$ by equation \ref{eq:multgrassint}. Hence:
\begin{equation}
\int \dpsi^1 \dots \dpsi^n f(\psi_1,\dots,\psi_n) = c_{1,\dots,1}.
\end{equation}
\end{proof}

\section{Differential geometry} \label{sec:diffgeom}
In this section we give an overview of basic notions of differential geometry that will be relevant when discussing SUSY QM on manifolds. We will take a ``physicist's approach", in which we state most results without proof, but will provide examples to explain the ideas.

There are many excellent books discussing the topic, with slightly different approaches. We shall roughly follow \cite{diffgeom}, \cite[Chapter~1]{Mirrorsymmetry} and \cite[Chapter~$5-7$]{Nakahara}.

\subsection{Manifolds}
Let us first define a smooth manifold. One should think of this as a space that looks locally Euclidean. 

\begin{defn} \textbf{(Manifold)}
A topological space $M$ is a smooth n-dimensional (real) manifold if \cite{Nakahara}
\begin{itemize}
\item There is a set of pairs $\{(U_{\alpha}, \phi_{\alpha}) \}$ where $\{ U_{\alpha} \}$ is an open covering of $M$ and each $\phi_{\alpha}$ is a homeomorphism $\phi_{\alpha} : U_{\alpha} \rightarrow V_{\alpha}$ onto an open subset $V_{\alpha}$ of $\mathbb{R}^n$.
\item If for any $\alpha$ and $\beta$:  $ U_{\alpha} \cap U_{\beta} \neq \emptyset$, then the transition function $\phi_{\alpha} \circ \phi^{-1}_{\beta} : V_{\beta} \rightarrow V_{\alpha}$ is smooth, i.e. infinitely differentiable.
\end{itemize}
\end{defn}
We call a pair $(U_{\alpha}, \phi_{\alpha} )$ a \emph{chart} and the collection $\{(U_{\alpha}, \phi_{\alpha} )\}$ an \emph{atlas}. We call $\phi_{\alpha}$ a coordinate function or \emph{coordinates}. The function $\phi_{\alpha}$ is represented as a real $n$-vector: $( x^1, \dots, x^n) \in \mathbb{R}^n$. By a slight abuse of notation we also call these $x^i$ coordinates.\\
\\
While we have defined the manifold by referring to a specific atlas, there are many different possible atlases and we think of the manifold as existing independently of the choice of atlas. As a useful analogy, one might consider vector spaces existing independently of a choice of basis, even though they can be defined in terms of them. \\
\\
In the following, by ``manifold" we shall mean a smooth real manifold, unless explicitly stated otherwise.

\begin{ex}\label{ex:sphere}
The unit $n$-sphere $S^2$ defined by $\left\{ (x_1,\dots,x_{n+1}) \in \mathbb{R}^n : \sum_{i=1}^{n+1} x_i^2 = 1 \right\}$ with its induced topology is an $n$-dimensional manifold.
\end{ex}
\begin{proof}
Stereographic projection from two poles yields two charts that form an atlas.
\end{proof}

Given two manifolds $M$ and $N$, we can define their \emph{product manifold}.
\begin{defn}\textbf{(Product manifold)}
Let $M$ be an m-dimensional manifold with atlas $\{( U_{\alpha}, \phi_{\alpha}) \} $ and $N$ an n-dimensional one with atlas $\{( U'_{\beta}, \phi'_{\beta}) \} $. Define the product manifold $M\times N$ to be the topological space $M\times N$ with the product topology and the atlas $\{ ((U_{\alpha}\times U'_{\beta}), (\phi_{\alpha},\phi'_{\beta}))\}$. 
\end{defn}
\vspace*{1pt}
\begin{ex}
The torus $T^2$ is the product manifold $S^1 \times S^1$.
\end{ex}

We now define a \emph{fibre bundle}. Intuitively, this is a manifold $B$, the base space, over which at each point $x\in B$, there is another manifold $F_x$, called the fibre at $x$. As an analogy, consider a hairbrush, where the handle forms the base space and the bristles form the fibres. 

The important point is that locally the bundle looks like a product manifold $B \times F$.

\begin{defn} \textbf{(Fibre bundle)}
A smooth fibre bundle is a $4$-tuple $(E,B,\pi,F)$ where $E,B,F$ are smooth manifolds and $\pi: E \rightarrow B$ is a continuous surjection such that for any point $x \in B$, there is a neighbourhood $U \subseteq B$ and a homeomorphism $\phi: U\times F  \rightarrow \pi^{-1}(U)$ satisfying:
\begin{equation}
(\pi \circ \phi)(x,f) = x 
\end{equation}
for all $x \in U$ and $f \in F$.

A \textbf{section} $f$ of a fibre bundle is a continuous map $f: B \rightarrow E$ satisfying $\pi(f(x)) = x$ for all $x \in B$ \cite{Mirrorsymmetry}. This locally looks like $f:x \mapsto (x,g(x)) $ for some function $g: U \rightarrow F$, thus generalising the notion of a graph.
\end{defn}

We are interested in \textbf{vector bundles}, where the manifold $F$ is a real $n$-dimensional vector space and the map $ v \mapsto \phi(x,v) $ is an isomorphism between $F$ and $\mathbb{R}^n$.\\
\\
We should think of a vector bundle as follows: at every point $x \in B$ there is a vector space $F_x$, which are isomorphic to each other, but not the same. Hence we cannot, for example, add vectors in different fibres. In our hairbrush: all the bristles are equivalent (homeomorphic), but not equal.  

An important example defined below is the \emph{tangent bundle}. Intuitively, the tangent space is given by derivatives of curves, and the tangent bundle is the collection of all tangent spaces.

\begin{defn} \textbf{(Tangent and cotangent bundle)}
Consider an $n$-dimensional manifold $B$, a point $x \in B$ and local coordinates $\{x^{\mu}\}$. Define an equivalence relation $\sim$ on the set of curves $\{ \gamma_i: (-\epsilon,\epsilon)\rightarrow B \mid \gamma_i(0) =x\}$ by $\gamma_i \sim \gamma_j$ if 
\begin{equation}
\left.  \frac{\mathrm{d} x^{\mu}(\gamma_i (t))}{\mathrm{d}t} \right|_{t=0} = \left. \frac{\mathrm{d} x^{\mu}(\gamma_j (t))}{\mathrm{d}t} \right|_{t=0} .
\end{equation}
We identify a \textbf{tangent vector} $X$ with an equivalence class of such curves. In coordinates we can express the vector $X$ as $X = X^{\mu}\partial_{\mu}$, where $\partial_{\mu} = \frac{\partial}{\partial x^{\mu}}$ and $X^{\mu} = \left.  \frac{\mathrm{d} x^{\mu}(\gamma_i (t))}{\mathrm{d}t} \right|_{t=0}$ and where we used the summation convention \cite{Nakahara}.

The space of all tangent vectors $X$ at $x$ forms the \textbf{tangent space} at x, denoted $T_x B$, and the collection of all tangent spaces at different points on the manifold $B$ forms the \textbf{tangent bundle}:
\begin{equation}
TB = \bigcup_{x\in B} T_x B.
\end{equation}

As a finite-dimensional vector space, $T_x B$ has a dual space $T^*_x B$ called the \textbf{cotangent space} of linear maps $f: T_x B \rightarrow \mathbb{R}$. We call elements in $T_x^* B$ \textbf{1-forms}. The collection of cotangent spaces forms the \textbf{cotangent bundle}:
\begin{equation}
T^* B = \bigcup_{x\in B} T^*_x B.
\end{equation}
\end{defn}

From the definition of the tangent space $T_x B$, and given local coordinates $\{ x^{\mu}\}$, we note that the vectors $\{ \partial_{\mu} \}$ form a basis for $T_x B$. Then $T^*_x B$ has an associated dual basis $\{ \dx^{\mu} \}$, satisfying $\left< \dx^{\mu}, \partial_{\nu} \right> = \delta^{\mu}_{\nu}$ where we define the inner product as $\left< \phi,V \right> = \phi(V) = V(\phi)$ for $V \in T_x B$ and $\phi \in T_x^* B$.\\
\\
Given tangent and cotangent spaces, we can uniquely define tensor spaces of tensors $Q$ of type $(p,q)$, which are multilinear maps 
\begin{equation}
Q: \underbrace{T^*_x B \times \dots \times T^*_x B}_{p\,\text{copies}} \times \underbrace{T_x B \times \dots \times T_x B}_{q\,\text{copies}}  \rightarrow \mathbb{R}.
\end{equation}

\subsection{Riemannian manifolds}
The manifolds that we will discuss in supersymmetry are Riemannian manifolds, in which each tangent space has an inner product.

\begin{defn}\textbf{(Riemannian manifold)}
A Riemannian manifold is a pair $(B,g)$, where $B$ is a manifold and $g = g(x)$ is a smooth function $g(x): T_x B \times T_x B \rightarrow \mathbb{R}$ defining a (positive-definite) inner product \cite{Mirrorsymmetry}.
\end{defn}

We think of $g$ as a smooth $(0,2)$-type tensor field $g_{\mu\nu} = g_{\mu\nu}(x)$. It is invertible with inverse $g^{\mu\nu}$ satisfying $g^{\mu\rho}g_{\rho\nu} = \delta^{\mu}_{\nu}$.

The metric defines lengths of curves on the manifold: let $x(t)$ be a curve in $B$ and $\{ x^{\mu} \}$ be local coordinates. Then the length of the curve is \cite{diffgeom}
\begin{equation}
L[x(t)] = \int \dt \sqrt{g_{\mu\nu} \frac{\dd x^{\mu}}{\dt} \frac{\dd x^{\nu}}{\dt}}.
\end{equation}
As an inner product, it also gives a notion of angles between curves.\\
\\
The metric itself is unable to relate nearby fibres on a manifold. This is where the idea of a connection comes into play; however we can only define ``constancy" on curves, not globally.

\begin{defn}\textbf{(Connection)}
Let $E$ be a fibre bundle with base space $B$. Let $\Gamma(E)$ be the set of smooth sections of $E$. A connection $\nabla$ is a linear map \cite{Mirrorsymmetry}
\begin{equation}
\nabla: \Gamma(E) \rightarrow \Omega^1 \otimes \Gamma(E)
\end{equation}
(where $\Omega^1$ is the set of all sections of 1-forms) satisfying the Leibniz rule:
\begin{equation}
\nabla(\sigma\otimes f) = \nabla \sigma \otimes f + \sigma \otimes \dd f
\end{equation}
for any smooth section $\sigma$ and smooth function $f$.
\end{defn}

In local coordinates $\{ x^{\mu} \}$, the connection acts as 
\begin{equation}
\nabla_{\mu} f = \partial_{\mu} f
\end{equation}
for any function $f$. Further
\begin{equation}
\nabla_{\mu} V^{\nu} = \partial_{\mu} V^{\nu} + \Gamma^{\nu}_{\,\mu\lambda} V^{\lambda}
\end{equation}
for any vector field $V^{\nu}$, where we call the $\Gamma^{\nu}_{\,\mu\lambda}$ connection coefficients or Christoffel symbols. The Leibniz rule extends this to arbitrary tensors, so the connection is completely specified by $\Gamma^{\nu}_{\,\mu\lambda}$.

A Riemannian manifold has a special connection: the Levi-Civita connection, which is the one we shall be concerned with.

\begin{thm}\textbf{(Levi-Civita connection)}
Any Riemannian manifold $(B,g)$ admits a unique metric-compatible connection $(\nabla g = 0)$ that is torsion free $(\Gamma^{\nu}_{\,\mu\lambda} = \Gamma^{\nu}_{\,\lambda\mu})$, called the Levi-Civita connection. In coordinates:
\begin{equation}
\Gamma^{\rho}_{\mu\nu} = \frac{1}{2} g^{\rho\sigma} \left( \partial_{\mu} g_{\nu\sigma} + \partial_{\nu} g_{\mu\sigma} - \partial_{\sigma} g_{\mu\nu} \right).
\end{equation}
\end{thm}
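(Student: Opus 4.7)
The plan is to establish uniqueness by deriving the coordinate formula from the two defining properties, and then verify existence by taking that formula as a definition. The uniqueness half is the important computation; existence is then largely formal.

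For uniqueness I would begin by unpacking the metric-compatibility condition $\nabla g = 0$ in local coordinates. Applying the Leibniz rule to the $(0,2)$-tensor $g$, this reads
\begin{equation}
\partial_{\rho} g_{\mu\nu} = \Gamma^{\lambda}_{\rho\mu} g_{\lambda\nu} + \Gamma^{\lambda}_{\rho\nu} g_{\mu\lambda}.
\end{equation}
I would then write this identity three times with the indices $(\mu,\nu,\sigma)$ cyclically permuted, namely for $\partial_{\mu} g_{\nu\sigma}$, $\partial_{\nu} g_{\sigma\mu}$, and $\partial_{\sigma} g_{\mu\nu}$, and form the combination (first) + (second) $-$ (third). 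Using the torsion-free condition $\Gamma^{\lambda}_{\alpha\beta} = \Gamma^{\lambda}_{\beta\alpha}$, four of the six terms on the right cancel in pairs, leaving $2\Gamma^{\lambda}_{\mu\nu} g_{\lambda\sigma}$. Contracting both sides with $\tfrac{1}{2} g^{\rho\sigma}$ and using $g^{\rho\sigma} g_{\lambda\sigma} = \delta^{\rho}_{\lambda}$ then isolates $\Gamma^{\rho}_{\mu\nu}$ and produces exactly the stated formula. This shows that any torsion-free metric-compatible connection must be given by that expression, establishing uniqueness.

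For existence I would take the coordinate formula as the definition of $\Gamma^{\rho}_{\mu\nu}$ in any chart and verify three things. First, that this definition is consistent across overlapping charts, i.e.\ that the $\Gamma^{\rho}_{\mu\nu}$ transform according to the connection transformation law (which contains an affine inhomogeneous term coming from second derivatives of the transition function); this follows by directly substituting the tensorial transformation law for $g_{\mu\nu}$ into the right-hand side and collecting the non-tensorial pieces. Second, symmetry in $\mu,\nu$ is immediate from the manifest symmetry of the right-hand side under $\mu \leftrightarrow \nu$, giving the torsion-free property. Third, metric compatibility $\nabla g = 0$ is verified by running the cyclic-sum calculation above in reverse: substituting the formula for $\Gamma$ into $\Gamma^{\lambda}_{\rho\mu} g_{\lambda\nu} + \Gamma^{\lambda}_{\rho\nu} g_{\mu\lambda}$ and simplifying reproduces $\partial_{\rho} g_{\mu\nu}$.

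The main obstacle is the chart-independence check, since it is the only step that is not a purely algebraic manipulation: one must verify that the inhomogeneous second-derivative term produced by transforming $\partial_{\mu} g_{\nu\sigma}$ under a coordinate change conspires with the corresponding term from $\partial_{\nu} g_{\mu\sigma} - \partial_{\sigma} g_{\mu\nu}$ and with the tensorial transformation of $g^{\rho\sigma}$ to give precisely the affine term required of a connection. All other steps, including the cyclic-sum trick, are short once the correct combination of permutations is identified, so I would present them tersely and spend the bulk of the write-up on this transformation calculation (or, following the style of the rest of the chapter, invoke it as a standard result and refer the reader to \cite{Nakahara} or \cite{diffgeom}).
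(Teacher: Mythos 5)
Your argument is correct and is the standard one: the cyclic permutation of the metric-compatibility identity combined with symmetry of $\Gamma^{\lambda}_{\mu\nu}$ does isolate $2\Gamma^{\lambda}_{\mu\nu}g_{\lambda\sigma}$ exactly as you describe, and contracting with $\tfrac{1}{2}g^{\rho\sigma}$ gives the stated formula, proving uniqueness; existence then follows by checking the formula defines a torsion-free, metric-compatible connection. The paper itself gives no proof of this theorem --- the differential geometry section explicitly adopts a ``physicist's approach'' and states such results without proof --- so there is nothing to compare against; your write-up would in fact supply the missing argument. One small economy worth noting: the chart-independence check you identify as the main obstacle can be sidestepped entirely by packaging the same computation invariantly as the Koszul formula
\begin{equation}
2g(\nabla_X Y, Z) = X\, g(Y,Z) + Y\, g(X,Z) - Z\, g(X,Y) + g([X,Y],Z) - g([X,Z],Y) - g([Y,Z],X),
\end{equation}
which defines $\nabla_X Y$ globally without reference to coordinates, after which the coordinate expression for $\Gamma^{\rho}_{\mu\nu}$ is read off by taking $X,Y,Z$ to be coordinate vector fields (whose Lie brackets vanish). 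Either route is fine, and deferring the transformation-law verification to \cite{Nakahara} or \cite{diffgeom} is consistent with the conventions of this chapter.
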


We are now ready for the final ingredient in our discussion of Riemannian manifolds: curvature. 

\begin{defn}\textbf{(Curvature)}
Define the curvature tensor $R^{\rho}_{\,\sigma\mu\nu}$ as a failure of the connection to commute:
\begin{equation}
\left( \nabla_{\mu}\nabla_{\nu} - \nabla_{\nu}\nabla_{\mu} \right) V^{\rho} =  R^{\rho}_{\,\sigma\mu\nu} V^{\sigma}
\end{equation}
for any vector $V^{\sigma}$. Then in coordinates:
\begin{equation} \label{eq:curvcoords}
R^{\rho}_{\,\sigma\mu\nu} = \partial_{\mu} \Gamma^{\rho}_{\,\nu\sigma} - \partial_{\nu} \Gamma^{\rho}_{\,\mu\sigma} + \Gamma^{\rho}_{\,\mu\lambda}\Gamma^{\lambda}_{\,\nu\sigma} - \Gamma^{\rho}_{\,\nu\lambda}\Gamma^{\lambda}_{\,\mu\sigma}.
\end{equation}
\end{defn}

This measures locally how much the space is not ``flat", or how it locally differs from Euclidean space. Note curvature is an intrinsic property of the manifold and independent of any embedding.

\begin{propn}
The Riemann tensor $R_{\mu\nu\sigma\rho}$ has the following symmetries:
\begin{align}
R_{\mu\nu\sigma\rho} &= - R_{\nu\mu\sigma\rho} \nn\\
R_{\mu\nu\sigma\rho} &= - R_{\mu\nu\rho\sigma} \nn\\
R_{\mu\nu\sigma\rho} + R_{\mu\sigma\rho\nu} &+ R_{\mu\rho\nu\sigma} = 0
\end{align}
\end{propn}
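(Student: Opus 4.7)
The plan is to derive each of the three identities from the corresponding defining property of the Levi-Civita connection: the antisymmetry in the last pair of indices follows directly from the commutator in the definition of $R$, the antisymmetry in the first pair follows from metric compatibility $\nabla g = 0$, and the first Bianchi identity follows from torsion-freeness $\Gamma^{\rho}_{\,\mu\nu} = \Gamma^{\rho}_{\,\nu\mu}$. I interpret $R_{\mu\nu\sigma\rho}$ as the fully lowered tensor $g_{\mu\lambda} R^{\lambda}_{\,\nu\sigma\rho}$, so that the paper's four index slots correspond directly to those of $R^{\rho}_{\,\sigma\mu\nu}$ after lowering the upper index.

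First I would dispose of the easy antisymmetry $R_{\mu\nu\sigma\rho} = -R_{\mu\nu\rho\sigma}$: the explicit coordinate expression \ref{eq:curvcoords} for $R^{\lambda}_{\,\nu\sigma\rho}$ is manifestly antisymmetric under $\sigma \leftrightarrow \rho$, since both the $\partial\Gamma - \partial\Gamma$ piece and the $\Gamma\Gamma - \Gamma\Gamma$ piece flip sign, and lowering the first index with the symmetric metric preserves this antisymmetry. For $R_{\mu\nu\sigma\rho} = -R_{\nu\mu\sigma\rho}$, I would extend the curvature commutator from vector fields to $(0,2)$ tensors via the Leibniz rule. Because the commutator annihilates scalars, it must act on a covector as $[\nabla_\sigma, \nabla_\rho] \omega_\mu = -R^{\lambda}_{\,\mu\sigma\rho} \omega_\lambda$, and hence on a $(0,2)$ tensor as
\begin{equation*}
[\nabla_\sigma, \nabla_\rho] T_{\mu\nu} = -R^{\lambda}_{\,\mu\sigma\rho} T_{\lambda\nu} - R^{\lambda}_{\,\nu\sigma\rho} T_{\mu\lambda}.
\end{equation*}
Setting $T = g$ and invoking $\nabla g = 0$ then forces $R_{\nu\mu\sigma\rho} + R_{\mu\nu\sigma\rho} = 0$, which is the claim.

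The first Bianchi identity is the one step with real content. My plan is to argue pointwise: at an arbitrary point $p \in B$, choose Riemann normal coordinates so that $\Gamma^{\rho}_{\,\alpha\beta}(p) = 0$ (their first derivatives at $p$ are in general nonzero). Both $\Gamma\Gamma$ terms in \ref{eq:curvcoords} then vanish at $p$, so each contribution to the cyclic sum reduces to a difference of two $\partial\Gamma$ terms. Summing the three cyclic permutations of $(\nu,\sigma,\rho)$ produces six $\partial\Gamma$ terms that group into three expressions of the schematic form $\partial_{\alpha}\bigl(\Gamma^{\rho}_{\,\beta\gamma} - \Gamma^{\rho}_{\,\gamma\beta}\bigr)$, each of which vanishes by torsion-freeness. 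Since the Bianchi identity is a tensor equation, its validity at $p$ in one coordinate system implies validity at $p$ in every coordinate system, and as $p$ is arbitrary the identity holds globally.

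The only mildly delicate part of the argument is the bookkeeping in the Bianchi step: the paper's cyclic permutation acts on slots $(\nu,\sigma,\rho)$, which correspond to the $(\sigma,\mu,\nu)$ of the natural labelling of $R^{\rho}_{\,\sigma\mu\nu}$, so one must keep signs and slot positions aligned when matching terms. The other two symmetries are essentially mechanical once the right object is differentiated.
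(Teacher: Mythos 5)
Your proof is correct, and it is considerably more substantive than the paper's, which consists of the single line ``This follows from the coordinate expression \ref{eq:curvcoords}.'' The two arguments genuinely diverge on the first-pair antisymmetry $R_{\mu\nu\sigma\rho} = -R_{\nu\mu\sigma\rho}$: the coordinate expression \ref{eq:curvcoords} holds for an arbitrary connection and is \emph{not} antisymmetric in that pair in general, so the paper's route implicitly requires substituting the Levi-Civita formula $\Gamma^{\rho}_{\,\mu\nu} = \frac{1}{2}g^{\rho\sigma}(\partial_\mu g_{\nu\sigma} + \partial_\nu g_{\mu\sigma} - \partial_\sigma g_{\mu\nu})$ into \ref{eq:curvcoords} and grinding through the resulting expression (or, equivalently, doing that computation in normal coordinates). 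Your derivation via the Ricci identity applied to $g$ together with $\nabla g = 0$ isolates exactly which structural property of the connection is responsible, avoids the algebra entirely, and makes clear why the symmetry fails for a non-metric connection. Likewise your normal-coordinates treatment of the first Bianchi identity cleanly attributes it to torsion-freeness; as a small remark, the cyclic sum of the $\Gamma\Gamma$ terms in \ref{eq:curvcoords} also cancels by the symmetry of $\Gamma$ in its lower indices, so normal coordinates are a convenience rather than a necessity there. One further pedantic point you could flag: the statement that $[\nabla_\sigma,\nabla_\rho]$ annihilates scalars is itself a consequence of torsion-freeness, since $[\nabla_\sigma,\nabla_\rho]f = (\Gamma^{\lambda}_{\,\rho\sigma}-\Gamma^{\lambda}_{\,\sigma\rho})\partial_\lambda f$; this is available here but worth naming as an ingredient.
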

\begin{proof}
This follows from the coordinate expression \ref{eq:curvcoords}. 
\end{proof}

\begin{propn}
For any point $x_0\in B$ there exist coordinates $\{ x^{\mu} \}$ around $x_0$ such that
\begin{equation}
\partial_{\lambda} g_{\mu\nu} (x_0) = 0, \qquad g_{\mu\nu}(x_0) = \delta_{\mu\nu} (x_0)
\end{equation} 
in these coordinates. Then
\begin{equation}
\Gamma^{\rho}_{\,\mu\nu} (x_0) = 0 \nn\\
\end{equation}
and further
\begin{equation}
R^{\rho}_{\,\sigma\mu\nu} (x_0) = \partial_{\mu} \Gamma^{\rho}_{\,\nu\sigma} (x_0) - \partial_{\nu} \Gamma^{\rho}_{\,\mu\sigma} (x_0).
\end{equation}
Call these \textbf{Riemann normal coordinates}.
\end{propn}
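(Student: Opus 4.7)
The plan is to construct the coordinates explicitly using the exponential map (i.e., parameterizing a neighbourhood of $x_0$ by geodesics emanating from it), and then read off the stated identities one after the other. First I would pick a basis of $T_{x_0} B$ that is orthonormal with respect to $g(x_0)$; this is possible because $g(x_0)$ is a positive-definite inner product on the finite-dimensional vector space $T_{x_0}B$, and it immediately handles the condition $g_{\mu\nu}(x_0) = \delta_{\mu\nu}$.

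Next I would define coordinates of a point $x$ near $x_0$ by writing $x = \exp_{x_0}(v)$ and taking $x^{\mu}$ to be the components of $v$ in the orthonormal basis chosen above. By construction, geodesics through $x_0$ are the straight lines $x^{\mu}(t) = t v^{\mu}$ in these coordinates. Plugging this into the geodesic equation $\ddot x^{\mu} + \Gamma^{\mu}_{\,\nu\lambda} \dot x^{\nu} \dot x^{\lambda} = 0$ gives $\Gamma^{\mu}_{\,\nu\lambda}(tv)\, v^{\nu} v^{\lambda} = 0$ for every $v$ and every small $t$. Evaluating at $t=0$ and using that $v$ is arbitrary together with the symmetry $\Gamma^{\mu}_{\,\nu\lambda} = \Gamma^{\mu}_{\,\lambda\nu}$ of the Levi-Civita connection (polarize $v \mapsto v+w$), I get $\Gamma^{\rho}_{\,\mu\nu}(x_0) = 0$.

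To get $\partial_{\lambda} g_{\mu\nu}(x_0) = 0$, I would invoke metric compatibility $\nabla g = 0$, which in coordinates reads
\begin{equation}
\partial_{\lambda} g_{\mu\nu} = \Gamma^{\rho}_{\,\lambda\mu} g_{\rho\nu} + \Gamma^{\rho}_{\,\lambda\nu} g_{\mu\rho},
\end{equation}
and evaluate at $x_0$, where the right-hand side vanishes by the previous step. Finally, the formula for $R^{\rho}_{\,\sigma\mu\nu}(x_0)$ follows by specializing the coordinate expression (\ref{eq:curvcoords}) to $x_0$: the two $\Gamma\Gamma$ terms vanish because $\Gamma(x_0)=0$, leaving exactly $\partial_{\mu} \Gamma^{\rho}_{\,\nu\sigma}(x_0) - \partial_{\nu} \Gamma^{\rho}_{\,\mu\sigma}(x_0)$.

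The only genuinely non-trivial ingredient is knowing that the exponential map is a diffeomorphism from a neighbourhood of $0 \in T_{x_0}B$ onto a neighbourhood of $x_0$, so that it truly defines coordinates. In the ``physicist's approach" adopted in this section I would simply quote this from the theory of ODEs (local existence and smooth dependence on initial data for the geodesic equation), rather than grind through the inverse function theorem argument. Everything else is a direct computation from the definitions.
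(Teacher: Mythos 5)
Your proposal is correct and complete. Note that the paper itself offers no proof of this proposition at all---it is one of the results stated without proof under the ``physicist's approach'' announced at the start of the differential geometry section---so your argument supplies something the text omits. Your construction via the exponential map is the standard one, and each step checks out: the orthonormal basis of $T_{x_0}B$ gives $g_{\mu\nu}(x_0)=\delta_{\mu\nu}$; the geodesic equation applied to the radial lines $x^{\mu}(t)=tv^{\mu}$, together with polarization and the torsion-free symmetry $\Gamma^{\mu}_{\,\nu\lambda}=\Gamma^{\mu}_{\,\lambda\nu}$, gives $\Gamma^{\rho}_{\,\mu\nu}(x_0)=0$; metric compatibility then kills $\partial_{\lambda}g_{\mu\nu}(x_0)$; and the curvature formula is immediate from equation \ref{eq:curvcoords} once the $\Gamma\Gamma$ terms drop out. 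One small remark on logical order: as the proposition is phrased, the vanishing of $\Gamma^{\rho}_{\,\mu\nu}(x_0)$ is presented as a consequence of the two metric conditions (via the Levi-Civita formula $\Gamma^{\rho}_{\,\mu\nu}=\tfrac{1}{2}g^{\rho\sigma}(\partial_{\mu}g_{\nu\sigma}+\partial_{\nu}g_{\mu\sigma}-\partial_{\sigma}g_{\mu\nu})$, whose ingredients all vanish at $x_0$), whereas you establish $\Gamma(x_0)=0$ first and recover $\partial g(x_0)=0$ from it. For the Levi-Civita connection these two conditions are equivalent, so either direction is fine; your ordering has the advantage that it isolates the genuinely geometric input (geodesics are coordinate lines) from the purely algebraic consequences. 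Your decision to quote the local diffeomorphism property of the exponential map rather than prove it is entirely in keeping with the level of rigour of this section.
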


\subsection{Differential forms} \label{sec:diffforms}
We can use tensors to define differential forms, which shall be of major importance when discussing supersymmetry on manifolds.
\begin{defn} \textbf{(Differential form)} A differential form of order $r$, (or an $r$\emph{-form}), is a totally antisymmetric tensor of type $(0,r)$ \cite{Nakahara}.

The space of all $r$-forms at $x\in B$ is denoted by $\Lambda^r_x B$. The \textbf{exterior algebra} $\Lambda^*_x B$ is the direct sum of these:
\begin{equation} \label{eq:extalgebra}
\Lambda^*_x B = \bigoplus_{r\in\mathbb{Z}_{\geq 0}} \Lambda^r_x B.
\end{equation}

From this we can form the \textbf{exterior bundle} $\Lambda^*B$:
\begin{equation}
\Lambda^* B = \bigcup_{x\in B} \Lambda^*_x B.
\end{equation}

Define $\Omega^r(B)$ to be the space of smooth sections of $\Lambda^r B$, where we identify $\Omega^0(B)$ as the space of smooth functions on $B$.
\end{defn}

We define the \textbf{wedge product} (or \emph{exterior product}) on the exterior algebra as follows: for a $q$-form $\omega$ and an $r$-form $\xi$, the wedge product $\omega \wedge \xi$ is a totally antisymmetric $(q+r)$-form given by: \cite{Nakahara}
\begin{equation}
(\omega \wedge \xi)(V_1, \dots, V_{q+r}) = \frac{1}{q!r!}\sum_{\sigma \in \text{Sym}(q+r)} \text{sgn}(\sigma) \omega(V_{\sigma(1)},\dots,V_{\sigma(q)})\xi(V_{\sigma(q+1)},\dots V_{\sigma(q+r)})
\end{equation}
where the $V_i$ are vectors, $\mathrm{Sym}(q+r)$ denotes the permutation group and \\$\mathrm{sgn}: \mathrm{Sym}(q+r)\rightarrow \{+1,-1\}$ the sign-function on permutations.

We shall need the following proposition, which we do not prove here.
\begin{propn}
Let $V$ be an $n$-dimensional vector space with basis $\{v_i \}_{1\leq i \leq n}$. Then the set $\{ v_{\mu_1}\wedge\dots\wedge v_{\mu_r}\}_{\mu_1 < \mu_2 < \dots < \mu_r}$ is a basis for $\Lambda^r V$. 

Hence
\begin{equation}
\text{dim } \Lambda^r V = \binom{n}{r}.
\end{equation} 
Specifically, $\text{dim } \Lambda^n V = 1$ and $\text{dim } \Lambda^r V = 0$ if $r>n$. 
\end{propn}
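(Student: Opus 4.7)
The plan is to verify spanning and linear independence separately, then count. First I would show that every $r$-form decomposes over the ordered wedges. Since the wedge on $V$ is built by antisymmetrizing the tensor product, every element of $\Lambda^r V$ is a linear combination of products $v_{i_1}\wedge\dots\wedge v_{i_r}$ with each index running independently over $\{1,\dots,n\}$. Total antisymmetry under swapping adjacent factors forces any such product either to vanish (if two indices coincide) or, up to a sign given by the sorting permutation, to equal an ordered wedge $v_{\mu_1}\wedge\dots\wedge v_{\mu_r}$ with $\mu_1<\dots<\mu_r$. This yields spanning.

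Next I would establish linear independence. Suppose $\sum_{\mu_1<\dots<\mu_r} c_{\mu_1\dots\mu_r}\,v_{\mu_1}\wedge\dots\wedge v_{\mu_r}=0$, take the dual basis $\{v^i\}$ of $V^*$, and form the covector wedges $v^{\nu_1}\wedge\dots\wedge v^{\nu_r}$ with $\nu_1<\dots<\nu_r$. Using the determinantal pairing formula
\begin{equation}
(v^{\nu_1}\wedge\dots\wedge v^{\nu_r})(v_{\mu_1},\dots,v_{\mu_r}) = \det\!\bigl(\langle v^{\nu_a}, v_{\mu_b}\rangle\bigr) = \det\!\bigl(\delta^{\nu_a}_{\mu_b}\bigr),
\end{equation}
the right-hand side vanishes unless $\{\nu_a\}=\{\mu_b\}$ as sets and equals $1$ otherwise, since both tuples are increasingly ordered. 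Evaluating the hypothesized relation against $(v_{\nu_1},\dots,v_{\nu_r})$ therefore extracts $c_{\nu_1\dots\nu_r}=0$ for each increasing tuple, giving independence.

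For the dimension count, strictly increasing $r$-tuples from $\{1,\dots,n\}$ are in bijection with $r$-element subsets of $\{1,\dots,n\}$, so there are $\binom{n}{r}$ basis elements and $\dim\Lambda^r V=\binom{n}{r}$. Specializing, $r=n$ gives $\binom{n}{n}=1$; for $r>n$, pigeonhole forces any $r$-tuple of indices to repeat, so every wedge product vanishes and the basis is empty, giving $\Lambda^r V=0$.

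I expect the main obstacle to be the pairing step in linear independence, since it relies on the determinantal formula between $\Lambda^r V$ and $\Lambda^r V^*$ that the paper has not explicitly stated; one must either take this as a definitional consequence of the antisymmetrization prescription from section \ref{sec:diffforms}, or instead proceed by induction on $r$, contracting with a single dual basis vector $v^k$ to reduce any linear relation in $\Lambda^r V$ to a family of linear relations in $\Lambda^{r-1}V$ and then invoking the inductive hypothesis.
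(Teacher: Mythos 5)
Your proof is correct. Note that the paper states this proposition explicitly without proof (``which we do not prove here''), so there is no in-paper argument to compare against; your argument is the standard one: spanning by sorting indices with signs and killing repeated indices via antisymmetry, independence by pairing against ordered wedges of the dual basis, and the count by the bijection between increasing $r$-tuples and $r$-element subsets. The one point you flagged as a possible obstacle --- the determinantal pairing formula $(v^{\nu_1}\wedge\dots\wedge v^{\nu_r})(v_{\mu_1},\dots,v_{\mu_r})=\det\bigl(\delta^{\nu_a}_{\mu_b}\bigr)$ --- is indeed consistent with the paper's wedge convention, since the $\tfrac{1}{q!r!}$ normalization in the definition of $\wedge$ is precisely the one for which a wedge of $r$ covectors evaluates on $r$ vectors as the determinant of the pairing matrix (check it for $r=2$ and induct); alternatively your fallback of contracting with a single dual vector and inducting on $r$ works equally well.
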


From this proposition, we gather that $\Lambda^*_x B$ is a graded algebra, the grading being provided by the order. Also the direct sum in \ref{eq:extalgebra} is finite:
\begin{equation}
\Lambda^*_x B = \bigoplus_{r=0}^n \Lambda^r_x B.
\end{equation}

Furthermore, as $\dim \Lambda^r V = \dim \Lambda^{n-r}V$, these spaces are isomorphic. We shall see later that for Riemannian manifolds, there is a canonical isomorphism given by the \emph{Hodge star}.

Let us look at an example of differential forms on a vector space $V$.
\begin{ex} \label{ex:R3forms}
Let $V = \mathbb{R}^3$ and let $\omega_r \in \Omega^r(\mathbb{R}^3)$. Then they are of the following form
\begin{enumerate}
\item $\omega_0 = f(x,y,z)$,
\item $\omega_1 = \omega_x\dx + \omega_y \dy + \omega_z \mathrm{d}z$,
\item $\omega_2 = \omega_{xy}\dx\wedge\dy + \omega_{yz} \dd y \wedge\dd z +\omega_{zx} \dd z \wedge\dd x $,
\item $\omega_3 = \omega_{xyz}\dx\wedge\dy\wedge\mathrm{d}z$,
\end{enumerate}
where $f, \omega_x,\omega_y,\omega_z,\omega_{xy},\dots$ are all smooth functions on $\mathbb{R}^3$.

Later we shall see that we can identify 0-forms and 3-forms with functions and 1-forms and 2-forms with vectors.
\end{ex}

Given a map $F: B \rightarrow B$, there is a natural induced map on differential forms.

\begin{defn} \label{defn:pullback}
Given a map $F: B \rightarrow B$, define the \textbf{pullback} $F^*: \Omega^p(B) \rightarrow \Omega^p(B)$ by \cite{diffgeom}
\begin{align}
F^*(f) &= f \circ F \nn\\
F^* (\dd f) &= \dd(f\circ F)
\end{align}
for $f \in \Omega^0(B)$ and extend to $p$-forms via:
\begin{equation}
F^* (\alpha\wedge\beta) =F^* (\alpha) \wedge\beta + \alpha\wedge F^*(\beta).
\end{equation}
\end{defn}
Then in coordinates
\begin{align}
F^* &\left(\sum_{i_1,\dots,i_p} a_{i_1, \dots, i_p}(x) \dd x^{i_1}\wedge\dots\wedge \dd x^{i_p}\right) \nn\\
&= \sum_{i_1,\dots,i_p} a_{i_1, \dots, i_p} (F(x))  \dd (x^{i_1} \circ F )\wedge\dots\wedge \dd(x^{i_p} \circ F ) \nn\\
&= \sum_{i_1,\dots,i_p} a_{i_1, \dots, i_p} (F(x))  \dd (F^{i_1} (x) )\wedge\dots\wedge \dd(F^{i_p} (x) ).
\end{align}

\begin{defn}
Let $(B,g)$ be a Riemannian manifold. Define an inner product on the spaces $\Lambda^r T_x B$ on decomposable $r$-forms by
\begin{equation}
\left< v_1 \wedge v_2 \wedge \dots \wedge v_r , w_1 \wedge w_2 \wedge \dots \wedge w_r \right> = \det \left( \left< v_i , w_k \right> \right)
\end{equation}
where the inner product on the RHS is defined by the metric $g^{\mu\nu}$. Extending this linearly to all $r$-forms gives the full inner product.
\end{defn}

Now we can finally discuss the object of central importance to us: the \textbf{exterior derivative}, which shall be identified with an important operator in supersymmetry.
\begin{defn}
The \textbf{exterior derivative} is a map $\dd_r : \Omega^r(B) \rightarrow \Omega^{r+1}(B)$ defined such that on an $r$-form 
\begin{equation}
\omega = \frac{1}{r!}\omega_{\mu_1\dots\mu_r} \dx^{\mu_1} \wedge\dots\wedge\dx^{\mu_r}
\end{equation}
it acts as \cite{Nakahara}
\begin{equation} \label{eq:ddefn}
\dd_r\omega = \frac{1}{r!}\left(\partial_\nu \omega_{\mu_1\dots\mu_r}\right) \dx^{\nu}\wedge \dx^{\mu_1} \wedge\dots\wedge\dx^{\mu_r}
\end{equation}
Now define $\dd : \Omega(B) \rightarrow \Omega(B)$ by $\left. \dd \right|_{\Omega^r(B)} = \dd_r$.

We call an $r$-form $\alpha$ \textbf{closed} if $\dd \alpha = 0$ and \textbf{exact} if there exists an $(r-1)$-form $\beta$ satisfying $\alpha = \dd \beta$.
\end{defn}

We state here some of its important properties.

\begin{propn} \label{propn:exteriorder}
The exterior derivative is the unique linear map $\dd: \Omega(B) \rightarrow \Omega(B)$ such that: \cite{diffgeom}
\begin{enumerate}
	\item If $f \in \Omega^0(B)$ then $\dd f$ is the derivative of f.
	\item It squares to zero:
		\begin{equation} \label{eq:dsquared}
		\boxed{ \dd^2 = 0.}
		\end{equation}
	\item If $\alpha \in \Omega^p(B)$ then
		\begin{equation} 
		\dd(\alpha\wedge\beta) = \dd\alpha\wedge\beta + (-1)^p \alpha\wedge\dd\beta.
		\end{equation}
\end{enumerate}
\end{propn}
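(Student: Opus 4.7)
The plan is to split the proposition into an existence part (the coordinate-defined operator from equation \ref{eq:ddefn} satisfies (1)--(3)) and a uniqueness part (any linear map with properties (1)--(3) must equal that operator).

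For existence, property (1) is immediate from the $r=0$ case of (\ref{eq:ddefn}), since then $\dd f = \partial_\nu f\, \dx^\nu$. For property (2), applying $\dd$ twice yields a term of the form $\frac{1}{r!}\partial_\rho \partial_\nu \omega_{\mu_1\dots\mu_r}\, \dx^\rho \wedge \dx^\nu \wedge \dx^{\mu_1}\wedge\dots\wedge\dx^{\mu_r}$. The coefficient is symmetric in $\rho,\nu$ by equality of mixed partials (valid for smooth $\omega$), while $\dx^\rho \wedge \dx^\nu$ is antisymmetric, so the contraction vanishes. For property (3), expand $\alpha$ and $\beta$ in their coordinate forms, apply the ordinary product rule to $\partial_\nu(\alpha_{\mu_1\dots\mu_p}\beta_{\nu_1\dots\nu_q})$, and then commute the new $\dx^\nu$ past the $p$ basis one-forms of $\alpha$; each transposition contributes a minus sign, producing the overall $(-1)^p$.

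For uniqueness, let $D$ be any linear map satisfying (1)--(3). By (1) applied to the coordinate functions, $D(x^i) = \dx^i$, and then by (2), $D(\dx^i) = D(D(x^i)) = 0$. Now write a general $r$-form in a chart as $\omega = \frac{1}{r!}\omega_{\mu_1\dots\mu_r}\,\dx^{\mu_1}\wedge\dots\wedge\dx^{\mu_r}$. Using linearity and repeated application of the graded Leibniz rule (3), every term in $D\omega$ involving a factor $D(\dx^{\mu_i})$ vanishes, leaving only
\begin{equation}
D\omega = \frac{1}{r!}(\partial_\nu \omega_{\mu_1\dots\mu_r})\,\dx^\nu\wedge\dx^{\mu_1}\wedge\dots\wedge\dx^{\mu_r},
\end{equation}
which is precisely the expression defining $\dd\omega$ in (\ref{eq:ddefn}). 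Hence $D = \dd$.

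The main obstacle is the bookkeeping in property (3): one has to verify that commuting $\dx^\nu$ past the $p$ basis one-forms of $\alpha$ yields exactly $(-1)^p$, and that the $\frac{1}{p!\,q!}$ combinatorial factors line up correctly after expanding both sides. A secondary subtlety is that the uniqueness argument is manifestly local (it assumes a chart); strictly one should note that the resulting formula is chart-independent, which can be verified directly from the transformation rule for the coefficients, or deduced a posteriori from the uniqueness statement itself.
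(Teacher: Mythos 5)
Your existence argument follows the same route the paper gestures at --- direct computation from the coordinate formula (\ref{eq:ddefn}), with $\dd^2=0$ coming from the symmetry of mixed partials against the antisymmetry of the wedge --- so that part is fine and matches the paper, which only says ``chase calculations.'' Where you genuinely go beyond the paper is uniqueness, which the paper explicitly declines to prove; your argument ($D(x^i)=\dx^i$ by (1), $D(\dx^i)=0$ by (2), then Leibniz to recover (\ref{eq:ddefn})) is the standard and correct one. One caveat: the subtlety you flag (chart-independence of the resulting formula) is not quite the one that needs addressing. The real gap is that properties (1)--(3) are stated for a map on \emph{global} forms, while your argument applies $D$ to coordinate functions $x^i$ that exist only on a chart, and tacitly assumes $(D\omega)|_U$ depends only on $\omega|_U$. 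Neither is automatic from linearity alone; you need to first show $D$ is a \emph{local} operator. This follows by a bump-function argument: if $\omega$ vanishes on an open set $U$ and $\varphi$ is a smooth function equal to $1$ at $x\in U$ with support in $U$, then $\varphi\omega=0$ globally, so
\begin{equation}
0 = D(\varphi\omega) = D\varphi\wedge\omega + \varphi\, D\omega ,
\end{equation}
and evaluating at $x$ (where $\omega=0$ and $\varphi=1$) gives $D\omega(x)=0$. With locality in hand, one extends the chart coordinates by cutoffs and your computation goes through; chart-independence of the final formula then comes for free, since the formula agrees with the globally defined $\dd$.
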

\begin{proof}
Chase calculations using the definition. Note commutativity of partial derivatives is crucial in proving equation \ref{eq:dsquared}.

We do not consider uniqueness here.
\end{proof}

Let's return to $\mathbb{R}^3$ and see what the operator $\dd$ looks like.

\begin{ex} \label{ex:graddivcurl}
Consider the differential forms $\omega_r \in \Omega^r(\mathbb{R}^3)$, expanded as in example \ref{ex:R3forms}.
\begin{enumerate}
\item For a function $f$:
	\begin{equation}
	\dd f = \partial_x f \dd x + \partial_y f \dd y + \partial_z f \dd z
	\end{equation}
	which we identify with grad $f$.
\item For a 1-form $\omega_1$:
	\begin{align}
	\dd \omega_1 &= \dd\left( \omega_x\dx + \omega_y \dy + \omega_z \dd z \right) \nn\\
	&= \dd\omega_x\wedge \dx + \dd \omega_y \wedge \dy + \dd \omega_z \wedge \dd z 
	\end{align}
	where we used proposition \ref{propn:exteriorder}. Thus:
	\begin{align}
		\dd \omega_1 &=\left(\partial_x \omega_x \dd x + \partial_y \omega_x \dd y + \partial_z \omega_x \dd z\right)\wedge \dd x + \dots\nn\\
		&= \left(\partial_z \omega_x - \partial_x \omega_z\right)\dd z \wedge \dd x + \left(\partial_y \omega_x - \partial_x \omega_y	\right)\dd x \wedge \dd y \nn\\
		&\quad+ \left(\partial_y \omega_z - \partial_z \omega_y\right)\dd y \wedge \dd z
	\end{align}
	which we identify with the curl of a vector.
\item Similarly for a 2-form $\omega_2$:
	\begin{equation}
	\dd \omega_2 = \left( \partial_x \omega_{yz} + \partial_y \omega_{zx} + \partial_z \omega_{xy} \right) \dd x\wedge \dd y \wedge \dd z
	\end{equation}
which we identify with the divergence of a vector.
\end{enumerate}
\end{ex}
Thus the familiar relations curl grad $= 0$ and div curl $=0$ are both consequences of $\dd^2 = 0$.

\subsubsection{De Rham cohomology}
We saw that $\dd$ satisfies the property $\dd^2 = 0$, so that all exact forms are closed. A natural question to consider, is if there are any closed forms that are not exact and if so, what the space parameterising them looks like. This is precisely what \emph{De Rham cohomology} captures.

\begin{defn}\textbf{(De Rham cohomology)}
Let $B$ be an $n$-dimensional manifold. The $p$-th De Rham cohomology group is defined as the quotient vector space \cite{diffgeom, Mirrorsymmetry}
\begin{equation}
H^{p}_{DR}(B) = \frac{\mathrm{Ker}\, d_p: \Omega^p(B) \rightarrow \Omega^{p+1}(B)}{\mathrm{Im}\, d_{p-1}: \Omega^{p-1}(B) \rightarrow \Omega^p(B)}
\end{equation} 
with operation
\begin{equation}
[ \omega_1] \wedge [\omega_2] = [ \omega_1 \wedge \omega_2]
\end{equation}
where $[\alpha]$ denotes the equivalence class of $p$-form $\alpha$.

We define the $p$-th \textbf{Betti number} of $B$ to be the dimension of the $p$-th De Rham cohomology group of $B$:
\begin{equation}
b_p(B) = \mathrm{dim}\, H^{p}_{DR}(B).
\end{equation}
Define the \textbf{Euler number} $\chi(B)$ of $B$ as the alternating sum of Betti numbers:
\begin{equation}
\chi(B) = \sum_{p=0}^{\infty} (-1)^p b_p(B) = b_0(B) - b_1(b) + \dots + (-1)^n b_n(B)
\end{equation}
which is a finite sum as $\Omega^r(B) = 0$ for all $r > n$.
\end{defn}
For $2$-manifolds (surfaces), this definition Euler number is equivalent to that defined through subdivisions (tilings) of a surface as $\chi(B) = V - E + F$, where $V$ is the number of vertices of the tiling, $E$ the number of edges and $F$ the number of faces.\\
\\
We begin with a simple result:
\begin{propn}
The $0$-th De Rham cohomology group of a manifold $B$ is equal to $\mathbb{R}^k$, with $k$ the number of connected components of $B$. \cite{diffgeom}
\end{propn}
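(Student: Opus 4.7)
The plan is to unwind the definition of $H^0_{DR}(B)$ and reduce the question to a standard fact about locally constant functions on a connected space. Since $\Omega^{-1}(B)$ is trivial, the image of $d_{-1}$ is zero, so $H^0_{DR}(B) = \ker d_0$ where $d_0: \Omega^0(B) \to \Omega^1(B)$. By definition $\Omega^0(B)$ is the space of smooth functions on $B$, so I need to describe those smooth $f: B \to \mathbb{R}$ with $\dd f = 0$.

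First I would work in local coordinates $\{x^\mu\}$ on a chart $U_\alpha$: by the formula \ref{eq:ddefn} applied to a $0$-form, $\dd f = \partial_\mu f \, \dd x^\mu$, so $\dd f = 0$ on $U_\alpha$ forces all partial derivatives of $f$ to vanish there. Hence $f$ is constant on each chart, and since any two overlapping charts force the same constant on their intersection, $f$ is locally constant on $B$.

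Next I would invoke the standard connectedness argument: if $C$ is a connected component of $B$ and $f$ is locally constant, pick $p \in C$ and let $S = \{q \in C : f(q) = f(p)\}$. Then $S$ is nonempty, and it is both open (locally constant) and closed (as the preimage of $\{f(p)\}$ under a continuous function), so $S = C$. Thus $f$ is constant on each connected component. Letting $C_1, \dots, C_k$ denote the connected components, the map
\begin{equation}
\ker d_0 \longrightarrow \mathbb{R}^k, \qquad f \longmapsto (f|_{C_1}, \dots, f|_{C_k})
\end{equation}
is well-defined, linear, injective (a function vanishing on every component is zero), and surjective (given any $k$-tuple $(a_1, \dots, a_k)$, define $f$ to take value $a_i$ on $C_i$, which is smooth because each $C_i$ is open in $B$). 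This exhibits the desired isomorphism $H^0_{DR}(B) \cong \mathbb{R}^k$.

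There is no real obstacle here; the only subtlety worth mentioning is the justification that each connected component $C_i$ is open in $B$, which follows because $B$ is locally Euclidean and therefore locally connected, so connected components are open. The argument is otherwise purely definitional.
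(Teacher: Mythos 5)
Your proof is correct and follows essentially the same route as the paper's own (much terser) argument: $\dd f = 0$ iff $f$ is locally constant iff $f$ is constant on each connected component, giving one real parameter per component. You have simply filled in the standard details (the open-and-closed argument, the explicit isomorphism, and the openness of components) that the paper leaves implicit.
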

\begin{proof}
From the definition, $f \in H^{0}_{DR}(B)$ iff $\dd f = 0$, iff $f$ is constant on each connected component of $B$. Each constant is a real number, giving parameter space $\mathbb{R}$. The result follows as $f$ is allowed to vary between connected components.
\end{proof}

We now give, without proof, the De Rham cohomology groups of some simple manifolds: \cite{diffgeom}
\begin{itemize}
\item	For $\mathbb{R}^n$ we have the \emph{Poincar{\'e} lemma}: $H^{p}_{DR}(\mathbb{R}^n) = 0$ if $p>0$. 

From the appropriate identifications with grad, div and curl, this shows the well-known results in $\mathbb{R}^3$ that 
	\begin{itemize}
		\item if $\nabla \wedge \phi = 0$, then $\phi = \nabla f$ for some function $f$.
		\item if $\nabla \cdot \phi = 0$, then $\phi = \nabla \wedge \psi$ for some vector field $\psi$.
	\end{itemize}
\item For the circle $S^1$: $H^0_{DR}(S^1) = H^1_{DR}(S^1) = \mathbb{R}$.
\item For the 2-torus $T^2$: $H^0_{DR}(T^2) = \mathbb{R}$, $H^1_{DR}(T^2) = \mathbb{R}^2$ and $H^2_{DR}(T^2) = \mathbb{R}$.

We check that $\chi(T) = 1 - 2 + 1 = 0$, as expected.

\item For the $n$-sphere $S^n$: $H^p_{DR}(S^n) = \mathbb{R}$ if $p=0$ or $p=n$ and is $0$ otherwise.

Again we check that $\chi(S^2) = 1 - 0 + 1 = 2$.
\end{itemize}

In fact, De Rham cohomology is \emph{homotopy invariant} \cite{diffgeom}, which means that, for example, the Poincar{\'e} lemma can be extended to any \emph{contractible} manifold, i.e. one homotopic to a point.

Specifically for the $n$-dimensional disk $D^n$, we have that $H^0_{DR}(D^n ) = \mathbb{R}$ and $H^p_{DR}(D^n) = 0$ for $p>0$. We shall use this in section \ref{sec:geomthms} to prove \emph{Brouwer's fixed-point theorem}.\\
\\
Finally, note that the pullback $F^*$ of a map $F:M \rightarrow M$ induces a map on the De Rham cohomologies via
\begin{equation}
F^*\left[\alpha\right] = \left[ F^* \alpha \right]
\end{equation}
with $\alpha \in \Omega^p(M)$ and $[\,.\,]$ denoting its equivalence class.

\subsubsection{Orientability, integration and Hodge dual}
We can use differential forms to define orientations.
\begin{defn}\textbf{(Orientability)}
Let $B$ be an $n$-dimensional manifold. Then $B$ is orientable if there exists an everywhere non-zero form $\omega\in\Omega^n(B)$ (called a \textbf{volume form}) \cite{diffgeom}.
\end{defn}

We call two orientations equivalent if they are related by a strictly positive function: $\tilde{\omega}(x) = h(x) \omega(x)$, where $h(x) > 0$ everywhere. Thus a connected orientable manifold only has two inequivalent orientations corresponding to the two possible signs.

\begin{defn}
On Riemannian manifolds, there is a particular volume form $\dd V$ of interest:
\begin{equation}
\dd V = \sqrt{\det(g)} \dd x^1 \wedge \dots \wedge \dd x^n
\end{equation}
which is independent of the chosen coordinate system.

Then define the \textbf{Hodge dual} $*$ by \cite{Mirrorsymmetry}
\begin{equation}
\theta \wedge *\psi = \left< \theta, \psi \right> \dd V
\end{equation}
for any $r$-forms $\theta, \psi$.
The Hodge dual is invertible and defines a canonical isomorphism between $\Omega^r(B)$ and $\Omega^{n-r}(B)$. 
\end{defn}

\begin{ex}
It is Hodge duality that allows us in $\mathbb{R}^3$ to identify $0$-forms with $3$-forms and $1$-forms with $2$-forms, and to identify grad, curl and div in example \ref{ex:graddivcurl}. 

It also gives a way to define the cross product: let $v$ and $w$ be two 1-forms representing vectors in $\mathbb{R}^3$. Then their cross product is
\begin{equation}
v \times w = *(v\wedge w) \in \Omega^1(\mathbb{R}^3).
\end{equation}
\end{ex} 

A crucial feature of differential forms is that we can integrate over them. Note that under coordinate transformations, their antisymmetry gives exactly the desired Jacobian determinant. We give the technical definition as in \cite{Nakahara}.
\begin{defn} \textbf{(Integration)}
Let $B$ be a compact, orientable $n$-manifold and $\omega$ a volume form. Let $\{U_i\}$ be an open covering of $B$ such that every point $x\in B$ is only in finitely many $U_i$. Further, let $\{\epsilon_i(x)\}$ be a partition of unity subordinate to $\{U_i\}$, i.e. a family of functions $\epsilon_i: B \rightarrow \mathbb{R}$ satisfying
\begin{itemize}
\item $0\leq\epsilon_i(x) \leq 1$ for all $x\in B$;
\item $\epsilon_i(x) = 0$ if $x \notin U_i$;
\item $\sum_i \epsilon_i(x) = 1$. (This is well defined as $x$ is only in finitely many $U_i$.)
\end{itemize}
Define $\omega_i(x) = \epsilon_i(x) \omega(x) $, so that $\omega(x) = \sum_i \omega_i(x)$ by the last property.

Let $x^i$ be coordinates with coordinate function $\phi$.
Then we define the integral of a volume form $\omega  = f \dd x^1 \wedge \dots \wedge \dd x^n$ on $U_i$ by
\begin{equation}
\int_{U_i} \omega = \int_{\phi(U_i)}  \dd x^1 \dd x^2 \dots \dd x^n\,f(\phi^{-1}(x))
\end{equation}
where the right hand side is just a repeated real integral. This turns out to be independent of the choice of coordinates.

Then define the integral of $\omega$ over $B$ as
\begin{equation}
\int_B \omega = \sum_i \int_{U_i} \omega_i.
\end{equation}
\end{defn}

Use the integral to define a global inner product on $\Omega^r (B)$ by
\begin{equation}
(\theta,\psi) = \int \theta \wedge *\psi
\end{equation}
for $\theta,\psi \in \Omega^r (B)$.

Since $\dd$ is an operator on differential forms, it has an adjoint $\dddagger:\Omega(B) \rightarrow \Omega(B) $ defined by $(\theta,\dd\phi) = (\dddagger\theta,\phi)$ for $r$-form $\theta$ and $(r-1)$-form $\phi$.

\begin{defn}
Define the \textbf{Laplacian} operator on differential forms by 
\begin{equation}
\Delta = \dd\dddagger + \dddagger\dd.
\end{equation}
We call a differential form $h$ \textbf{harmonic} if $\Delta h = 0$ and denote the space of all harmonic $r$-forms by $\mathcal{H}^r(B)$.
\end{defn}

\begin{propn} \label{propn:harm}
Let $B$ be a compact manifold without boundary. An $r$-form $h$ on $B$ is harmonic iff it is closed $(\dd h = 0)$ and co-closed $(\dddagger h = 0)$.
\end{propn}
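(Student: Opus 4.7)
The plan is to prove the two implications of the biconditional separately, with the easy direction being essentially a one-line computation and the harder direction being an elegant application of the positive-definiteness of the global inner product $(\cdot,\cdot)$ on $\Omega^r(B)$.

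For the easy direction, I would simply observe that if $\dd h = 0$ and $\dddagger h = 0$, then directly from the definition
\begin{equation}
\Delta h = \dd\dddagger h + \dddagger\dd h = \dd(0) + \dddagger(0) = 0,
\end{equation}
so $h$ is harmonic. No use of compactness or absence of boundary is required here.

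For the harder direction, suppose $\Delta h = 0$. The key idea is to pair $h$ with $\Delta h$ using the global inner product and exploit the adjoint relationship between $\dd$ and $\dddagger$. Concretely, I would write
\begin{equation}
0 = (h, \Delta h) = (h, \dd\dddagger h) + (h, \dddagger \dd h) = (\dddagger h, \dddagger h) + (\dd h, \dd h) = \|\dddagger h\|^2 + \|\dd h\|^2,
\end{equation}
where the middle equality uses the defining property of the adjoint, namely $(\theta, \dd\phi) = (\dddagger \theta, \phi)$, applied in both directions. Since each term on the right is a non-negative real number (as $(\cdot,\cdot)$ is an integral of $\theta\wedge*\theta$, which is a positive multiple of the volume form), their sum vanishing forces both $\|\dd h\| = 0$ and $\|\dddagger h\| = 0$, and hence $\dd h = 0$ and $\dddagger h = 0$.

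The main subtlety, and the place where the hypotheses on $B$ enter, is the validity of the adjoint identity used above. The definition of $\dddagger$ as the formal adjoint of $\dd$ via $(\theta,\dd\phi) = (\dddagger\theta,\phi)$ is guaranteed to hold without boundary terms precisely when $B$ is compact without boundary, since the identity is proved by a Stokes-type integration by parts and the boundary contribution $\int_{\partial B} \theta\wedge *\phi$ must vanish — which it does automatically when $\partial B = \emptyset$. Compactness also ensures that all the integrals defining $(\cdot,\cdot)$ converge, so that $\|\dd h\|^2$ and $\|\dddagger h\|^2$ are genuine finite non-negative numbers rather than formal expressions. Thus the only real obstacle is a bookkeeping check that the adjoint relation holds globally on $B$, which the stated hypotheses ensure.
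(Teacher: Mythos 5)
Your proposal is correct and follows essentially the same route as the paper: pair $h$ with $\Delta h$ in the global inner product, use the adjoint relation to obtain $(\dd h,\dd h)+(\dddagger h,\dddagger h)=0$, and conclude by positive-definiteness. Your added remark about where compactness and the absence of boundary enter (no boundary term in the integration by parts defining $\dddagger$, convergence of the integrals) is a useful clarification that the paper leaves implicit, but it does not change the argument.
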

\begin{proof}
$\Delta h = 0 $ iff $(\phi, \Delta h) = 0$ for any $r$-form $\phi$. Pick $\phi = h$ to get
\begin{equation}
(h,(\dd\dddagger + \dddagger\dd) h ) = (\dd h, \dd h) + (\dddagger h, \dddagger h)
\end{equation}
and this is $0$ iff $\dd h =0$ and $\dddagger h = 0$.
\end{proof}

We shall now state a crucial relation.
\begin{thm} \textbf{(Hodge decomposition)}
Let $\theta$ be a differential form. Then it has a unique decomposition \cite{Nakahara}
\begin{equation}
\theta = h + \dd \alpha + \dddagger \beta
\end{equation}
where $h$ is harmonic.

Thus 
\begin{equation} \label{eq:harmderham} 
\boxed{ \mathcal{H}^r(B) \cong H^r_{DR}(B) } .
\end{equation}
\end{thm}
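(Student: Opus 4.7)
The Hodge decomposition itself is a deep analytic result whose proof requires elliptic operator theory (essentially showing that $\Delta$ has compact resolvent and a finite-dimensional kernel on a compact manifold without boundary). I would not attempt that part from scratch in this pedagogical context; instead I would take the decomposition $\theta = h + \dd\alpha + \dddagger\beta$ as given and focus the proof on extracting the cohomological consequence in equation \ref{eq:harmderham}, since that is the piece actually used later.

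My plan is to exhibit an explicit isomorphism $\Phi: \mathcal{H}^r(B) \to H^r_{DR}(B)$ sending $h \mapsto [h]$. First I would check this is well-defined: by proposition \ref{propn:harm}, any harmonic $h$ is closed, so the cohomology class $[h]$ makes sense, and linearity is immediate. Next I would verify injectivity. Suppose $[h] = 0$, so $h = \dd\alpha$ for some $(r-1)$-form $\alpha$. Then using the adjoint relation together with $\dddagger h = 0$ (again by proposition \ref{propn:harm}), I compute
\begin{equation}
(h,h) = (h, \dd\alpha) = (\dddagger h, \alpha) = 0,
\end{equation}
and positive-definiteness of the inner product forces $h = 0$.

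For surjectivity I would take an arbitrary closed $r$-form $\theta$ and apply the Hodge decomposition to write $\theta = h + \dd\alpha + \dddagger\beta$. Acting by $\dd$ and using $\dd\theta = 0$, $\dd h = 0$ and $\dd^2 = 0$ gives $\dd\dddagger\beta = 0$. Taking the inner product with $\beta$ then yields
\begin{equation}
0 = (\beta, \dd\dddagger\beta) = (\dddagger\beta, \dddagger\beta),
\end{equation}
so $\dddagger\beta = 0$ and therefore $\theta = h + \dd\alpha$, giving $[\theta] = [h] = \Phi(h)$. Combined with injectivity, $\Phi$ is the desired isomorphism.

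The main obstacle, as noted above, is not this algebraic extraction but the existence and uniqueness of the Hodge decomposition itself, which genuinely requires the theory of elliptic self-adjoint operators on compact manifolds. I would flag that honestly and refer the reader to \cite{Nakahara}, keeping the present argument at the level of a clean corollary. A minor sanity check I would include is that the decomposition's three pieces are pairwise orthogonal in the global inner product (using $(\dd\alpha, \dddagger\beta) = (\dd^2\alpha, \beta) = 0$ and similar identities with harmonic forms), which makes uniqueness in the decomposition transparent and reinforces why the class $[h]$ is the natural harmonic representative of $[\theta]$.
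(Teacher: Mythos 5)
Your proposal is correct and takes essentially the same route as the paper: both accept existence of the decomposition without proof and extract the isomorphism \ref{eq:harmderham} from the adjoint relation and positive-definiteness of the global inner product, showing that a closed form has $\dddagger\beta = 0$ and hence lies in the class of its harmonic part. Your version is in fact slightly more complete, since you explicitly verify injectivity of $h \mapsto [h]$ (a harmonic exact form vanishes), a step the paper leaves implicit.
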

\begin{proof}
We do not prove existence here. Uniqueness follows from applying $\dd$ and $\dddagger$ to the above equation and using proposition \ref{propn:harm}. 

For equation \ref{eq:harmderham}, we use that $\text{ker } \dddagger\dd = \text{ker } \dd$ and $\text{ker } \dd\dddagger = \text{ker } \dddagger$ (which follows from positive-definiteness of the inner product $(\, ,\,)$), to see that $\dd\dddagger \beta \neq 0$ if $\dddagger \beta \neq 0$ and $\dddagger\dd \alpha \neq 0$ if $\dd \alpha \neq 0$. Then $\text{ker }\dd$ corresponds to forms of the form $\theta = h + \dd \alpha$, giving the result.
\end{proof}

\chapter{Supersymmetric quantum mechanics}\label{chap:SUSYQM}
In this chapter we finally introduce supersymmetry. We first give an introduction in flat space ($\mathbb{R}^n$), and then introduce supersymmetry on Riemannian manifolds. Finally, we use results from the previous chapter to prove the \emph{Gauss-Bonnet-Chern} and \emph{Lefschetz fixed-point} theorems.

Citations from \cite{Mirrorsymmetry} refer to chapters $9.1-9.3$ and $10.1-10.4$. Citations from \cite{Nakahara} refer to chapter $12.9$.

\section{Introduction to supersymmetric quantum mechanics} \label{sec:susyexamples}
Informally speaking, a supersymmetric quantum mechanical model is one in which the action $S$ depends on both bosonic and fermionic (Grassmann) variables, with a symmetry relating the two that leaves $S$ invariant.

We first analyse two examples of SUSY QM in flat space that elucidate its most important features.

\subsubsection{Example 1: real variables}
Let's start with a simple model, with real bosonic variables $x_k$, fermionic variables $\psi_k$ and Lagrangian (using summation convention):
\begin{equation}
L = \frac{1}{2}\dot{x_j}^2 + \frac{1}{2}i\psi_j\dot{\psi_j}
\end{equation}
where $(\,\dot{}\,)$ denotes differentiation with respect to time $t$.

\begin{propn}
This system is invariant under the following transformation: \cite{Nakahara}
\begin{equation} \label{eq:supsymvar1}
\delta x_j = i\epsilon \psi_j, \qquad \delta \psi_j = -\epsilon\dot{x}_j
\end{equation}
where $\epsilon$ is a real infinitesimal Grassmann constant. This is called a \textbf{supersymmetry transformation} as it relates bosons and fermions.
\end{propn}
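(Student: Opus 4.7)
The plan is to verify invariance by directly computing $\delta L$ and showing that it reduces to a total time derivative, so that the variation of the action $\delta S = \int dt\, \delta L$ vanishes (up to boundary terms). Using $\delta x_j = i\epsilon\psi_j$ and $\delta\psi_j = -\epsilon\dot{x}_j$, I first read off the induced variations $\delta\dot{x}_j = i\epsilon\dot{\psi}_j$ and $\delta\dot{\psi}_j = -\epsilon\ddot{x}_j$, and then split $\delta L$ into the bosonic piece $\dot{x}_j\,\delta\dot{x}_j$ and the fermionic piece $\tfrac{i}{2}\bigl((\delta\psi_j)\dot{\psi}_j + \psi_j\,\delta\dot{\psi}_j\bigr)$.

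Next I substitute the explicit variations. Care is required with signs because $\epsilon$ is Grassmann and anticommutes with any $\psi_j$ but commutes with the bosonic quantities $\dot{x}_j, \ddot{x}_j$. In particular, I expect $\psi_j(-\epsilon\ddot{x}_j) = +\epsilon\psi_j\ddot{x}_j$, while $(-\epsilon\dot{x}_j)\dot\psi_j = -\epsilon\dot{x}_j\dot\psi_j$. Combining, the bosonic piece yields $i\epsilon\dot{x}_j\dot{\psi}_j$ and the fermionic piece yields $\tfrac{i}{2}\bigl(-\epsilon\dot{x}_j\dot{\psi}_j + \epsilon\psi_j\ddot{x}_j\bigr)$, so that
\begin{equation*}
\delta L = \tfrac{i}{2}\epsilon\bigl(\dot{x}_j\dot{\psi}_j + \psi_j\ddot{x}_j\bigr) = \tfrac{i}{2}\epsilon\,\ddt\bigl(\dot{x}_j\psi_j\bigr).
\end{equation*}
Since $\delta L$ is a total derivative, $\delta S$ is a boundary term and the action is invariant under the transformation, as required.

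The main obstacle is simply bookkeeping of Grassmann signs: in particular, checking consistently that $\epsilon$ anticommutes past each $\psi$ but commutes past bosons, and making sure the two cross terms combine to a Leibniz rule rather than cancelling. A secondary subtlety worth flagging explicitly is that $\dot x_j \dot\psi_j + \psi_j \ddot x_j$ really is $\tfrac{d}{dt}(\dot x_j \psi_j)$ and not $\tfrac{d}{dt}(\psi_j \dot x_j)$ up to a sign; both orderings agree here because $\dot x_j$ is bosonic, but this is the sort of step where one can slip a sign in richer models. Once these points are handled, the verification is just one line of algebra.
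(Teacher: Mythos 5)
Your computation is correct and follows exactly the same route as the paper's proof: vary $L$ term by term, use $\delta\dot{x}_j = i\epsilon\dot{\psi}_j$ and $\delta\dot{\psi}_j = -\epsilon\ddot{x}_j$, track the sign from anticommuting $\epsilon$ past $\psi_j$, and collect the result into the total derivative $\tfrac{i\epsilon}{2}\ddt(\dot{x}_j\psi_j)$. The sign bookkeeping you flag is precisely the step the paper also emphasizes, and your final expression matches theirs.
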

\begin{proof}
To check invariance, we calculate
\begin{align}
\delta L &= \dot{x}_j \delta \dot{x}_j + \frac{i}{2} \left( \delta \psi_j \dot{\psi}_j + \psi_j\delta\dot{\psi}_j \right) \nn\\
&= \dot{x}_j \ddt \delta x_j + \frac{i}{2} \left( \delta \psi_j \dot{\psi}_j + \psi_j\ddt\delta\psi_j \right) 
\end{align}
where we used commutativity of $\ddt$ and $\delta$.  Thus
\begin{align}\label{eq:Lvar1}
\delta L &= i\epsilon \dot{x}_j \dot{\psi}_j + \frac{i}{2} \left( -\epsilon\dot{x}_j\dot{\psi}_j - \psi_j\epsilon\ddot{x}_j  \right) \nn\\
&= i\epsilon \left( \dot{x}_j \dot{\psi}_j  - \frac{1}{2}\dot{x}_j \dot{\psi}_j + \frac{1}{2}\epsilon \ddot{x_j}\psi_j \right) \nn\\
&= \frac{i\epsilon}{2} \ddt\left(\dot{x}_j\psi_j \right)
\end{align}
where we used anti-commutativity between $\epsilon$ and $\psi_j$.

Thus $S = \int \dt L$ is invariant.
\end{proof}

If we had allowed $\epsilon$ to be time-dependent, we would have had
\begin{equation}
\delta S = \int \dt i \dot{\epsilon} \dot{x_j}\psi_j
\end{equation}
from which we define the \textbf{supercharge} $Q$ for this transformation:
\begin{equation}
Q = i \dot{x_j}\psi_j.
\end{equation}
Then in equation \ref{eq:Lvar1}
\begin{equation} \label{eq:Lvar}
\delta L = \frac{1}{2}\epsilon \frac{\mathrm{d}Q}{\dt}.
\end{equation}

Consider the change in $Q$ under our supersymmetry variation \ref{eq:supsymvar1}:
\begin{align} \label{eq:Qvar}
\delta Q &= i \dot{x_j}\delta \psi_j + i \left(\delta \dot{x_j}\right)\psi_j  \nn\\
&= -i\epsilon \dot{x}_j^2 - i^2 \epsilon \psi_j \dot{\psi}_j \nn\\
&= -2i\epsilon\left(\frac{1}{2} \dot{x}_j^2 + \frac{i}{2}\psi_j\dot{\psi_j}\right) \nn\\
&= -2i\epsilon L
\end{align}
where we used anti-commutativity of $\epsilon$, $\psi_j$ and $\dot{\psi}_j$.

So the variation of $Q$ produces the Lagrangian, which is a general feature of SUSY QM. Comparing equation \ref{eq:supsymvar1} with \ref{eq:Lvar} and \ref{eq:Qvar}, we see that the roles of bosonic and fermionic quantities have interchanged.

\subsubsection{Example 2: complex variables}
Let's consider a slightly more complicated model, with one bosonic variable $x$, two (complex) fermionic variables: $\psi$ and its complex conjugate $\bar{\psi} = \psi^{\dagger}$, and with Lagrangian:
\begin{equation}
L = \frac{1}{2}\dot{x}^2 + \frac{1}{2}i(\bar{\psi}\dot{\psi} - \dot{\bar{\psi}}\psi) - \frac{1}{2} h'(x)^2 - h''(x)\bar{\psi}\psi.
\end{equation}
The quantity $h(x)$ is called the \emph{superpotential}. 

\begin{propn} \label{propn:susyinv}
This system is invariant under the following SUSY transformation \cite{Mirrorsymmetry}
\begin{align} \label{eq:supsymvar2}
\delta x &= \epsilon \bar{\psi} - \bar{\epsilon}\psi \nn\\
\delta \psi &= \epsilon (i\dot{x} +h'(x)) \nn\\
\delta \bar{\psi} &= \bar{\epsilon}(-i\dot{x} + h'(x))
\end{align}
where $\epsilon$ is an infinitesimal complex Grassmann constant and $\epsbar$ its complex conjugate.
\end{propn}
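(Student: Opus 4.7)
My plan is to verify the invariance by computing $\delta L$ directly, term by term, and showing that the result is a total time derivative, so that $\delta S=\int\dt\,\delta L$ vanishes. Since the transformation is bilinear and each variation either introduces one Grassmann parameter ($\epsilon$ or $\bar\epsilon$) times bosonic data, or uses $\delta x$ which is itself linear in the Grassmann variables, every term in $\delta L$ will be of mixed bosonic--fermionic type with exactly one factor of $\epsilon$ or $\bar\epsilon$. I will use the fact that $\delta$ commutes with $\ddt$, so $\delta\dot x=\ddt(\delta x)=\epsilon\dot{\bar\psi}-\bar\epsilon\dot\psi$, and similarly for $\delta\dot\psi$, $\delta\dot{\bar\psi}$.

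The organized steps are as follows. First, I would compute the variation of the bosonic kinetic term $\tfrac12\dot x^2$, giving $\dot x(\epsilon\dot{\bar\psi}-\bar\epsilon\dot\psi)$. Next, I would vary the fermion kinetic term $\tfrac i2(\bar\psi\dot\psi-\dot{\bar\psi}\psi)$, being careful to apply $\delta$ by the Leibniz rule and to anticommute $\epsilon,\bar\epsilon$ past $\psi,\bar\psi$ with the appropriate sign changes; after collecting, this produces pieces proportional to $\dot x$ (which will partially cancel against the bosonic kinetic variation up to a total derivative) and pieces proportional to $h'(x)$. Third, I would vary the potential term $-\tfrac12 h'(x)^2$, which is a purely bosonic function of $x$, obtaining $-h'(x)h''(x)(\epsilon\bar\psi-\bar\epsilon\psi)$. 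Finally, I would vary the Yukawa term $-h''(x)\bar\psi\psi$, splitting into three pieces using the product rule. The $\delta x$ piece gives $-h'''(x)(\epsilon\bar\psi-\bar\epsilon\psi)\bar\psi\psi$, which vanishes identically because $\bar\psi^2=\psi^2=0$ kills each summand after one anticommutation; the $\delta\bar\psi$ and $\delta\psi$ pieces recombine with the $h'(x)$-pieces from step two and the pieces from step three.

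After collecting, I expect the surviving terms to reorganize as $\ddt$ of something proportional to $\epsilon\bar\psi\cdot(\text{stuff})-\bar\epsilon\psi\cdot(\text{stuff})$, much as in the real-variable example leading to equation \ref{eq:Lvar1}; concretely I anticipate an answer of the schematic form
\begin{equation}
\delta L=\ddt\!\left[\,\tfrac{i}{2}\dot x(\epsilon\bar\psi-\bar\epsilon\psi)+\tfrac{1}{2}h'(x)(\epsilon\bar\psi+\bar\epsilon\psi)\,\right]
\end{equation}
or something equivalent up to conventions. Integrating over $t$ then yields $\delta S=0$, establishing invariance of the action.

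The main obstacle is purely computational: sign bookkeeping from anticommutativity, especially in the fermion kinetic term, where moving $\epsilon$ through $\bar\psi$ or $\psi$ flips signs, and where terms like $\bar\epsilon\psi\dot\psi$ must be compared with $\dot{\bar\psi}\bar\epsilon\psi$ after moving $\bar\epsilon$ past $\dot{\bar\psi}$. The cancellation of the $\dot x\cdot\dot\psi$ and $\dot x\cdot\dot{\bar\psi}$ pieces between the bosonic and fermionic kinetic variations requires recognizing a derivative $\ddt(\dot x\,\epsilon\bar\psi)$ etc.\ and splitting accordingly. Once the signs are correctly tracked, the cancellations happen precisely because $h'(x)$ enters the fermion variation in exactly the combination needed to kill the potential and Yukawa variations.
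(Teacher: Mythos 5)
Your proposal follows essentially the same route as the paper: a direct term-by-term computation of $\delta L$, killing the $h'''\,\delta x\,\bar\psi\psi$ piece via $\psi^2=\bar\psi^2=0$, and collecting the remainder into a total time derivative. The only quibble is that the boundary term actually comes out as $\ddt\bigl[\tfrac12\dot x(\epsilon\bar\psi-\bar\epsilon\psi)+\tfrac{i}{2}h'(x)(\epsilon\bar\psi+\bar\epsilon\psi)\bigr]$ (the factor of $i$ sits on the $h'$ piece, not the $\dot x$ piece), which you already anticipated as a convention-level adjustment and which does not affect the conclusion $\delta S=0$.
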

\begin{proof}
We check:
\begin{align}
\delta L &= \dot{x}\delta\dot{x} + \frac{i}{2}\left(\delta\bar{\psi}\dot{\psi} + \bar{\psi}\frac{\dd (\delta\psi)}{\dt} - \frac{\dd (\delta \bar{\psi})}{\dt}\psi - \dot{\bar{\psi}}\delta\psi \right) \nn\\
&\qquad -h'\delta h' - \delta h'' \bar{\psi}\psi - h''\delta \bar{\psi} \psi - h'' \bar{\psi}\delta\psi.
\end{align}
Now use equation \ref{eq:supsymvar2} and the chain rule $\delta h' = h'' \delta{x};\,\,\delta h'' = h'''\delta x$ to see that $\delta h'' \bar{\psi}\psi = 0$ as $\psi^2 = 0 = \bar{\psi}^2$. Plugging everything in:
\begin{align}
\delta L &=\dot{x}(\epsilon\bar{\psi} -\bar{\epsilon}\psi) + \frac{i}{2}\left[ \bar{\epsilon}(-i\dot{x} + h')\dot{\psi}+\bar{\psi}\epsilon\left(i\ddot{x} + \ddt h'\right) - \bar{\epsilon}\left(-i\ddot{x} + \ddt h'\right)\psi \right. \nn\\
& \qquad \left.  -\dot{\bar{\psi}}\epsilon(i\dot{x}+h')\right] -h'h''(\epsilon\bar{\psi} - \bar{\epsilon}\psi) - h''\bar{\epsilon} (-i\dot{x} + h')\psi - h''\psibar\epsilon(i\dot{x}+h').
\end{align}
Now use the anti-commutation relations of the fermionic variables $\psi,\psibar,\epsilon,\epsbar$ and the chain rule: $\frac{\dd (h'(x))}{\dt} = h''(x) \dot{x}$ to get: 
\begin{align}
\delta L &=\epsilon\bar{\psi}\dot{x} -\bar{\epsilon}\psi\dot{x} + \frac{1}{2}\left[ \bar{\epsilon}\dot{x}\dot{\psi} + i\bar{\epsilon}h'\dot{\psi}+\epsilon \psibar\ddot{x} - i\epsilon\psibar\dot{x}h'' - \bar{\epsilon}\ddot{x}\psi - i \epsbar\dot{x} h''\psi - \epsilon \dot{\psibar} \dot{x}+i\epsilon\dot{\psibar} h'\right] \nn\\
& \qquad  -h'h'' \epsilon\bar{\psi} +h'h'' \bar{\epsilon}\psi +i h''\dot{x} \bar{\epsilon}\psi  - h''h'\bar{\epsilon} \psi + i h''\dot{x}\epsilon\psibar+ h''\epsilon\psibar h'.
\end{align}
A lot of terms cancel; furthermore we can group terms:
\begin{align}
\delta L &= \frac{\epsilon}{2}\left(\dot{x} \dot{\psibar} + \ddot{x}\psibar\right) - \frac{\epsbar}{2}\left(\dot{x} \dot{\psi} + \ddot{x}\psi\right) + \frac{i}{2}\epsilon \left(h'\dot{\psibar} + \dot{x}h''\psibar \right) + \frac{i}{2}\epsbar \left(h'\dot{\psi} + \dot{x}h''\psi \right)\nn\\
&= \ddt\left( \frac{1}{2}\epsilon \psibar\left( \dot{x}+ih'\right) + \frac{1}{2}\epsbar\psi\left(- \dot{x}+ ih'\right) \right)
\end{align}
which is a total derivative, thus not changing $S = \int L \dt$.
\end{proof}

So far, we have assumed $\epsilon, \epsbar$ are time-invariant; if we allow them to be time-dependent, then \cite{Mirrorsymmetry}
\begin{equation}
\delta S = \int \dt (-i\dot{\epsilon} Q - i\dot{\epsbar} \bar{Q})
\end{equation}
where $Q, \bar{Q}$ are the supercharges:
\begin{align}
Q &= \psibar (i\dot{x} + h'(x)),\nn\\
\bar{Q} &= \psi (-i\dot{x} +h'(x)).
\end{align}

The conjugate momenta for $x$ and $\psi$ are given by $p = \partial L/\partial \dot{x} = \dot{x}$ and $\pi_{\psi} = \partial L/\partial \dot{\psi} = i\psibar$. We then perform a Legendre transform on the Lagrangian to find the Hamiltonian:
\begin{align}
H &= p^2 + \frac{1}{2} \left(  \pi_{\psi} \dot{\psi} + \pi_{\bar{\psi}} \dot{\psibar} \right) - L \nn\\
&= \frac{1}{2} p^2 + \frac{1}{2} h'(x)^2 + h''(x)\psibar\psi.
\end{align}

Now let us quantize the system. We impose commutation relations for bosons and anti-commutation relations for fermions: \cite{Mirrorsymmetry}
\begin{align}
\left[ x, p \right] &= i \\
\left\{ \psi, \pi_{\psi} \right\} &= i
\end{align}
so that 
\begin{align} \label{eq:commrels}
\left[ x, p \right] &= i \\
\left\{ \psi, \psibar \right\} &= 1.
\end{align}
All other (anti-)commutators vanish. 

In quantizing the Hamiltonian, there is an operator ordering ambiguity; we choose  \cite{Mirrorsymmetry}
\begin{equation}
H = \frac{1}{2} p^2 + \frac{1}{2} h'(x)^2 + \frac{1}{2}h''(x)(\psibar\psi - \psi\psibar).
\end{equation}

Now define the vacuum state $\ket{0}$ as annihilated by $\psi$:
\begin{equation}
\psi \ket{0} = 0
\end{equation}
and define fermionic states $\left(\psibar\right)^n\ket{0}$ by using the ``raising operator" $\psibar$. Since $\psibar^2=0$, this is a 2-dimensional space spanned by 
\begin{equation}
\{ \ket{0} , \psibar\ket{0} \}.
\end{equation}

Thus the total Hilbert space of states is \cite{Mirrorsymmetry}
\begin{equation}
\mathcal{H} = \mathcal{H}^B \oplus \mathcal{H}^F
\end{equation}
where 
\begin{align}
\mathcal{H}^B &= L^2(\mathbb{R},\mathbb{C})\ket{0}\nn\\
\mathcal{H}^F &= L^2(\mathbb{R},\mathbb{C})\psibar\ket{0}
\end{align}
are the bosonic and fermionic spaces, respectively.\\
\\
Consider the fermion number operator $F$:
\begin{equation}
F = \psibar\psi.
\end{equation}

\begin{propn}
$F$ commutes with $H$.
\end{propn}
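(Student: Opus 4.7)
The plan is to split $H$ into a purely bosonic piece and a purely fermionic piece and show that $F$ commutes with each separately. The bosonic piece $\frac{1}{2}p^2 + \frac{1}{2}h'(x)^2$ (together with the scalar-multiplying $h''(x)$) involves only $x$ and $p$, which by assumption (anti)-commute trivially with the fermionic operators $\psi,\bar\psi$ used to build $F$. Hence $F$ commutes with the bosonic part automatically, and the problem reduces to showing that $F$ commutes with the fermionic bilinear $\bar\psi\psi - \psi\bar\psi$.

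First I would simplify this fermionic bilinear using the anticommutation relation from equation \ref{eq:commrels}: since $\{\psi,\bar\psi\}=1$, we have $\psi\bar\psi = 1 - \bar\psi\psi$, so
\begin{equation}
\bar\psi\psi - \psi\bar\psi = 2\bar\psi\psi - 1 = 2F - 1.
\end{equation}
Therefore $H$ can be rewritten as
\begin{equation}
H = \tfrac{1}{2} p^2 + \tfrac{1}{2} h'(x)^2 + h''(x)\left(F - \tfrac{1}{2}\right).
\end{equation}
Now $F$ commutes with itself and with every bosonic operator, so $[F,H]=0$ follows immediately.

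For readers who prefer a direct check, one can alternatively verify $[F,\bar\psi\psi]=0$ by expanding $[\bar\psi\psi,\bar\psi\psi]$, using $\bar\psi^2=0=\psi^2$ and $\{\psi,\bar\psi\}=1$ to cancel terms. I do not expect any substantive obstacle here: the only subtlety is being careful with signs when pushing $\psi$ past $\bar\psi$, and the nilpotency $\psi^2 = \bar\psi^2 = 0$ kills all potentially problematic cross-terms.
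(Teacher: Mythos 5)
Your proof is correct. Your primary route is a slight (and arguably cleaner) variant of the paper's: the paper expands $2[F,H]$ directly into commutators of fermionic bilinears and kills the surviving terms using nilpotency $\psi^2=\bar{\psi}^2=0$, whereas you use $\{\psi,\bar{\psi}\}=1$ to rewrite $\bar{\psi}\psi-\psi\bar{\psi}=2F-1$, so that $H$ is manifestly a function of $F$ and bosonic operators and the commutator vanishes without any further computation. The two arguments use the same inputs (the canonical relations of equation \ref{eq:commrels} and the fact that $x,p$ commute with $\psi,\bar{\psi}$); your rewriting buys a one-line conclusion and also makes transparent why the particular operator ordering chosen for $H$ is irrelevant to this proposition, while the paper's direct expansion is what your "alternative direct check" would reproduce.
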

\begin{proof}
We calculate:
\begin{align}
2[F,H] &= \left[ \psibar\psi, p^2 + h'(x)^2 + h''(x)\left(\psibar\psi-\psi\psibar\right)\right] \nn\\
&= h''(x) \left[ \psibar\psi, \psibar\psi \right] - h''(x) \left[ \psibar\psi, \psi\psibar\right] \nn\\
&= -h''(x) \left( \psibar\psi \psi\psibar - \psi \psibar\psibar\psi \right) \nn\\
&= 0 
\end{align}
where we used the (anti-)commutation relations \ref{eq:commrels} and the identities $\psi^2 = 0 = \psibar^2$.
\end{proof}

By Heisenberg's equation of motion, $F$ is preserved. In fact, $F\ket{0} = \psibar\psi\ket{0} = 0$ and $F\psibar\ket{0} = \psibar\psi\psibar\ket{0} = \psibar\ket{0}$ (using $\{\psibar,\psi\} = 1$). So we see that $F$ takes the value $0$ on $\mathcal{H}^B$ and $1$ on $\mathcal{H}^F$. Hence we say the operator $(-1)^F$ provides a $\mathbb{Z}_2$ \emph{grading} on $\mathcal{H}$.\\
\\
Under quantization, the supercharges $Q, Q^{\dagger}$ are promoted to operators
\begin{align}
Q &= \psibar \left(ip + h'(x) \right) \\
Q^{\dagger} &= \psi \left(-ip + h'(x) \right).
\end{align}
Note that $Q^2 = 0 = \left(Q^{\dagger}\right)^2$ since $\psi^2= 0 = \psibar^2$.

They satisfy some important properties.

\begin{propn}
$Q$ and $Q^{\dagger}$ map $\mathcal{H}^B$ to $\mathcal{H}^F$ and vice versa.
\end{propn}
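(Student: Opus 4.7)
The plan is to verify the claim by a direct computation on generic elements of each subspace, and then note a cleaner grading-based reformulation. A general bosonic state has the form $f(x)\ket{0}$ with $f\in L^2(\mathbb{R},\mathbb{C})$, and a general fermionic state has the form $g(x)\psibar\ket{0}$. The key observation is that the operator $ip + h'(x)$ is purely bosonic: it acts only on the $L^2$ factor and (anti)commutes trivially with $\psi$ and $\psibar$, so it can always be pulled past the fermionic creation/annihilation operators.

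For $Q = \psibar(ip+h'(x))$, applied to $f(x)\ket{0}$ one obtains $\psibar\bigl[(ip+h'(x))f(x)\bigr]\ket{0}$, which is of the form $g(x)\psibar\ket{0}$ and hence lies in $\mathcal{H}^F$. Applied to $g(x)\psibar\ket{0}$, one pulls the bosonic factor to the left and uses $\psibar^2=0$ to get zero, which trivially lies in $\mathcal{H}^B$. For $Q^\dagger = \psi(-ip+h'(x))$, applied to $f(x)\ket{0}$ the result vanishes immediately by $\psi\ket{0}=0$. Applied to $g(x)\psibar\ket{0}$ one uses the anticommutator $\psi\psibar = 1-\psibar\psi$ together with $\psi\ket{0}=0$ to obtain $\bigl[(-ip+h'(x))g(x)\bigr]\ket{0} \in \mathcal{H}^B$.

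Conceptually the cleanest reformulation is that $Q$ and $Q^\dagger$ shift fermion number by $\pm 1$. Using $\{\psi,\psibar\}=1$ and $\psi^2=\psibar^2=0$, a short calculation gives $[\psibar\psi,\psibar] = \psibar$ and $[\psibar\psi,\psi] = -\psi$, and since $ip+h'(x)$ commutes with $F$ one deduces
\begin{equation}
[F,Q] = Q, \qquad [F,Q^\dagger] = -Q^\dagger.
\end{equation}
Because $\mathcal{H}^B$ and $\mathcal{H}^F$ are the eigenspaces of $F$ with eigenvalues $0$ and $1$ respectively, these commutation relations force $Q:\mathcal{H}^B\to\mathcal{H}^F$ and $Q^\dagger:\mathcal{H}^F\to\mathcal{H}^B$, with the opposite-direction actions either vanishing (by $\psi\ket{0}=0$ or $\psibar^2=0$) or being zero vectors regarded as elements of the target space.

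There is no real obstacle; the only subtlety is keeping the anticommutator $\psi\psibar = 1-\psibar\psi$ straight when acting $Q^\dagger$ on a fermionic state, since that is where the nontrivial cross-mapping actually occurs rather than producing zero identically. I would present the direct computation first for concreteness, then remark on the fermion-number viewpoint as it foreshadows the $\mathbb{Z}_2$ grading already introduced and sets up the Witten-index discussion to follow.
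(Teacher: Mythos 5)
Your direct computation is exactly the paper's argument (act on generic states $f(x)\ket{0}$ and $g(x)\psibar\ket{0}$, pull the bosonic factor $\pm ip+h'(x)$ past the fermionic operators, and use $\psibar^2=0$, $\psi\ket{0}=0$, $\{\psi,\psibar\}=1$); the paper only writes out the $Q$ case and leaves $Q^{\dagger}$ as ``similar,'' whereas you complete it, which is where the anticommutator is genuinely needed. Your fermion-number reformulation via $[F,Q]=Q$ and $[F,Q^{\dagger}]=-Q^{\dagger}$ is a correct and nice addition, but the core proof is the same as the paper's.
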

\begin{proof}
We show this for $Q$; the proof for $Q^{\dagger}$ is similar.

Consider states $\ket{\phi_B} \in \mathcal{H}^B$ and $\ket{\phi_F} \in \mathcal{H}^F$, i.e. $\ket{\phi_B}= f_B(x) \ket{0}$ and $\ket{\phi_F} = f_F(x)\psibar\ket{0}$. Then
\begin{equation}
Q\ket{\phi_B} = \psibar (ip + h'(x)) f_b(x) \ket{0} =  (ip + h'(x)) f_b(x) \psibar \ket{0} \in \mathcal{H}^F
\end{equation}
and
\begin{equation}
Q\ket{\phi_F} = \psibar (ip + h'(x)) f_F(x) \psibar \ket{0} =  (ip + h'(x)) f_F(x) \psibar^2 \ket{0} = 0  \in \mathcal{H}^B.
\end{equation}
\end{proof}

The following property will be crucial to us.
\begin{propn}
The anti-commutator of $Q$ and $Q^{\dagger}$ gives the Hamiltonian:
\begin{equation}
\boxed{ \left\{Q,Q^{\dagger} \right\} = 2H }.
\end{equation}
\end{propn}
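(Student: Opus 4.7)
The plan is to compute $QQ^{\dagger}$ and $Q^{\dagger}Q$ separately, sum them, and check the result equals $2H$. The key observation that makes this tractable is that the bosonic operators $x$ and $p$ commute with the fermionic operators $\psi$ and $\bar{\psi}$, while the fermionic operators satisfy the anti-commutation relation $\{\psi, \bar{\psi}\} = 1$ from \eqref{eq:commrels}.

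First I would expand
\begin{equation}
QQ^{\dagger} = \bar{\psi}(ip + h'(x))\,\psi(-ip + h'(x))
\end{equation}
and move $\psi$ past the bosonic factor $(ip + h'(x))$ (they commute) to collect the fermionic part as $\bar{\psi}\psi$ times a bosonic operator. The bosonic piece is $(ip + h'(x))(-ip + h'(x))$, which I would expand as $p^2 + h'(x)^2 + i[p, h'(x)]$. Using the canonical commutator $[x,p] = i$, one gets $[p, h'(x)] = -ih''(x)$, so
\begin{equation}
QQ^{\dagger} = \bar{\psi}\psi\bigl(p^2 + h'(x)^2 + h''(x)\bigr).
\end{equation}
A completely analogous computation for $Q^{\dagger}Q$ yields
\begin{equation}
Q^{\dagger}Q = \psi\bar{\psi}\bigl(p^2 + h'(x)^2 - h''(x)\bigr),
\end{equation}
the sign flip on the $h''(x)$ term arising because the $i$'s now appear in the opposite order.

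Adding the two contributions and grouping the fermion bilinears,
\begin{equation}
\{Q, Q^{\dagger}\} = (\bar{\psi}\psi + \psi\bar{\psi})\bigl(p^2 + h'(x)^2\bigr) + (\bar{\psi}\psi - \psi\bar{\psi})\,h''(x).
\end{equation}
The first bracket collapses to $1$ by $\{\psi, \bar{\psi}\} = 1$, leaving exactly $p^2 + h'(x)^2 + h''(x)(\bar{\psi}\psi - \psi\bar{\psi}) = 2H$, as stated.

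I do not expect a real obstacle here: the calculation is a routine operator manipulation, and the only place where sign-tracking is delicate is in identifying $[p, h'(x)] = -ih''(x)$ with the correct sign and in keeping the ordering of $\bar\psi\psi$ versus $\psi\bar\psi$ consistent between the two terms. The chosen operator-ordering prescription for $H$ (the symmetric combination $\tfrac{1}{2}h''(x)(\bar\psi\psi - \psi\bar\psi)$) is precisely what makes this identity hold on the nose, so matching that ordering is the main point to be careful about.
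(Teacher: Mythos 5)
Your proposal is correct and is exactly the computation the paper has in mind: the paper omits the proof, stating only that it is ``an expansion of anti-commutators using the canonical relations'' and deferring details to the Mirror Symmetry reference, and your calculation carries out that expansion with the right signs (in particular $[p,h'(x)]=-ih''(x)$) and the right operator ordering. Nothing further is needed.
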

\begin{proof}
We omit the proof here as this is just an expansion of anti-commutators using the canonical relations \ref{eq:commrels}. Details can be found in \cite{Mirrorsymmetry}.
\end{proof}

\section{General structure of supersymmetric quantum mechanics} \label{sec:SUSYstructure}
The examples in the previous section highlighted the structure of SUSY QM; in this section we shall provide a general definition of SUSY QM. We follow \cite{SiLi} and \cite{WittenSUSY}. 

\begin{defn} \textbf{(SUSY QM)}
Consider a quantum mechanical system consisting of a Hilbert space $\mathcal{H}$ and Hamiltonian $H$. It is \textbf{supersymmetrically quantum mechanical} (SQM) of type $N$ if \cite{SiLi}
\begin{enumerate}
\item $\mathcal{H}$ is $\mathbb{Z}_2$ graded by an operator $(-1)^F$:
	\begin{equation}
	\mathcal{H} = \mathcal{H}^B \oplus \mathcal{H}^F
	\end{equation}
	where 
	\begin{align}
	(-1)^F \ket{\phi} &=\ket{\phi} \qquad\quad\text{if } \phi \in \mathcal{H}^B\nn\\
	(-1)^F \ket{\phi} &=-\ket{\phi} \qquad \text{if } \phi \in \mathcal{H}^F.
	\end{align}
	We call $\mathcal{H}^B$ and $\mathcal{H}^F$ the bosonic and fermionic spaces respectively.
\item There are $N$ \emph{supercharges} $Q_I$ that anti-commute with $(-1)^F$:
	\begin{equation}
	\left\{ Q_I, (-1)^F \right\} = 0 = \left\{ Q^{\dagger}_I, (-1)^F \right\}.
	\end{equation}
	Therefore $Q_I$ and $Q^{\dagger}_I$ map bosons to fermions and vice versa:
	\begin{align}
	Q_I,Q_I^{\dagger}&: \mathcal{H}^B \rightarrow \mathcal{H}^F\nn\\
	Q_I,Q_I^{\dagger}&: \mathcal{H}^F \rightarrow \mathcal{H}^B.
	\end{align}
\item The supercharges satisfy the superalgebra condition:
	\begin{align}
	\left\{ Q_I, Q_J \right\} &= 0\\
	\left\{ Q_I, Q^{\dagger}_J \right\} &= 2\delta_{IJ} H.
	\end{align}
\end{enumerate}
We shall restrict ourselves to SQM models with a single supercharge $(N=1)$.
\end{defn}

Some important properties follow from the superalgebra condition.

\begin{cor} \label{cor:zeroenergy}
$H$ is a non-negative operator and 
\begin{equation}
H\ket{\phi} = 0 \quad \Leftrightarrow \quad Q\ket{\phi} = 0 = Q^{\dagger}\ket{\phi}.
\end{equation}
Furthermore $Q$ and $Q^{\dagger}$ commute with $H$:
\begin{equation}\label{eq:Qcommute}
\left[ Q, H \right] = 0 = \left[ Q^{\dagger}, H \right].
\end{equation}
\end{cor}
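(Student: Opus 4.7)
The plan is to derive everything from the superalgebra relations $\{Q,Q\} = 0$ (equivalently $Q^2 = 0$, and similarly $(Q^\dagger)^2 = 0$) and $\{Q, Q^\dagger\} = 2H$, together with the fact that $Q^\dagger$ is the Hilbert-space adjoint of $Q$.

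First I would establish non-negativity by expanding the anti-commutator:
\begin{equation}
2\braket{\phi|H|\phi} = \braket{\phi|QQ^\dagger|\phi} + \braket{\phi|Q^\dagger Q|\phi} = \|Q^\dagger\ket{\phi}\|^2 + \|Q\ket{\phi}\|^2 \geq 0.
\end{equation}
This directly shows $H \geq 0$, and also immediately gives the equivalence in the second claim: if $H\ket{\phi} = 0$ then $\braket{\phi|H|\phi} = 0$, so both non-negative terms above must vanish, forcing $Q\ket{\phi} = 0 = Q^\dagger\ket{\phi}$. Conversely, if $Q\ket{\phi} = 0 = Q^\dagger\ket{\phi}$, then $2H\ket{\phi} = (QQ^\dagger + Q^\dagger Q)\ket{\phi} = 0$.

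For the commutation relations, I would compute directly using $Q^2 = 0$:
\begin{equation}
2[Q,H] = [Q, QQ^\dagger + Q^\dagger Q] = Q^2 Q^\dagger + QQ^\dagger Q - QQ^\dagger Q - Q^\dagger Q^2 = 0,
\end{equation}
and analogously $2[Q^\dagger, H] = [Q^\dagger, QQ^\dagger + Q^\dagger Q] = 0$ using $(Q^\dagger)^2 = 0$.

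There is no real obstacle here; the only subtlety is the implicit assumption that $Q^\dagger$ denotes the adjoint of $Q$ with respect to the Hilbert-space inner product on $\mathcal{H}$, so that $\braket{\phi|QQ^\dagger|\phi} = \|Q^\dagger\ket{\phi}\|^2$ and similarly for the other term. This is the standard convention used throughout the preceding examples, so no additional justification is needed.
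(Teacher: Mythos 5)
Your proof is correct and follows essentially the same route as the paper: both expand $2\braket{\phi|H|\phi} = \|Q^\dagger\ket{\phi}\|^2 + \|Q\ket{\phi}\|^2$ to get non-negativity and the ground-state equivalence, and both compute $2[Q,H] = [Q, QQ^\dagger + Q^\dagger Q] = 0$ using $Q^2 = 0$. Your explicit remark that the same expansion also delivers $H \geq 0$ is a small tidy addition the paper leaves implicit, but the argument is the same.
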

\begin{proof}
We note that if $Q\ket{\phi} = 0 = Q^{\dagger}\ket{\phi}$, then $H\ket{\phi} = 0$ is trivial.

For the other implication: suppose $H\ket{\phi} = 0$. Then $\braket{\phi|H|\phi}=0$. We expand:
\begin{equation}
0= \braket{\phi|2H|\phi} = \braket{\phi|QQ^{\dagger}|\phi} + \braket{\phi|Q^{\dagger}Q|\phi} = \braket{Q^{\dagger}\phi|Q^{\dagger}\phi} + \braket{Q\phi|Q\phi}
\end{equation}
which implies $Q\ket{\phi} = 0 = Q^{\dagger}\ket{\phi}$ by non-negativity of the inner product.\\
\\
As for equation \ref{eq:Qcommute}: we simply expand
\begin{equation}
2 \left[ Q, H \right] = \left[ Q, QQ^{\dagger} + Q^{\dagger}Q \right] = Q^2Q^{\dagger} +Q Q^{\dagger}Q - QQ^{\dagger}Q - Q^{\dagger}Q^2 = 0
\end{equation}
as $Q^2=0$. Similarly for $Q^{\dagger}$.
\end{proof}

Assuming the Hamiltonian has a countable spectrum, it gives us a $\mathbb{Z}_{\geq 0}$ grading on our Hilbert space, which can be restricted to the bosonic and fermionic spaces:
\begin{equation}
\mathcal{H} = \bigoplus_{n\in\mathbb{Z}_{\geq 0}} \mathcal{H}_n,\qquad \mathcal{H}^B = \bigoplus_{n\in\mathbb{Z}_{\geq 0}} \mathcal{H}^B_n,\qquad
\mathcal{H}^F = \bigoplus_{n\in\mathbb{Z}_{\geq 0}} \mathcal{H}^F_n
\end{equation}
where $\mathcal{H}_n$ is the $n$-th energy level, and $\mathcal{H}^B_n$ and $\mathcal{H}^F_n$ are its restrictions to $\mathcal{H}^B$ and $\mathcal{H}^F$ respectively.\\
\\
As $Q,Q^{\dagger}$ commute with $H$, they preserve the energy levels:
\begin{align}
Q,Q^{\dagger}&: \mathcal{H}^B_n \rightarrow \mathcal{H}^F_n\nn\\
Q,Q^{\dagger}&: \mathcal{H}^F_n \rightarrow \mathcal{H}^B_n.
\end{align}

\begin{propn}
For $n>0$: \cite{SiLi}
\begin{equation}
\boxed{\mathcal{H}^B_n \cong \mathcal{H}^F_n}.
\end{equation}
\end{propn}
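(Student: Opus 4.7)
The plan is to exhibit an explicit isomorphism between $\mathcal{H}^B_n$ and $\mathcal{H}^F_n$ built from the supercharges themselves. Define the operator $D = Q + Q^{\dagger}$ on $\mathcal{H}_n$. Since both $Q$ and $Q^{\dagger}$ anti-commute with $(-1)^F$, the operator $D$ also anti-commutes with $(-1)^F$, and hence maps $\mathcal{H}^B_n$ into $\mathcal{H}^F_n$ and $\mathcal{H}^F_n$ into $\mathcal{H}^B_n$. Because $Q$ and $Q^{\dagger}$ each commute with $H$ (by corollary \ref{cor:zeroenergy}), so does $D$, so $D$ preserves the energy level $\mathcal{H}_n$.

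Next I would compute $D^2$ directly using the superalgebra relations. Expanding
\begin{equation}
D^2 = (Q + Q^{\dagger})^2 = Q^2 + (Q^{\dagger})^2 + QQ^{\dagger} + Q^{\dagger}Q
\end{equation}
and using $\{Q,Q\} = 0 = \{Q^{\dagger},Q^{\dagger}\}$ (so $Q^2 = (Q^{\dagger})^2 = 0$) together with $\{Q,Q^{\dagger}\} = 2H$, one obtains $D^2 = 2H$. Restricting to the energy level $\mathcal{H}_n$, which is an eigenspace of $H$ with eigenvalue $E_n > 0$, this gives $D^2|_{\mathcal{H}_n} = 2E_n\,\mathrm{Id}$.

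Therefore $D|_{\mathcal{H}_n}$ is invertible with $D^{-1} = D/(2E_n)$. Restricting this invertible map to its action between the two graded summands, I get $D|_{\mathcal{H}^B_n} : \mathcal{H}^B_n \to \mathcal{H}^F_n$ with two-sided inverse $(2E_n)^{-1} D|_{\mathcal{H}^F_n}: \mathcal{H}^F_n \to \mathcal{H}^B_n$, which proves the claimed isomorphism.

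There is no real obstacle; the only subtlety is making sure that the grading argument is used correctly, namely that $D$ swaps the two summands so that its invertibility on $\mathcal{H}_n = \mathcal{H}^B_n \oplus \mathcal{H}^F_n$ automatically yields an isomorphism between them rather than merely an invertible endomorphism of the whole level. One should also note where the hypothesis $n > 0$ enters: if $E_n = 0$ then $D^2 = 0$ and the argument collapses, which is exactly as expected since zero-energy bosonic and fermionic states need not pair up (a fact that will later be crucial for the Witten index).
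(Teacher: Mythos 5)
Your proof is correct and is essentially the paper's own argument: the paper defines $Q_n = (Q+Q^{\dagger})/\sqrt{2E_n}$ and uses $Q_n^2 = I$ on the $n$-th level, which is just your operator $D$ normalized so that its square is the identity rather than $2E_n\,\mathrm{Id}$. Your remarks on the grading and on where $n>0$ enters are accurate and match the paper's reasoning.
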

\begin{proof}
For $n>0$, define $Q_n := \left(Q+Q^{\dagger}\right)/\sqrt{2E_n}$, which maps $\mathcal{H}^B_n$ to $\mathcal{H}^F_n$ and vice versa. The relation $\left\{ Q, Q^{\dagger}\right\} = 2H$ implies that $Q_n^2 = \frac{1}{2E_n} 2 E_n = I$ when restricted to the $n$-th energy level. Thus for $n>0$
\begin{align}
\left. Q_n \right| _{\mathcal{H}^B_n} &: \mathcal{H}^B_n \rightarrow \mathcal{H}^F_n\nn\\
\left. Q_n \right| _{\mathcal{H}^F_n}&: \mathcal{H}^F_n \rightarrow \mathcal{H}^B_n
\end{align}
are both invertible operators providing the required isomorphism.
\end{proof}

This means that bosonic and fermionic states at non-zero energies are paired. However this pairing generally fails to hold for the zero-energy supersymmetric ground states, and we define the \textbf{Witten index} to be the difference between the number of bosonic and fermionic supersymmetric ground states.

\begin{defn}
We define the \textbf{Witten index} to be $\dim \mathcal{H}^B_0 - \dim \mathcal{H}^F_0$.

By the isomorphism above, it satisfies: \cite{Mirrorsymmetry, SiLi}
\begin{equation}
\dim \mathcal{H}^B_0 - \dim \mathcal{H}^F_0 = \mathrm{Tr}\left( (-1)^F \right) = \mathrm{Tr}\left( (-1)^F e^{-\beta H} \right) 
\end{equation}
for any $\beta>0$.
\end{defn}

Since $Q^2 = 0$, it is natural to consider the cohomology of $Q$:
\begin{align}
H^B(Q) &= \frac{\mathrm{Ker}\, Q: \mathcal{H}^B \rightarrow \mathcal{H}^F}{\mathrm{Im}\, Q:\mathcal{H}^F \rightarrow \mathcal{H}^B} \nn\\
H^F(Q) &= \frac{\mathrm{Ker}\, Q: \mathcal{H}^F \rightarrow \mathcal{H}^B}{\mathrm{Im}\, Q:\mathcal{H}^B \rightarrow \mathcal{H}^F}.
\end{align}
At any excited level, $QQ^{\dagger} + Q^{\dagger}Q = 2E_n$, so the cohomology is trivial. (Explicitly: if $\ket{\phi} \in \mathcal{H}_n$ satisfies $Q\ket{\phi} = 0$, then $\ket{\phi} = Q\left( \frac{1}{2E_n} Q^{\dagger} \ket{\phi} \right) \in \mathrm{Im}\, Q$.)

However, the cohomology is non-trivial at zero energy, and by corollary \ref{cor:zeroenergy} we see that 
\begin{equation}
H^B(Q) \cong \mathcal{H}^B_0, \qquad H^F(Q) \cong \mathcal{H}^F_0
\end{equation}
so that the Witten index is given by
\begin{equation}
\mathrm{Tr}\left( (-1)^F e^{-\beta H} \right) = \dim H^B(Q) - \dim H^F(Q).
\end{equation}
It has a representation as a path integral: \cite{Mirrorsymmetry, SiLi}
\begin{equation}
\mathrm{Tr}\left( (-1)^F e^{-\beta H} \right) = \int_{PBC} \mathcal{D}\phi \mathcal{D}\psibar \mathcal{D}\psi \exp\left( -S_E(\phi,\psibar,\psi) \right)
\end{equation}
where we have absorbed the (infinite) normalization constant into the path measure and where $PBC$ denotes periodic boundary conditions:
\begin{equation}
\phi(0) = \phi(\beta), \psi(0) = \psi(\beta), \psibar(0) = \psibar(\beta).
\end{equation}
The condition $\phi(0) = \phi(\beta)$ comes from the fact that we are evaluating a trace in a Euclidean time path integral. The conditions $\psi(0) = \psi(\beta), \psibar(0) = \psibar(\beta)$ is a result from the fact that $(-1)^F$ is a fermionic operator and that the trace is cyclical. \cite{Mirrorsymmetry}

\section{Localization} \label{sec:localization}
In this section we examine localization, an important feature of SUSY QM.

First let's revisit the complex field example in section \ref{sec:susyexamples}. We get rid of the time variable to get action
\begin{equation}
S = -\frac{1}{2}(\partial h(x))^2 - \partial^2 h(x) \psi_1\psi_2 = S_0(X) - S_1(X)\psi_1\psi_2.
\end{equation}
This system is invariant under the transformation
\begin{align} 
\delta x &= \epsilon^1 \psi_1 + \epsilon^2\psi_2 \nn\\
\delta \psi_1 &= \epsilon^2 \partial h\nn\\
\delta \psi_2 &= -\epsilon^1\partial h
\end{align}
for infinitesimal Grassmann constants $\epsilon^1, \epsilon^2$.

Because there is no time variable, the path integral exists rigorously to give partition function \cite{Mirrorsymmetry}
\begin{equation}
Z := \frac{1}{\sqrt{2\pi}}\int \dd X \dd\psi_1\dd\psi_2  \exp\left(-S_0(X) + S_1(X)\psi_1\psi_2 \right).
\end{equation}

Now suppose that $\partial h \neq 0$ everywhere. We pick the supersymmetry transformation $\epsilon^1 = \epsilon^2 = -\psi^1 / \partial h$ to eliminate the $\psi_1$ variable:
\begin{equation} \label{eq:actioninvar}
S(X,\psi_1,\psi_2) = S(X', 0, \psi'_2) = S(X'), \qquad X = X' + g(X')\psi_1\psi_2
\end{equation}
where $g(X') = 1/ \partial h(X')$.

Then we evaluate:
\begin{align}
Z &= \frac{1}{\sqrt{2\pi}} \int \dd X \dd\psi_1 \dd\psi_2\, e^{-S(X,\psi_1,\psi_2)} \nn\\
&= \frac{1}{\sqrt{2\pi}} \int \dd X' \dd\psi_1 \dd\psi_2\, e^{-S(X')} \frac{\dd X}{\dd X'} \nn\\
&= \frac{1}{\sqrt{2\pi}} \int \dd X' \dd\psi_1 \dd\psi_2\, e^{-S(X')}( 1 + \partial g(X')\psi_1\psi_2)
\end{align}
where we used equation \ref{eq:actioninvar}. The first term does not survive the Grassmann integration and the final term is a total derivative, so that
\begin{equation}
Z = \frac{1}{\sqrt{2\pi}} \int \dd X'\, \partial g(X')e^{-S(X')} = 0.
\end{equation}

Now if $\partial h = 0 $ for some locus of points $L$, we can consider an $\epsilon$-small neighbourhood $L_{\epsilon}$ and its complement $L_{\epsilon}'$. By our previous argument, the path integral over $L_{\epsilon}'$ vanishes. Thus we see that the path integral is completely determined by an infinitesimal neighbourhood of the fixed points.

This is an example of the general \emph{localization principle}.

\begin{thm} \label{thm:localization} \textbf{(Localization principle)}
Consider a supersymmetric model with supersymmetry group $F$ leaving $S$ invariant. Then the path integral of an $F$-invariant operator $\mathcal{O}$ is completely determined by the loci where the fermionic supersymmetry transformation is zero \cite{Mirrorsymmetry, WittenLocalization}.
\end{thm}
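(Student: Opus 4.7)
The plan is to mimic the computation carried out for the complex-variable warm-up example immediately before the statement. First I would identify the locus $L$ in field space on which all of the fermionic variations $\delta_F \psi = 0$ vanish simultaneously, choose a small tubular neighbourhood $L_\epsilon$ of $L$, and split the path integral as a contribution from $L_\epsilon$ plus a contribution from the complement $L_\epsilon'$. The theorem then reduces to showing that the complement piece vanishes, whereupon sending $\epsilon \to 0$ gives localization onto $L$.

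On $L_\epsilon'$ some fermionic variation $\delta_F \psi$ is nonzero everywhere, and as in the toy example I would exploit this by using an $F$-transformation with anticommuting parameter chosen pointwise as $\epsilon^I \propto -\psi^I/\delta_F\psi^I$ to perform a change of variables inside the functional integral that gauges away a chosen Grassmann direction, say $\psi_1$. Because $S$ and $\mathcal{O}$ are $F$-invariant, the integrand is pulled back to itself under this change of variables; only the measure acquires a Jacobian, which for an infinitesimal Grassmann shift has the schematic form $1 + (\partial g)\psi_1 \psi_2 + \dots$ with $g = 1/\delta_F\psi$ as in the example.

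Expanding the integrand against the Jacobian, the constant-in-$\psi_1$ piece is killed by $\int \dd\psi_1 = 0$, and the piece proportional to $\psi_1\psi_2$ reduces after Grassmann integration to a bosonic total derivative of $g e^{-S}$, which integrates to zero on $L_\epsilon'$ provided no boundary contribution survives from $\partial L_\epsilon'$ (which is reasonable since the boundary sits inside $L_\epsilon$ and can be absorbed there). Consequently the complement contributes nothing and the full path integral is determined by an arbitrarily small neighbourhood of $L$, exactly as claimed.

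The principal obstacle is that this is ultimately a formal manipulation in an infinite-dimensional setting: one must argue that the pointwise gauge choice $\epsilon(\phi(t), \psi(t))$ defines a legitimate change of variables on the space of paths, that the infinite-dimensional super-Jacobian collapses to a total derivative in the same way as in the finite-dimensional warm-up, and that the limit $\epsilon \to 0$ commutes with functional integration. These issues are standardly glossed over in physical derivations of localization, and I shall follow suit when deploying the theorem to prove the Gauss-Bonnet-Chern and Lefschetz fixed-point theorems in section \ref{sec:geomthms}.
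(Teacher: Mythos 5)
Your argument is correct at the level of rigour at which the theorem is stated, but it is not the route the paper takes. You generalize the explicit mechanism of the warm-up example: on the complement of the fixed locus you perform a field-dependent supersymmetry transformation with parameter $\epsilon \propto -\psi/\delta_F\psi$ to gauge away one Grassmann direction, use $F$-invariance of $S$ and $\mathcal{O}$ to pull the integrand back to itself, and reduce the super-Jacobian to a total derivative that vanishes after Grassmann integration. The paper instead follows Witten's more abstract argument: since $S$ and $\mathcal{O}$ are $F$-invariant, the integral over any region on which $F$ acts freely factors as $\mathrm{vol}(F)$ times an integral over the quotient $\mathcal{E}/F$, and $\mathrm{vol}(F)=0$ for a fermionic group because $\int \dd\theta = 0$; the decomposition into an $\epsilon$-neighbourhood of the fixed locus and its complement is then the same as yours. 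The two arguments are really two faces of the same fact --- your change of variables is a concrete coordinatization of the orbit direction, and the vanishing total derivative is the statement $\int \dd\theta = 0$ in disguise --- but each buys something: yours exhibits the cancellation mechanism explicitly and ties directly to the preceding example, while the orbit-volume argument makes transparent that nothing beyond fermionic Berezin integration and a free group action is needed, and so applies uniformly without choosing which $\psi$ to eliminate. Your closing caveats (legitimacy of the field-dependent reparametrization on path space, the boundary term on $\partial L_\epsilon'$, commuting $\epsilon\to 0$ with the functional integral) are exactly the points the paper also glosses over in calling its proof heuristic, so they do not constitute a gap relative to the paper's own standard.
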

\begin{proof}
We give a heuristic proof that explains the main idea. We follow Witten's argument in \cite[Section~$5$]{WittenLocalization}.

Let $\mathcal{E}$ be the function space we are integrating over. Suppose $F$ has no fixed points; then we can consider the quotient $\mathcal{E}/F$, which is a smooth space. As $\mathcal{O}$ and $S$ are $F$-invariant, the integral equals
\begin{equation}
\int_{\mathcal{E}}  e^{iS}\mathcal{O} = \mathrm{vol}(F) \int_{\mathcal{E}/F}  e^{iS}\mathcal{O} 
\end{equation}
where vol$(F)$ is the volume of the group $F$, which is $0$ for a fermionic group as
\begin{equation}
\int \dd \theta = 0
\end{equation}
for a fermionic variable $\theta$. Thus if $F$ is fermionic:
\begin{equation}
\int_{\mathcal{E}}e^{iS} \mathcal{O}  = 0.
\end{equation}

Now suppose $F$ has some fixed point locus $\mathcal{E}_0$. Let $\mathcal{C}_{\epsilon}$ be an $\epsilon$-small neighbourhood of $\mathcal{E}_0$ and $\mathcal{C}'_{\epsilon}$ its complement: $\mathcal{E} = \mathcal{C}_{\epsilon} \cup \mathcal{C}'_{\epsilon}$. The path integral splits into one over $\mathcal{C}_{\epsilon}$ and one over $\mathcal{C}'_{\epsilon}$. By our previous argument:
\begin{equation}
\int_{\mathcal{C}'_{\epsilon}}e^{iS} \mathcal{O} = 0.
\end{equation}
Therefore
\begin{equation}
\int_{\mathcal{E}}e^{iS} \mathcal{O}  = \int_{\mathcal{C}'_{\epsilon}}e^{iS} \mathcal{O}.
\end{equation}
Now let $\epsilon \rightarrow 0$ to get the result.
\end{proof}

We should compare this with the stationary phase approximation, where we found that for $S \gg \hbar$ the dominant contribution to the path integral comes from the classical path(s). The localization principle is of a much stronger form though, stating that the path integral is \emph{completely determined} by certain configurations. In other words: the extra structure of supersymmetry allows us to calculate more quantities exactly.

\section{Supersymmetry on Riemannian manifolds and geometrical theorems} \label{sec:geomthms}
In this section we look at an SQM model on Riemannian manifolds. We shall see how notions in supersymmetry are related to notions in geometry and shall use the path integral with supersymmetry to prove two geometrical theorems: the \emph{Gauss-Bonnet-Chern} and the \emph{Lefschetz fixed-point} theorems, which have far-reaching implications beyond physics. The proofs we give are not standard ones; in fact, these theorems were proven before supersymmetry was invented! However, once the SUSY machinery is in place, the proofs are remarkably simple, only requiring some long but elementary calculations. Furthermore, the method presented here can be extended to give a proof of the Atiyah-Singer index theorem (of which our theorems are special cases), for which ``standard" proofs not involving supersymmetry are not accessible to physicists.\\
\\
Consider a compact, oriented, Riemannian manifold $(M,g)$ of dimension $n$. We consider the SQM model with Lagrangian \cite{Mirrorsymmetry, ooguri, SiLi} \footnote{Note the sign before $R_{IJKL}$ is opposite to that in \cite{SiLi} and \cite{ooguri} as a result of a different sign convention for $R_{IJKL}$. Symmetries of $R_{IJKL}$ show our Lagrangian matches that in \cite{Mirrorsymmetry}.}
\begin{equation} \label{eq:manifoldlagrangian}
L = \frac{1}{2} g_{IJ} \dot{\phi}^I \dot{\phi}^J + \frac{i}{2} g_{IJ}\left( \psibar^I D_t \psi^J - D_t \psibar^I \psi^J \right) +\frac{1}{4} R_{IJKL} \psi^I \psi^J \psibar^K \psibar^L
\end{equation}
where $\phi^I$ are $n$ bosonic fields, $\psi^I$ and $\psibar^I$ are $n$ fermionic fields and
\begin{equation}
D_t \psi^I = \partial_t \psi^I + \Gamma^I_{\, JK} \dot{\phi}^J \psi^K
\end{equation}
where the $\Gamma^I_{\, JK}$ are Christoffel symbols associated to the Levi-Civita connection.

\begin{propn}
The model above is invariant under the supersymmetry \cite{Mirrorsymmetry, ooguri,SiLi}
	\begin{align}
	\delta \phi^I &= \epsilon \psibar^I - \epsbar \psi^I \nn\\
	\delta \psi^I &= \epsilon \left(i\dot{\phi}^I - \Gamma^I_{\, JK} \psibar^J \psi^K\right)\nn\\
	\delta \psibar^I &= \epsbar \left(-i\dot{\phi}^I - \Gamma^I_{\, JK} \psibar^J \psi^K\right)
	\end{align}
\end{propn}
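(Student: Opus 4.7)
The plan is to compute the variation $\delta L$ of each of the three terms in the Lagrangian directly and show that $\delta L$ equals a total time derivative, so that $\delta S = \int \dt\, \delta L = 0$. This is a direct generalization of the flat-space computation carried out in the proof of Proposition \ref{propn:susyinv}; the metric $g_{IJ}(\phi)$, the covariant time derivative $D_t$, and the curvature term are precisely the ingredients required to promote that flat-space invariance to a curved target manifold.

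To keep the algebra manageable, I would work in Riemann normal coordinates about an arbitrary point $\phi_0 = \phi(t_0)$. At $\phi_0$ one has $g_{IJ}(\phi_0) = \delta_{IJ}$, $\partial_M g_{IJ}(\phi_0) = 0$, and $\Gamma^I_{JK}(\phi_0) = 0$, while $\partial_M \Gamma^I_{JK}(\phi_0)$ is fixed by the Riemann tensor via the final proposition of Section \ref{sec:diffgeom}. Since $\delta L$ is a coordinate scalar and $t_0$ is arbitrary, verifying $\delta L = \tfrac{d}{dt}(\cdots)$ at this single point establishes it everywhere.

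I would then vary term by term. The bosonic kinetic piece $\tfrac{1}{2} g_{IJ} \dot\phi^I \dot\phi^J$ varies essentially as in the flat case because $\delta g_{IJ} = \partial_M g_{IJ}\,\delta\phi^M$ vanishes at $\phi_0$, producing $\delta_{IJ}\dot\phi^I\,\delta\dot\phi^J$. The fermion kinetic term $\tfrac{i}{2}g_{IJ}(\psibar^I D_t \psi^J - D_t\psibar^I\,\psi^J)$ is the delicate one: although $D_t$ reduces to $\partial_t$ at $\phi_0$, the variation $\delta$ still acts on the $\Gamma$ inside $D_t$ through $\delta\phi^M\,\partial_M\Gamma^I_{JK}(\phi_0)$, and the $\Gamma^I_{JK}\psibar^J\psi^K$ pieces inside $\delta\psi$ and $\delta\psibar$ also contribute once paired with the leading $\partial_t$. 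These pieces reproduce the flat-space total derivative of Proposition \ref{propn:susyinv} plus additional $\partial\Gamma\cdot\psi\psi\psibar$-type terms. Finally, the curvature term $\tfrac{1}{4}R_{IJKL}\psi^I\psi^J\psibar^K\psibar^L$ varies only through the leading $i\dot\phi$ pieces of $\delta\psi, \delta\psibar$ (the $\Gamma$ pieces vanish at $\phi_0$ and $\delta R$ enters only at higher order).

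The main obstacle is bookkeeping and the use of index symmetries. One must use $\Gamma^I_{JK}=\Gamma^I_{KJ}$ to symmetrize the $\partial\Gamma$ contributions, together with the antisymmetries $R_{IJKL} = -R_{JIKL} = -R_{IJLK}$ and the first Bianchi identity $R_{I[JKL]} = 0$, to show that the stray $\partial\Gamma$ terms from the fermion kinetic variation combine precisely with those from the variation of the curvature term to assemble into the Riemann tensor and then cancel. The only surviving contribution is a total $\tfrac{d}{dt}(\cdots)$, analogous to the expression appearing at the end of the proof of Proposition \ref{propn:susyinv} but now metric-corrected. The specific coefficient $\tfrac{1}{4}$ in the curvature term is fixed by precisely this cancellation requirement. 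Careful sign-tracking from fermion anticommutativity (in particular reordering factors like $\psi^J\psibar^K$ to match the index ordering of $R_{IJKL}$) is where almost all the work lies, but no genuinely new identities beyond those above are required.
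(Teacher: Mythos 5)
Your proposal follows exactly the route the paper indicates: the paper itself omits the computation, stating only that the proof ``is similar to that of Proposition \ref{propn:susyinv} and is most easily carried out using Riemann normal coordinates,'' which is precisely your strategy of mimicking the flat-space variation at a point where $\Gamma^I_{\,JK}=0$ and letting the $\partial\Gamma$ terms assemble into the Riemann tensor via the symmetries and first Bianchi identity. Your sketch is a correct and usefully more explicit account of the argument the paper leaves to the reader.
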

\begin{proof}
The proof is similar to that of proposition \ref{propn:susyinv} and is most easily carried out using Riemann normal coordinates; we omit it here.
\end{proof}

The supercharges are \cite{Mirrorsymmetry, SiLi}
\begin{align}
Q &= i g_{IJ}\psibar^I \dot{\phi}^J = i \psibar^I P_I \nn\\
Q^{\dagger} &= - i g_{IJ}\psi^I \dot{\phi}^J = - i \psi^I P_I
\end{align}
where $P_I = g_{IJ} \dot{\phi}^J$ is the momentum conjugate to $\phi^I$.

The fermion number operator is 
\begin{align}
F = g_{IJ} \psibar^I \psi^J.
\end{align}

We quantize the system by imposing canonical (anti-)commutation relations
\begin{align}
\left[ \phi^I, P_J \right] &= \delta^I_J \nn\\
\left\{ \psi^I,\psibar^J \right\} &= g^{IJ}
\end{align}
with all other (anti-)commutators vanishing. \\
\\
The Hilbert space can be realized as the space of differential forms $\mathbb{C} \otimes \Omega(M)$ with the inner product \cite{Mirrorsymmetry}
\begin{equation}
(\omega_1, \omega_2 ) = \int_M \bar{\omega}_1 \wedge * \omega_2.
\end{equation}

With this realization the observables are
\begin{align} \label{eq:identification}
\phi^I &= x^I \times \nn\\
P_I &= -i \nabla_I \nn\\
\psibar^I &= \dd x^I \times \nn\\
\psi^I &= g^{IJ}\, i_{\partial_J}
\end{align}
where $\nabla$ is the Levi-Civita connection and $i_V$ denotes contraction of a differential form with vector field $V$.

Furthermore we have the correspondence:
\begin{eqnarray}
\ket{0} & \leftrightarrow & 1 \nn\\
\psibar^I\ket{0} &\leftrightarrow & \dd x^I \nn\\
\psibar^I\psibar^J\ket{0} &\leftrightarrow & \dd x^I\wedge \dd x^J \nn\\
 & \vdots & \nn\\
\psibar^I\dots\psibar^n\ket{0} &\leftrightarrow & \dd x^I\wedge \dots \wedge \dd x^n.
\end{eqnarray}

Most importantly to us, the supercharges and Hamiltonian are
\begin{eqnarray} \label{eq:Qcorr}
Q & \leftrightarrow &  \dd x^I \wedge \nabla_I = \dd \nn\\
Q^{\dagger}& \leftrightarrow &  \dddagger\nn\\
H =\frac{1}{2} \left\{Q,Q^{\dagger}\right\} & \leftrightarrow & \frac{1}{2}\Delta = \frac{1}{2} \left(\dd\dddagger + \dddagger\dd\right).
\end{eqnarray}

Thus the supersymmetric ground states correspond to harmonic forms. Furthermore, the grading by the fermion number operator $F$ corresponds to grading by form degree
\begin{equation}
\mathcal{H}_0 = \mathcal{H}(M,g) = \bigoplus_{p=0}^n \mathcal{H}^p(M,g).
\end{equation}

In section \ref{sec:SUSYstructure}, we saw that the Witten index can be found from the $Q$-cohomology:
\begin{equation}
\mathrm{Tr } (-1)^F = \sum_{p=0}^n (-1)^p \dim \mathcal{H}^p(M,g).
\end{equation}
Equation \ref{eq:Qcorr} implies that the $Q$-cohomology corresponds to the De-Rham cohomology, so that:
\begin{equation}
\boxed{ \mathrm{Tr } (-1)^F = \sum_{p=0}^n (-1)^p \dim \mathcal{H}^p(M,g) = \sum_{p=0}^n (-1)^p \dim H^p_{DR}(M,g) = \chi(M).} 
\end{equation}
Hence the Witten index is equal to the Euler number of the manifold!

This will be the starting point in proving the theorems in the next section. 

\subsection{Gauss-Bonnet-Chern theorem}
First we consider the Gauss-Bonnet-Chern theorem, a generalization of the Gauss-Bonnet theorem. It is primarily interesting as it relates a local quantity of a manifold, the curvature, to a global topological invariant, the Euler number. We follow the proof outline in \cite{ooguri, SiLi}, filling in many details.

\begin{thm}\textbf{(Gauss-Bonnet-Chern theorem)}
Consider a compact, oriented, Riemannian manifold $(M,g)$ of dimension $n$. Then if $n$ is odd:
\begin{equation} \label{eq:oddgaussbonnet}
\chi(M) = 0
\end{equation}
and if $n=2m$ is even: \cite{allenweil, ooguri, SiLi}  \footnote{The sign convention for $R_{IJKL}$ means that in \cite{SiLi} and \cite{ooguri}, the formula gains a prefactor $(-1)^m$.\\ Furthermore, they are missing a factor of $2^m$. \cite{SiLi} references Chern's original paper ``On the curvatura integra in a Riemannian manifold.", \emph{Ann. Math. 46 , 674(1942)}, which contains a version of the theorem using curvature 2-forms. In translating to an integral over $\dd V$, they forget a factor of $2^m$ coming from the Hodge dual. Our formula matches that in \cite{allenweil} and is seen to be correct by verification for $m=1$.}
\begin{equation} \label{eq:evengaussbonnet}
\chi(M) = \frac{1}{2^{3m}m!\pi^m} \int_M \dd V \epsilon^{I_1 J_1\dots I_m J_m} \epsilon^{K_1 L_1\dots K_m L_m} R_{I_1 J_1 K_1 L_1} \dots R_{I_m J_m K_m L_m}.
\end{equation}
In the case $n=2$, this reduces to the more elementary result commonly referred to as the Gauss-Bonnet theorem:
\begin{equation}
2\pi \chi(M) = \int_M K \dd A
\end{equation}
where $K = R/2 = \frac{1}{2} g^{IK}g^{JL} R_{IJKL}$ is the Gaussian curvature.
\end{thm}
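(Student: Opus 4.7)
The starting point is the identity $\chi(M) = \mathrm{Tr}((-1)^F e^{-\beta H})$ established above, which has path-integral representation
\begin{equation}
\chi(M) = \int_{PBC} \mathcal{D}\phi\,\mathcal{D}\psi\,\mathcal{D}\psibar\, \exp(-S_E[\phi,\psi,\psibar])
\end{equation}
with all fields periodic on the Euclidean time interval $[0,\beta]$. Crucially, the Witten index is independent of $\beta$ since non-zero energy states pair up by supersymmetry, so I am free to take $\beta \to 0$. In this limit the action strongly suppresses all non-constant modes and we expect (by the localisation principle of theorem \ref{thm:localization}, or more concretely by the scaling argument below) the path integral to localise on constant configurations $\phi_0^I$, $\psi_0^I$, $\psibar_0^I$.

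To execute the localisation I would split each field into a zero mode and a fluctuation of vanishing time integral, $\phi^I = \phi_0^I + \eta^I$, $\psi^I = \psi_0^I + \tilde\psi^I$, $\psibar^I = \psibar_0^I + \tilde\psibar^I$, work in Riemann normal coordinates centred at $\phi_0$ (so $g_{IJ}(\phi_0) = \delta_{IJ}$ and $\Gamma(\phi_0) = 0$), and rescale $\eta \to \sqrt{\beta}\eta$, $\tau \to \beta\tau$ to make the small-$\beta$ scaling transparent. At leading order the fluctuation action reduces to that of a free boson plus a free fermion on the circle, and the curvature coupling descends on the zero modes to $\tfrac{\beta}{4} R_{IJKL}(\phi_0)\psi_0^I\psi_0^J\psibar_0^K\psibar_0^L$. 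The Gaussian integrals over non-zero modes then yield $\Det(-\partial_\tau^2)$ in both the bosonic and the fermionic sector (with PBC), and these cancel by supersymmetry, leaving
\begin{equation}
\chi(M) = \int_M \dd V \int \dd^n\psi_0\,\dd^n\psibar_0\, \exp\!\left(\tfrac{1}{4}R_{IJKL}\psi_0^I\psi_0^J\psibar_0^K\psibar_0^L\right)
\end{equation}
once the $\beta$-dependence has been absorbed into a rescaling of the Grassmann zero modes.

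For odd $n$, every term in the Taylor expansion of this exponential contains an even number of $\psi_0$'s (and separately of $\psibar_0$'s), so the measure $\dd^n\psi_0\,\dd^n\psibar_0$ cannot be saturated; hence $\chi(M) = 0$, proving equation \ref{eq:oddgaussbonnet}. For $n = 2m$, only the $m$-th Taylor term contributes, and applying the Berezin rule $\int \dd^n\psi_0\,\psi_0^{I_1}\cdots\psi_0^{I_n} = \epsilon^{I_1\ldots I_n}$ (and likewise for $\psibar_0$) reproduces precisely the double-$\epsilon$ contraction of $m$ curvature tensors appearing in equation \ref{eq:evengaussbonnet}, with the overall prefactor $1/(2^{3m}m!\pi^m)$ assembled from the $1/m!$ of the Taylor series, the $1/4^m$ from the exponent, and the powers of $2\pi$ produced by the (regulated) Gaussian path-integral normalisations. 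The principal obstacle is exactly this bookkeeping: verifying that the boson and fermion non-zero-mode determinants truly cancel, that $O(\sqrt\beta)$ corrections arising from curvature-dependent couplings between zero modes and fluctuations vanish in the limit (they involve odd numbers of fluctuations and drop out of Gaussian integrals at the relevant order), and that every numerical factor is recovered correctly --- exactly where, as the preface notes, previous accounts have slipped.
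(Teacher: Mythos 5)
Your proposal follows essentially the same route as the paper's proof: the Witten index written as a periodic-boundary-condition Euclidean path integral, $\beta\to 0$ localisation onto constant modes in Riemann normal coordinates, cancellation of the non-zero-mode fluctuation integrals, and a finite-dimensional Berezin integral over the fermionic zero modes that vanishes for odd $n$ and produces the $\epsilon\epsilon R\cdots R$ contraction with prefactor $1/(2^{3m}m!\pi^m)$ for $n=2m$. The one point to tighten is the determinant cancellation you defer to ``bookkeeping'': the fermion kinetic operator is first order, so the combined non-zero-mode contribution is the phase product $\prod_{k\neq 0}\left(-i\,k/|k|\right)$ rather than a literal ratio of two copies of $\Det\left(-\partial_\tau^2\right)$, and the paper shows it equals $1$ only after zeta-regularizing the infinite products of constants.
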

\begin{proof}
Consider the Lagrangian:
\begin{equation}
L = \frac{1}{2} g_{IJ} \dot{\phi}^I \dot{\phi}^J + i g_{IJ}\psibar^I D_t \psi^J + \frac{1}{4} R_{IJKL} \psi^I \psi^J \psibar^K \psibar^L
\end{equation}
which differs from that in equation \ref{eq:manifoldlagrangian} by a total derivative: $\frac{i}{2} D_t \left( \psibar^I \psi^J \right)$ \cite{SiLi}. Hence it is invariant under the same supersymmetry transformation:
\begin{align} \label{eq:gaussbonnettrans}
	\delta \phi^I &= \epsilon \psibar^I - \epsbar \psi^I \nn\\
	\delta \psi^I &= \epsilon \left(i\dot{\phi}^I - \Gamma^I_{\, JK} \psibar^J \psi^K\right)\nn\\
	\delta \psibar^I &= \epsbar \left(-i\dot{\phi}^I - \Gamma^I_{\, JK} \psibar^J \psi^K\right).
\end{align}

The supercharges etc. are also unaffected, so that the Witten index equals the Euler number:
\begin{equation}
\mathrm{Tr } (-1)^F = \mathrm{Tr } \left( (-1)^F e^{-\beta H} \right) = \chi(M).
\end{equation}

We shall evaluate this via a path integral 
\begin{equation}
 \mathrm{Tr } \left( (-1)^F e^{-\beta H} \right) = \int_{PBC} \mathcal{D}\phi \mathcal{D}\psibar \mathcal{D}\psi \exp\left( -S_E(\phi,\psibar,\psi) \right).
\end{equation}

The Euclidean action with periodic boundary conditions is
\begin{equation}
S_E = \int_0^\beta \dt \left( \frac{1}{2} g_{IJ} \dot{\phi}^{I}\dot{\phi}^{J} + g_{IJ}\psibar^I D_t \psi^J -\frac{1}{4} R_{IJKL} \psi^I \psi^J \psibar^K \psibar^L \right).
\end{equation}

By the localization principle, the path integral localizes to the configurations for which the RHS of the fermionic part of transformation \ref{eq:gaussbonnettrans} vanishes. These are exactly the constant modes \cite{SiLi}.

Alternatively we could rescale $t = \beta \tau$ and $\psi \rightarrow \beta^{-1/4} \psi$ to get \cite{ooguri}
\begin{equation}
S_E = \int_0^1 \dd \tau \left(  \frac{1}{2\beta} g_{IJ} \frac{\dd\phi^{I}}{\dd\tau}\frac{\dd\phi^{j}}{\dd\tau} + \frac{1}{\sqrt{\beta}}g_{IJ}\psibar^I D_{\tau} \psi^J -\frac{1}{4} R_{IJKL} \psi^I \psi^J \psibar^K \psibar^L \right).
\end{equation}
Now use independence of the Witten index from $\beta$ to take the limit $\beta\rightarrow 0$ and see the path integral localizes to constant modes.\\
\\
Because of the periodic boundary conditions, we can do a Fourier expansion of the variables around these constant modes:
\begin{align}
\phi^I &= x_0^I + \sqrt{\beta} \sum_{k\neq0} a_k^I \exp\left(\frac{2\pi i k }{\beta}t\right)\nn\\
\psi^I &= \beta^{1/4}\psi_0^I +  \sum_{k\neq0} \psi_k^I \exp\left(\frac{2\pi i k}{\beta}t\right)\nn\\
\psibar^I &= \beta^{1/4}\psibar_0^I +  \sum_{k\neq0} \psibar_k^I \exp\left(\frac{2\pi i k}{\beta}t\right).
\end{align}
where the factors of $\beta$ are included to ensure independence of the path measure from $\beta$.

The path measure then becomes 
\begin{align}
\mathcal{D}\phi &\rightarrow \frac{\dd V}{(2\pi)^{n/2}} \prod_{k\neq0} \frac{\dd^n a_k^I}{(2\pi)^{n/2}} \nn\\
\mathcal{D}\psibar &\rightarrow \dd^n \psibar_0^I \prod_{k\neq0} \dd^n \psibar_k^I \nn\\
\mathcal{D}\psi &\rightarrow \dd^n \psi_0^I \prod_{k\neq0} \dd^n \psi_k^I
\end{align}
Note this is different for the bosonic and fermionic variables, because the fermionic variables do not pick up a $\sqrt{2\pi}$ under a Fourier transform due to the rules of Grassmann integration.\\
\\
As the path integral is invariant under coordinate transformations, we can work in Riemann normal coordinates centered around $x_0^I$ to see that\cite{SiLi}
\begin{align}
S_E &= \int_0^{\beta} \dt \left[ \frac{1}{\beta} \left( -\sum_{k\neq0} \frac{1}{2}|a_k^I|^2 (2\pi k i)^2 + \sum_{k\neq0} 2\pi k i \psibar_k^I\psi^I_k + \right.\right. \nn\\
&\qquad\qquad\left.\left. + \frac{1}{4} R_{IJKL}(x_0^I) \psi_0^I \psi_0^J \psibar_0^K \psibar_0^L \right) + \mathcal{O}(1) \right] \nn\\
&= \sum_{k\neq0} \left( 2\pi^2 k^2 |a_k^I|^2  + 2\pi k i \psibar_k^I\psi_k^I \right) + \frac{1}{4} R_{IJKL}(x_0^I)\psi_0^I \psi_0^J \psibar_0^K \psibar_0^L + \mathcal{O}(\beta).
\end{align}
There are no $(\bar{a}_k, a_j)$ or $(\bar{\psi}_k, \psi_j)$ cross-terms as these are multiples of $e^{2\pi(j-k)t/\beta}$, which do not survive the $t$-integral.

In the limit $\beta \rightarrow 0$, the path integral is then
\begin{align} \label{eq:wittenpath}
\int_{PBC} \mathcal{D}\phi \mathcal{D}\psibar \mathcal{D}\psi &\exp\left( -S_E(\phi,\psibar,\psi) \right) = \int \prod_{k\neq0} \frac{\dd^n a_k^I}{(2\pi)^{n/2}} \exp \left( -\sum_{k\neq0}  2\pi^2 k^2 |a_k^I|^2 \right) \nn\\
& \times \int \prod_{k\neq0} \dd^n \psibar_k^I \dd^n \psi_k^I \exp \left( -\sum_{k\neq0} 2\pi k i \psibar_k^I\psi_k^I \right) \nn\\
& \times \int \frac{\dd V}{(2\pi)^{n/2}} \dd^n \psibar_0^I  \dd^n\psi_0^I \exp\left(\frac{1}{4} R_{IJKL}(x_0^I) \psi_0^I \psi_0^J \psibar_0^K \psibar_0^L \right).
\end{align}
Consider the integration over the non-zero modes:
\begin{align}
&\int \prod_{k\neq0} \frac{\dd^n a_k^I}{(2\pi)^{n/2}} \exp \left( -\sum_{k\neq0}  2\pi^2 k^2 |a_k^I|^2 \right) \int \prod_{k\neq0} \dd^n \psibar_k^I \dd^n \psi_k^I \exp \left( -\sum_{k\neq0} 2\pi k i \psibar_k^I\psi_k^I \right) \nn\\
&=  \prod_{k\neq0} \int\frac{\dd^n a_k^I}{(2\pi)^{n/2}} \exp \left( -2\pi^2 k^2 |a_k^I|^2 \right) \times \prod_{k\neq0}  \int\dd^n \psibar_k^I \dd^n \psi_k^I \exp \left( - 2\pi k i \psibar_k^I\psi_k^I \right) \nn\\
\end{align}
We show this integral equals $1$ in one dimension; the $n$-dimensional product is simply the $n$-th power of this and is still $1$.\\
\\
Applying the standard Gaussian integrals from appendix \ref{sec:fresnel}, we get
\begin{align} \label{eq:infprods}
 \prod_{k\neq0} \int\frac{\dd a_k}{\sqrt{2\pi}} \exp &\left( -2\pi^2 k^2 |a_k|^2 \right) \times \prod_{k\neq0}  \int\dd \psibar_k \dd \psi_k \exp \left( - 2\pi k i \psibar_k\psi_k \right) \nn\\
&= \prod_{k\neq0} \left(\frac{1}{\sqrt{2\pi}} \sqrt{\frac{\pi}{2\pi^2k^2}} \right)\prod_{k\neq0} (-2\pi k i) \nn\\
&= \prod_{k\neq0} \left( -i \frac{k}{|k|} \right) =\left(\prod_{j\neq0} (-1)\right)\left( \prod_{k \neq 0 } i \right)\left(\prod_{m \neq 0} \mathrm{sgn}(m)\right).
\end{align}

We now use zeta-regularization, so that
\begin{equation}
\prod_{k \geq 1} b = \frac{1}{\sqrt{b}}
\end{equation}
for a constant $b$. Then
\begin{equation}
\prod_{k \neq 0} b = \left( \prod_{k \geq 1} b \right)^2 = \frac{1}{b}.
\end{equation}
Thus
\begin{align}
&\prod_{j\neq0} (-1) = -1 \nn\\
&\prod_{k \neq 0 } i = \frac{1}{i} \nn\\
&\prod_{m \neq 0} \mathrm{sgn}(m) = \prod_{m \leq -1} (-1) = \frac{1}{\sqrt{-1}} = \frac{1}{i}.
\end{align}
Putting this in equation \ref{eq:infprods} gives
\begin{equation}
\prod_{k\neq0} \int\frac{\dd a_k}{\sqrt{2\pi}} \exp \left( -2\pi^2 k^2 |a_k|^2 \right) \times \prod_{k\neq0}  \int\dd \psibar_k \dd \psi_k \exp \left( - 2\pi k i \psibar_k\psi_k \right) = \frac{-1}{i^2} = 1
\end{equation}
as claimed.\\
\\
Thus we see in equation \ref{eq:wittenpath} that:
\begin{align} 
& \int_{PBC} \mathcal{D}\phi \mathcal{D}\psibar \mathcal{D}\psi \exp\left( -S_E(\phi,\psibar,\psi) \right) \nn\\
& = (2\pi)^{-n/2} \int_M \dd V \int \dd^n\psibar_0^I\dd^n\psi_0^I \exp \left(\frac{1}{4} R_{IJKL}(x_0^I) \psi_0^I \psi_0^J \psibar_0^K \psibar_0^L \right).
\end{align}
By the rules of Grassmann integration, only terms of the form $\psi_0^1 \dots \psi_0^n \psibar_0^1 \dots \psibar_0^n$ in the Taylor expansion will contribute. \\
\\
There are two cases:
\begin{itemize}
\item If $n$ is odd, there is no such term, so that
		\begin{equation}
		\int_{PBC} \mathcal{D}\phi \mathcal{D}\psibar \mathcal{D}\psi \exp\left( -S_E(\phi,\psibar,\psi) \right) = 0.
		\end{equation}
		Therefore
		\begin{equation}
		\chi(M) = \mathrm{Tr }(-1)^F =\int_{PBC} \mathcal{D}\phi \mathcal{D}\psibar \mathcal{D}\psi \exp\left( -S_E(\phi,\psibar,\psi) \right) = 0
		\end{equation}
		which proves equation \ref{eq:oddgaussbonnet}.
\item If $n = 2m$ is even, the term of power $m$ in the exponential gives the only non-zero contribution. From expanding the exponential, it has a prefactor $\frac{1}{m!} \left(\frac{1}{4} \right)^m = \frac{1}{2^{2m}m!}$, and the terms are of the form
		\begin{align}
		\int \dd \psi_0^1 \dots \dd \psi_0^n \dd \psibar_0^1& \dots \dd \psibar_0^n\,  R_{I_1 J_1 K_1 L_1} \dots R_{I_m J_m K_m L_m}\nn\\
		& \times \psi_0^{I_1}\psi_0^{J_1}\psibar_0^{K_1}\psibar_0^{K_1} \dots \psi_0^{I_m}\psi_0^{J_m}\psibar_0^{K_m}\psibar_0^{K_m}.
		\end{align}
		By the rules of Grassmann integration, the ordering matters and the integral picks up a factor of $\mathrm{sgn}(\tau)$, where $\tau$ is the permutation $\tau = \tau_{IJ}\tau_{KL}$, where e.g.
		\begin{equation}
		\tau_{IJ} = \begin{pmatrix} 1 & 2 & 3 & 4 & \dots & 2m -1 & 2m \\ I_1 & J_1 & I_2 & J_2& \dots & I_m & J_m \end{pmatrix} .
		\end{equation}
		In tensor form, this is represented by the Levi-Civita tensor: 
		$\epsilon^{I_1J_1\dots I_mJ_m}\epsilon^{K_1L_1\dots K_mL_m} $.\\
		\\
		Putting this all together:
		\begin{align}
			\chi(M) & = \mathrm{Tr }(-1)^F =\int_{PBC} \mathcal{D}\phi \mathcal{D}\psibar \mathcal{D}\psi \exp\left( -S_E(\phi,\psibar,\psi) \right) \nn\\
			&= (2\pi)^{-n/2} \int_M \dd V \int\dd^n\psibar_0^I \dd^n\psi_0^I \exp \left(\frac{1}{4} R_{IJKL}(x_0^I) \psi_0^I \psi_0^J \psibar_0^K \psibar_0^L \right) \nn\\
			&= (2\pi)^{-m} \int_M \dd V \frac{1}{2^{2m}m!} \epsilon^{I_1J_1\dots I_mJ_m}\epsilon^{K_1L_1\dots K_mL_m}R_{I_1 J_1 K_1 L_1} \dots R_{I_m J_m K_m L_m} \nn\\
			&= \frac{1}{2^{3m}m!\pi^m} \int_M \dd V \epsilon^{I_1J_1\dots I_mJ_m}\epsilon^{K_1L_1\dots K_mL_m} R_{I_1 J_1 K_1 L_1} \dots R_{I_m J_m K_m L_m} 
		\end{align}
		which proves equation \ref{eq:evengaussbonnet}. Note that even though we used Riemann normal coordinates, this is a tensor identity and hence holds in all coordinates.
\end{itemize}
Thus we have shown the Gauss-Bonnet-Chern theorem.\\
\\
For the $2$-dimensional case: set $m=1$ in the formula above to get
\begin{equation} \label{eq:surfcurv}
2\pi \chi(M) = \frac{1}{4} \int_M \dd A\, \epsilon^{IJ}\epsilon^{KL} R_{IJKL}.
\end{equation}
Now use $\epsilon^{00} = 0 = \epsilon^{11}$ and $\epsilon^{01} = 1, \epsilon^{10} = -1$ with the symmetries $R_{IJKL} = -R_{JIKL} = - R_{IJLK}$ to get:
\begin{equation} \label{eq:surfcurv1}
\epsilon^{IJ}\epsilon^{KL} R_{IJKL} = R_{0101} + R_{1010} - R_{0110} - R_{1001} = 4 R_{0101}.
\end{equation}
Also, in Riemann normal coordinates $g^{IJ}(x_0) = \delta^{IJ}$, so that the scalar curvature $R$ satisfies
\begin{equation}\label{eq:surfcurv2}
R = g^{IK} g^{JL} R_{IJKL} = \delta^{IK} \delta^{JL} R_{IJKL} = R_{0000} + R_{0101} + R_{1010} + R_{1111} = 2R_{0101}.
\end{equation}
Combining equations \ref{eq:surfcurv1} and \ref{eq:surfcurv2} yields $K = R/2 = \frac{1}{4}\epsilon^{IJ}\epsilon^{KL} R_{IJKL}$. Thus equation \ref{eq:surfcurv} implies that
\begin{equation}
2\pi \chi(M) = \frac{1}{4} \int_M \epsilon^{IJ}\epsilon^{KL} R_{IJKL}\dd A  = \int K \dd A.
\end{equation}
as claimed.
\end{proof}

\subsection{Lefschetz fixed-point theorem}
Now we prove the Lefschetz fixed-point theorem, which relates the index of fixed points of a smooth map $f:M\rightarrow M$ (a local quantity) to a global quantity $\Lambda_f$. The proof is loosely based on the proof in \cite{SiLi}. However, it contains some major errors that we believe have been corrected here. A large part of the proof of Gauss-Bonnet carries over.

\begin{thm}\textbf{(Lefschetz fixed-point theorem)}
Let $f:M\rightarrow M$ be a smooth map from a compact, oriented, Riemannian manifold $M$ to itself with a finite number of (necessarily isolated) fixed points. Define 
\begin{equation}
\Lambda_f = \sum_{q\geq0} (-1)^q \mathrm{Tr} \left( f^*_q \right)
\end{equation}
where $f^*_q = f^* | H^q_{DR}(M)$ is the restriction of the pullback $f^*$ to the $q$-th De Rham cohomology $H^q_{DR}(M)$.

Further define for a fixed point $p$ of $f$ the index $i(f,p)$:
\begin{equation}
i(f,p) = \mathrm{sgn} \left(\det \left( D_p f - \mathds{1} \right)\right).
\end{equation}

Then
\begin{equation}
\Lambda_f = \sum_{\mathrm{fixed}\,\mathrm{points}\, p} i(f,p).
\end{equation}
\end{thm}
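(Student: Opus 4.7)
The plan is to modify the Gauss-Bonnet-Chern proof by inserting the pullback operator $f^*$ into the Witten-index trace. My starting point is the identity
\begin{equation}
\Lambda_f \;=\; \mathrm{Tr}\bigl((-1)^F f^* e^{-\beta H}\bigr),
\end{equation}
valid for every $\beta>0$. To see this I would observe that under the correspondence $Q \leftrightarrow \dd$ the pullback commutes with $Q$, hence intertwines with the SUSY isomorphism $Q_n:\mathcal{H}^B_n\to\mathcal{H}^F_n$ at each positive energy level; the bosonic and fermionic traces of $f^*$ therefore agree at every $E_n>0$ and cancel in the supertrace. Only the zero-energy sector (harmonic forms, isomorphic by Hodge theory to $H^*_{\mathrm{DR}}(M)$) contributes, giving exactly the alternating-trace definition of $\Lambda_f$.

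Next I would translate this into a path integral. Inserting $f^*$ into the Euclidean trace converts periodic boundary conditions into \emph{twisted} ones: using the dictionary \ref{eq:identification},
\begin{equation}
\phi(\beta)=f(\phi(0)), \qquad \psi^I(\beta)=(Df)^I{}_J\,\psi^J(0), \qquad \psibar^I(\beta)=(Df)^I{}_J\,\psibar^J(0).
\end{equation}
Since $\Lambda_f$ is $\beta$-independent, I take $\beta\to 0$ and invoke the localization principle (theorem \ref{thm:localization}). The vanishing of the fermionic SUSY variations forces $\phi$ to be constant, and the twisted bosonic boundary condition then forces $\phi(t)\equiv p$ with $f(p)=p$. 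With the fixed points finite and isolated, the path integral collapses to a finite sum, each fixed point contributing a Gaussian fluctuation integral.

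At each fixed point $p$ I would work in Riemann normal coordinates, set $A:=Df(p)$, and retain only the quadratic action as in Gauss-Bonnet (the connection terms vanish at the origin, and after the rescaling $t=\beta\tau$, $\psi\to\beta^{-1/4}\psi$ the curvature quartic is subleading). Diagonalising $A$ over $\mathbb{C}$ with eigenvalues $\lambda_1,\dots,\lambda_n$ and expanding each fluctuation $\xi_j,\psi_j,\psibar_j$ in a Fourier basis satisfying the twisted condition $\xi_j(\beta)=\lambda_j\xi_j(0)$, the zeta-regularised bosonic Gaussian (section \ref{sec:zeta}) produces a factor $|\det(\mathds{1}-A)|^{-1}$, while the combined $\psi,\psibar$ Grassmann integrations yield $\det(\mathds{1}-A)$ via the standard supertrace identity $\mathrm{STr}\,\Lambda^*A=\det(\mathds{1}-A)$. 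The contribution of $p$ is therefore
\begin{equation}
\frac{\det(\mathds{1}-A)}{|\det(\mathds{1}-A)|} \;=\; \mathrm{sgn}\,\det(\mathds{1}-Df(p)) \;=\; \pm\,i(f,p),
\end{equation}
up to an overall dimension-dependent sign $(-1)^n$ arising from $\det(\mathds{1}-A)=(-1)^n\det(A-\mathds{1})$, which must be absorbed into a careful orientation convention. Summing over $p$ then delivers $\Lambda_f=\sum_p i(f,p)$.

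The main obstacle will be the sign and branch-cut accounting in the regularised determinants: complex eigenvalues of $A$ must be paired with their conjugates, $\log\lambda_j$ consistently branched, and the infinite Fourier product of twisted modes shown to collapse to precisely $\det(\mathds{1}-A)/|\det(\mathds{1}-A)|$ with no spurious phase. A secondary subtlety is justifying that localisation still applies with twisted rather than strictly periodic boundary conditions; this is handled by noting that isolation of the fixed points ($1\notin\mathrm{spec}\,A$) rules out both bosonic and fermionic fluctuation zero modes and keeps every determinant invertible, so the $\beta\to 0$ truncation to constant modes plus quadratic fluctuations is controlled exactly as in the Gauss-Bonnet computation.
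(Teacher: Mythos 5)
Your strategy is the same as the paper's: write $\Lambda_f=\mathrm{Tr}\bigl((-1)^F e^{-\beta H}f^*\bigr)$, argue $\beta$-independence via the pairing of nonzero energy levels, convert the insertion of $f^*$ into twisted boundary conditions, localize onto constant maps to the fixed points as $\beta\to 0$, and evaluate a finite-dimensional bosonic Gaussian against a Grassmann Gaussian to produce $\mathrm{sgn}\det$ of the linearization. The bosonic factor $1/\lvert\det(Df-\mathds{1})\rvert$ and the identification of the fermionic zero-mode integral with the supertrace over the exterior algebra are both as in the paper.

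The one genuine gap is the point you yourself flag and then defer: the $(-1)^n$ between $\det(\mathds{1}-Df)$ and $\det(Df-\mathds{1})$ cannot be ``absorbed into an orientation convention'' --- it is fixed once you derive, rather than guess, how $f^*$ acts on the two species of fermions. You twist both $\psi$ and $\psibar$ by $Df$; the paper, using the identification $\psibar^I\leftrightarrow \dd x^I\wedge$ and $\psi^I\leftrightarrow g^{IJ}i_{\partial_J}$, concludes that only $\psibar$ is twisted while $\psi(\beta)=\psi(0)$, and the resulting Grassmann Gaussian then produces a single factor $\det(Df-\mathds{1})$ on the nose, with no residual sign to argue away. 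As proposed, your computation lands on $\mathrm{sgn}\det(\mathds{1}-Df)=(-1)^n\,\mathrm{sgn}\det(Df-\mathds{1})$, which disagrees with the statement you are proving whenever $n$ is odd; you must either rederive the boundary conditions as the paper does or track the sign through honestly. A secondary, non-fatal difference: you expand the fluctuations in a fully twisted Fourier basis and appeal to zeta-regularized infinite products, whereas the paper writes the twisted sector as a linear interpolation between $x_0$ and $f(x_0)$ (and between $\psibar_0$ and $Df\,\psibar_0$) plus ordinary periodic nonzero modes, whose bosonic and fermionic determinants cancel exactly as in the Gauss--Bonnet computation; this sidesteps the branch-cut and pairing issues you list as your main obstacle.
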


\begin{proof}
We shall consider the SUSY Lagrangian
\begin{equation}
L = \frac{1}{2} g_{IJ} \dot{\phi}^I \dot{\phi}^J - i g_{IJ}D_t \psibar^I \psi^J +\frac{1}{4} R_{IJKL} \psi^I \psi^J \psibar^K \psibar^L
\end{equation}
which differs from that in equation \ref{eq:manifoldlagrangian} by a total derivative: $-\frac{i}{2} D_t \left( \psibar^I \psi^J \right)$, so is invariant under the same SUSY transformations.

As with the Witten index, there is a path integral expression for $\Lambda_f$:
\begin{equation}\label{eq:lefschetzpath}
\Lambda_f = \mathrm{Tr} \left( (-1)^F e^{-\beta H} f^* \right) = \int_{BC} \mathcal{D}\phi\mathcal{D}\psibar\mathcal{D}\psi\, e^{-S_E} 
\end{equation}
where the boundary conditions $BC$ are to be determined. Again it is exactly the supersymmetry that ensures this is independent of $\beta$.

To find the boundary conditions, recall from definition \ref{defn:pullback} how $f^*_q$ acts. Consider the $q$-form
\begin{equation}
\alpha_q = \sum_{i_1, \dots, i_q} a_{i_1,\dots, i_q} (x) \dd x^{i_1} \wedge \dots \wedge \dd x^{i_q} = \sum_{i_1, \dots, i_q} a_{i_1,\dots, i_q} (\phi(t)) \dd x^{i_1} \wedge \dots \wedge \dd x^{i_q}.
\end{equation}

Then $f^*_q(\alpha_q)$ is
\begin{align} \label{eq:lefschetzfact}
f^*_q (\alpha_q) &= \sum_{i_1, \dots, i_q} a_{i_1,\dots, i_q} (f(x))\dd \left(  f^{i_1}(x)\right) \wedge \dots \wedge \dd\left( f^{i_q}(x) \right) \nn\\
&= \sum_{i_1, \dots, i_q} a_{i_1,\dots, i_q} (f\circ\phi)(t) \dd \left(  f^{i_1}(x)\right) \wedge \dots \wedge \dd\left( f^{i_q}(x) \right).
\end{align}

Recalling the identification in \ref{eq:identification}, we see that $f^*_q$ acts by sending:
\begin{align}
\phi &\rightarrow f\circ \phi \nn\\
\psibar &\rightarrow  Df \circ \psibar \nn\\
\psi &\rightarrow \psi
\end{align}

so that the boundary conditions in equation \ref{eq:lefschetzpath} are:
\begin{align} \label{eq:lefschetzbc}
\phi(\beta) &= f (\phi(0)) \nn\\
\psibar(\beta) &= Df (\psibar(0)) \nn\\
\psi(\beta) &= \psi(0).
\end{align}

By the same argument as in our proof of Gauss-Bonnet-Chern, the path integral localizes to the constant maps. Due to the boundary conditions \ref{eq:lefschetzbc}, these are just the constant maps $\phi$ to fixed points $p$ of $f$, since $p := \phi(\beta) = \phi(0)$ implies in \ref{eq:lefschetzbc} that $f(p) = p$. We perform a Fourier expansion in local coordinates around $p$, respecting the boundary conditions:
\begin{align} \label{eq:lefschetzfourier}
\phi^I(t) &= \frac{t}{\beta} f^I(\sqrt{\beta} x_0) + \left(1-\frac{t}{\beta}\right) \sqrt{\beta} x^I_0 + \sqrt{\beta} \sum_{k\neq0} a_k^I \exp\left(\frac{2\pi i k }{\beta}t\right)\nn\\
\psibar^I (t) &= \left[\frac{t}{\beta} Df^I(\psibar_0) + \left(1-\frac{t}{\beta}\right) \psibar^I_0 \right]+ \sum_{k\neq0} \psibar_k^I \exp\left(\frac{2\pi i k}{\beta}t\right)\nn\\
\psi^I (t) &= \psi^I_0 + \sum_{k\neq0} \psi_k^I \exp\left(\frac{2\pi i k}{\beta}t\right)
\end{align}

where the factors of $\beta$ have been included to ensure the quadratic terms in the action are $\beta$-independent and that the path measure is $\beta$-independent. 

We Taylor expand $f$ to get:
\begin{align} \label{eq:lefschetztaylorexp}
\frac{\dd\phi^I}{\dd t} &= \frac{1}{\beta} \sqrt{\beta} \left((Df)^I_J x_0^J - x_0^I\right) + \mathcal{O}(1) + \frac{1}{\sqrt{\beta}} \sum_{k\neq0} 2\pi i k a_k^I \exp\left(\frac{2\pi i k }{\beta}t\right)\nn\\
&= \frac{1}{\sqrt{\beta}}\left( \left((Df)^I_J - \delta^I_J\right) x_0^J + \sum_{k\neq0} 2\pi i k a_k^I \exp\left(\frac{2\pi i k }{\beta}t\right)  \right) +\mathcal{O}(1)
\end{align}
where we used $f(0) =0$, since we are expanding around a fixed point.

Furthermore, if we again use Riemann normal coordinates, then
\begin{align}
D_t \psibar^I = \frac{\dd\psibar^I}{\dd t} &= \frac{1}{\beta} \left( (Df)^I_K \psibar^K_0 - \psibar_0^I\right)+ \frac{1}{\beta} \sum_{k\neq0} 2 \pi i k \psibar_k^I \exp\left(\frac{2\pi i k}{\beta}t\right) \nn\\
&= \frac{1}{\beta} \left( (Df)^I_K  - \delta^I_K\right)\psibar^K_0 + \frac{1}{\beta} \sum_{k\neq0} 2 \pi i k \psibar_k^I\exp\left(\frac{2\pi i k}{\beta}t\right)
\end{align}

Similarly as in proving Gauss-Bonnet, we then find Euclidean action (with Riemann normal coordinates)
\begin{align}
S_E = \int_0^{\beta} \dd t\,& \left[ \frac{1}{2} \delta_{IJ} \frac{1}{\beta} \left((Df)^I_K - \delta^I_K\right) x_0^K \left((Df)^J_L - \delta^J_L\right) x_0^L  -  \frac{1}{\beta} \sum_{k\neq0} \frac{1}{2}|a_k^I|^2 (2\pi k i)^2\right. \nn\\
&  - \frac{1}{\beta}\sum_{k\neq0} 2\pi k i \psibar^I_k\psi^I_k- \delta_{IJ} \frac{1}{\beta} \left( (Df)^I_K  - \delta^I_K\right)\psibar^K_0 \psi^J_0 + \nn\\
& +  \sum_{k\neq 0}(\psibar_0,\psi_k)\text{-cross-terms } + \left.  \sum_{k\neq 0}(\psi_0,\psibar_k)\text{-cross-terms } + \mathcal{O}(1) \right].
\end{align}

The $(x_0,a_k)_{k\neq 0}$ and $(\psibar_l,\psi_{k-l})_{k\neq 0}$ cross-terms are not present as they are multiples of $\int_0^{\beta} \dt e^{2\pi i k t / \beta} = 0$. Furthermore, the $(\psibar_0,\psi_k)_{k\neq 0}$ and $(\psi_k,\psibar_0)_{k\neq 0}$ cross-terms will not survive the Grassmann integration in the path integral. 

When taking the limit $\beta\rightarrow 0$, which is allowed as the path integral is independent of $\beta$, we can ignore the last term, which is $\mathcal{O}(\beta)$. Also, the remaining integrals over non-zero modes $|a_k|^2, \psibar_k\psi_k$ cancel each other as in the proof of Gauss-Bonnet. Hence we can use the ``effective" action:
\begin{align}
\hat{S}_E &= \frac{1}{2}\delta_{IJ} x_0^K \left((Df)^I_K - \delta^I_K\right) \left((Df)^J_L - \delta^J_L\right) x_0^L - \delta_{IJ} \left( (Df)^I_K  - \delta^I_K\right)\psibar^K_0 \psi^J_0 \nn\\
&= \frac{1}{2} \left( (Df - \mathds{1}) \boldsymbol{x_0} \right)^T \left( (Df - \mathds{1}) \boldsymbol{x_0}\right)  - \left( (Df - \mathds{1}) \boldsymbol{\psibar_0} \right)^T \boldsymbol{\psi_0} \nn\\
&= \frac{1}{2} \boldsymbol{x_0}^T (Df - \mathds{1})^T(Df - \mathds{1}) \boldsymbol{x_0} - \boldsymbol{\psibar_0}^T \left( Df - \mathds{1}\right)^T \boldsymbol{\psi_0}
\end{align}
where we regard $(Df - \mathds{1})$ as a matrix and  $\boldsymbol{x_0}, \boldsymbol{\psibar_0}, \boldsymbol{\psi_0}$ as vectors.\\
\\
The path measures are exactly as before, to give as contribution around $p$: 
\begin{align} \label{eq:lefschetzint}
\int_{p,BC} \mathcal{D}\phi\mathcal{D}\psibar\mathcal{D}\psi e^{-\hat{S}_E} &= \int \frac{\dd^n \boldsymbol{x_0}}{(2\pi)^{n/2}} \exp \left( -\frac{1}{2} \boldsymbol{x_0}^T (Df - \mathds{1})^T(Df - \mathds{1}) \boldsymbol{x_0} \right) \times \nn\\
& \qquad \times\int \dd^n \boldsymbol{\psibar_0} \dd^n \boldsymbol{\psi_0} \exp\left(\boldsymbol{\psibar_0}^T \left( Df - \mathds{1}\right)^T \boldsymbol{\psi_0} \right).
\end{align}

Now use the Gaussian integral formulas from appendix \ref{sec:fresnel}:
\begin{align}
\int \frac{\dd^n \boldsymbol{x_0}}{(2\pi)^{n/2}} &\exp \left( -\frac{1}{2} \boldsymbol{x_0}^T (Df - \mathds{1})^T(Df - \mathds{1}) \boldsymbol{x_0} \right)\nn\\
 &= \frac{1}{\sqrt{ \det \left( (Df - \mathds{1})^T  (Df - \mathds{1}) \right)}} \nn\\
&= \frac{1}{\left| \det(Df - \mathds{1}) \right|}
\end{align}

and its Grassmannian version:
\begin{equation}
\int \dd^n \boldsymbol{\psibar_0} \dd^n \boldsymbol{\psi_0} \exp\left(\boldsymbol{\psibar_0}^T \left( Df - \mathds{1}\right)^T \boldsymbol{\psi_0} \right) = \det \left( Df - \mathds{1} \right)^T = \det \left( Df - \mathds{1} \right).
\end{equation}

Combining these results in equation \ref{eq:lefschetzint}, we get the contribution from a fixed point $p$:
\begin{align}
\int_{p,BC} \mathcal{D}\phi\mathcal{D}\psibar\mathcal{D}\psi e^{-\hat{S}_E} &= \frac{\det \left( Df - \mathds{1} \right)}{\left| \det(Df - \mathds{1}) \right|} \nn\\
&= \text{sgn} \left( \det \left( Df - \mathds{1} \right)\right) \nn\\
&= i(f,p).
\end{align}

Summing the contributions from all fixed points $p$, we arrive at the final result:
\begin{equation}
\boxed{\Lambda_f = \int_{BC} \mathcal{D}\phi\mathcal{D}\psibar\mathcal{D}\psi e^{-S_E} = \sum_{\mathrm{fixed}\,\mathrm{points}\, p} i(f,p).}
\end{equation}

\end{proof}

As a corollary, we arrive at a version of Brouwer's fixed-point theorem \footnote{Brouwer's fixed-point theorem only assumes continuity of $f$; note we assume additionally that $f$ is smooth.}:

\begin{cor}\textbf{(Brouwer's fixed-point theorem)}
Let $f: D^n \rightarrow D^n$ be a smooth map from the unit disk to itself. Then $f$ has a fixed point.
\end{cor}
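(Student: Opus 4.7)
The plan is to argue by contradiction, using the Lefschetz fixed-point theorem together with the De Rham cohomology of the disk. Suppose for contradiction that $f: D^n \to D^n$ has no fixed point.

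First I would compute $\Lambda_f$ from cohomology. As noted earlier in the excerpt, $D^n$ is contractible, so $H^0_{DR}(D^n) = \mathbb{R}$ and $H^p_{DR}(D^n) = 0$ for all $p \geq 1$. The map $f^*_0$ pulls back a constant function to the same constant function, hence acts as the identity on the one-dimensional space $H^0_{DR}(D^n)$ with trace $1$, while all higher-degree traces are trivially zero. Therefore $\Lambda_f = 1$.

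Next, by the Lefschetz theorem, $\Lambda_f = \sum_{p \in \mathrm{Fix}(f)} i(f,p)$. Under the contradiction hypothesis the set $\mathrm{Fix}(f)$ is empty (which is finite, so the theorem's hypothesis is met), and the sum vanishes. This contradicts $\Lambda_f = 1$, forcing $f$ to have a fixed point.

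The main obstacle is that the Lefschetz theorem as proved above is stated for a compact, oriented Riemannian manifold without boundary, whereas $D^n$ has boundary $S^{n-1}$. To close this gap I would extend $f$ to a smooth map $\tilde f : S^n \to S^n$, viewing $S^n$ as the one-point compactification of $\mathbb{R}^n \supset D^n$ and defining $\tilde f$ to coincide with $f$ on $D^n$ while mapping everything outside $\mathrm{int}(D^n)$ smoothly into $\mathrm{int}(D^n)$ (for instance by a radial contraction towards the origin, smoothed with a cutoff function). Then $\mathrm{Fix}(\tilde f) = \mathrm{Fix}(f)$, and since $\tilde f$ factors through the contractible disk it is null-homotopic, so $\tilde f^*$ acts as zero on $H^n_{DR}(S^n)$; the same cohomological computation still gives $\Lambda_{\tilde f} = 1$, and the Lefschetz theorem now applies on the closed manifold $S^n$ to yield the same contradiction.
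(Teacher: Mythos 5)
Your core argument is the same as the paper's: compute $\Lambda_f = 1$ from the De Rham cohomology of the contractible disk (identity on $H^0_{DR}$, zero elsewhere), then invoke the Lefschetz fixed-point theorem to conclude the fixed-point set cannot be empty. Phrasing it as a contradiction is a nice touch, since with $\mathrm{Fix}(f) = \emptyset$ the finiteness and non-degeneracy hypotheses of the Lefschetz theorem are vacuously satisfied, whereas applying the theorem directly to an $f$ with possibly degenerate fixed points is slightly delicate.

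Where you go beyond the paper is in worrying that $D^n$ has boundary, while the Lefschetz theorem above is proved for a compact oriented Riemannian manifold (implicitly closed, as the path-integral proof uses periodic boundary conditions). This is a legitimate gap in the paper's own proof, and your repair -- extend to a map $\tilde f: S^n \to S^n$ whose image lies in a contractible set, so that $\tilde f$ is null-homotopic, $\Lambda_{\tilde f} = 1$, and $\mathrm{Fix}(\tilde f) = \mathrm{Fix}(f)$ -- is the standard and correct one. One small wrinkle: as literally described, your $\tilde f$ cannot simultaneously ``coincide with $f$ on $D^n$'' and ``map everything outside $\mathrm{int}(D^n)$ into $\mathrm{int}(D^n)$,'' since $f$ may carry a boundary point to another boundary point. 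The fix is to first replace $f$ by $(1-\delta)f$ for small $\delta > 0$; under the no-fixed-point hypothesis, compactness gives $|f(x)-x| \geq \epsilon > 0$, so the shrunken map still has no fixed points, maps $D^n$ into its interior, and can then be extended smoothly over $S^n$ as you propose. With that adjustment the argument is complete.
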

\begin{proof}
Recall the De Rham cohomologies of $D^n$:
\begin{equation}
H^p_{DR}(D^n) = \begin{cases} \mathbb{R} &\text{if } p = 0 \\ 0 & \text{else} \end{cases}
\end{equation}
In fact, we saw that $H^0_{DR}(D^n)$ consists simply of constant maps. From equation \ref{eq:lefschetzfact}, we conclude that any map $f$ induces the identity on $H^0_{DR}(D^n)$. Hence 
\begin{equation}
\Lambda_f = 1.
\end{equation}
So the Lefschetz fixed-point theorem implies $f$ has at least one fixed point. 
\end{proof}

Compare this with the \emph{contraction mapping theorem} from topology. Let $f$ be a contraction mapping $f: M \rightarrow M$ on a metric space $M$, i.e. a mapping such that there exists a real number $0 \leq K < 1$ such that for all $x,y \in M$:
\begin{equation}
\dd(f(x),f(y)) \leq K\dd(x,y).
\end{equation}
Then the contraction mapping theorem states that $f$ has a unique fixed point. This proves the statement that if you are in Oxford and pull out a map of England, there will be exactly one spot on the map that is physically in the place it points to.

The contraction mapping condition is stronger than continuity (any contraction mapping is Lipschitz-continuous by definition), but this ensures uniqueness of the fixed point. Furthermore Brouwer's fixed-point theorem is non-constructive, whereas the contraction mapping theorem is - its proof involves taking an arbitrary point $x_0 \in M$ and defining a sequence $x_{n+1} = f(x_n)$; then this sequence converges to the unique fixed point $p$.  \\
\\
We verify the Lefschetz and Brouwer fixed-point theorems for a simple rotation.
\begin{ex}
Consider the map $f: D^2\rightarrow D^2$ on the unit disk that rotates through $\theta$:
\begin{equation}
\begin{pmatrix} x \\ y \end{pmatrix} \mapsto \begin{pmatrix} \cos\theta & -\sin\theta \\ \sin\theta & \cos\theta\end{pmatrix} \begin{pmatrix} x \\ y \end{pmatrix}.
\end{equation}
Then $f$ has a single fixed point at the origin, where it has derivative:
\begin{equation}
D_0 f = \begin{pmatrix} \cos\theta & -\sin\theta \\ \sin\theta & \cos\theta\end{pmatrix}
\end{equation}
as it is linear. Hence
\begin{equation}
\det \left( D_0 f - \mathds{1} \right) = \begin{vmatrix} -1 + \cos\theta & -\sin\theta \\ \sin\theta & -1 + \cos\theta\end{vmatrix} = (-1 + \cos\theta)^2 + \sin^2\theta > 0
\end{equation}
so that $\mathrm{sgn}\left(\det \left( D_0 f - \mathds{1} \right)\right) = 1$ and indeed $\Lambda_f = 1$.
\end{ex}

Finally, we state a connection between Lefschetz and Euler numbers.

\begin{cor}
Consider a compact, oriented, Riemannian manifold $M$ and $\{ f_t \}$, a $1$-parameter group of maps $f_t: M \rightarrow M$ continuously connected to the identity. Then
\begin{equation}
\Lambda_{f_t} = \chi(M)
\end{equation}
for any $f_t$.
\end{cor}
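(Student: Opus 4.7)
The plan is to combine two facts: first, that at $t=0$ the map $f_0$ is the identity and its Lefschetz number is manifestly the Euler number; and second, that the pullback on de Rham cohomology is a homotopy invariant, so the Lefschetz number is constant along the $1$-parameter family. Note that we cannot apply the Lefschetz fixed-point theorem directly at $f_0 = \mathrm{id}$, since its fixed-point set is all of $M$ and the hypothesis of isolated fixed points fails. Instead we must work on the cohomological side of the formula.

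First I would evaluate $\Lambda_{f_0}$ directly from the definition. Since $f_0 = \mathrm{id}$, the pullback $f_0^*$ acts as the identity on each $H^q_{DR}(M)$, so
\begin{equation}
\mathrm{Tr}(f_{0,q}^*) = \dim H^q_{DR}(M) = b_q(M),
\end{equation}
and therefore
\begin{equation}
\Lambda_{f_0} = \sum_{q \geq 0} (-1)^q b_q(M) = \chi(M)
\end{equation}
by the definition of the Euler number in terms of Betti numbers.

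Next I would invoke the homotopy invariance of de Rham cohomology, stated in the excerpt in the subsection on De Rham cohomology. A $1$-parameter family $\{f_t\}$ continuously connected to the identity is in particular a smooth homotopy from $f_0$ to $f_t$, and a standard consequence of homotopy invariance is that homotopic maps induce the same linear map on cohomology: $f_{t,q}^* = f_{0,q}^*$ as endomorphisms of $H^q_{DR}(M)$ for every $q$ and every $t$. Taking traces yields $\mathrm{Tr}(f_{t,q}^*) = \mathrm{Tr}(f_{0,q}^*) = b_q(M)$, so
\begin{equation}
\Lambda_{f_t} = \sum_{q \geq 0} (-1)^q \mathrm{Tr}(f_{t,q}^*) = \sum_{q \geq 0} (-1)^q b_q(M) = \chi(M)
\end{equation}
for every $t$, which is the desired statement.

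The only subtle point, and the one I would flag rather than a genuine obstacle, is the invocation of homotopy invariance of the induced map on de Rham cohomology: the excerpt states homotopy invariance of the cohomology groups themselves but does not explicitly prove that homotopic maps induce equal pullbacks on cohomology. This is a standard result (provable via the chain homotopy formula $f_1^* - f_0^* = \mathrm{d}h + h\mathrm{d}$ for a cochain homotopy $h$ constructed by integration along the homotopy), and I would simply cite it from the referenced differential geometry texts. No path-integral or supersymmetric machinery is needed for this corollary; the earlier Lefschetz fixed-point theorem is used only implicitly, via the definition of $\Lambda_f$ that it justifies.
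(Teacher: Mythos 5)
Your proposal is correct, and its first half (evaluating $\Lambda_{f_0}$ directly from the definition, since $f_0^*$ is the identity on each $H^q_{DR}(M)$ and the trace of the identity is the Betti number) is exactly what the paper does. Where you diverge is in how constancy along the family is justified. The paper argues that $t \mapsto \Lambda_{f_t}$ is continuous and integer-valued, hence locally constant, hence equal to $\Lambda_I$ on the connected component containing the identity. You instead invoke the stronger fact that homotopic maps induce equal pullbacks on de Rham cohomology, so that $f_{t,q}^* = f_{0,q}^* = \mathrm{id}$ for every $q$ and $t$; this gives constancy of each individual trace, not merely of the alternating sum. Your route is arguably the more self-contained one: the paper's continuity claim is left unjustified, and its integrality claim implicitly leans on the Lefschetz fixed-point theorem (the index is a sum of $\pm 1$'s), whereas the chain-homotopy argument you cite needs neither and works for any homotopy, not just one through diffeomorphisms. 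The paper's version buys brevity and a nice ``rigidity of integers'' intuition; yours buys rigor and the sharper conclusion that $f_t$ acts trivially on all of $H^*_{DR}(M)$. Your remark that the fixed-point theorem cannot be applied directly at $f_0=\mathrm{id}$ (its fixed points are not isolated) is also a point the paper glosses over.
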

\begin{proof}
The identity map $I$ trivially induces the identity on all cohomologies. Thus, from the definition of $\Lambda_f$:
\begin{equation}
\Lambda_I = \sum_{q=0}^n (-1)^q \mathrm{Tr} \left(I | H^q_{DR} (M) \right) = \sum_{q=0}^n (-1)^q \mathrm{dim} \left(H^q_{DR} (M)\right) = \chi(M)
\end{equation}
as the trace of the identity gives the dimension.

Each $f_t$ has a Lefschetz number $\Lambda_{f_t}$ depending continuously on $t$. Furthermore, as an integer, it is constant on the connected component of $f_t$, which contains the identity. Therefore
\begin{equation}
\Lambda_{f_t} = \Lambda_I = \chi(M).
\end{equation} 
\end{proof}

This shows the main power of the Lefschetz fixed-point theorem: $\Lambda_f$ is invariant under continuous deformations of $f$, and we can often reduce calculations to simple ones.

\begin{ex} \textbf{(Sphere and torus)} 
\begin{itemize}
\item Consider rotations $R(\theta)$ of $S^2$, through angle $\theta$, around an axis through the North and South poles. These are connected to the identity, as $I = R(\theta = 0).$ For $\theta \notin 2\pi \mathbb{Z}$, the poles are its only fixed points, around which it locally looks like the $2$D rotation in our previous example. Then indeed:
\begin{equation}
\Lambda_{R(\theta)} = i(R(\theta), NP) + i(R(\theta), SP) = 1 + 1 = 2 = \chi(S^2).
\end{equation}
\item Similarly consider rotations $F(\theta)$ of the $2$-torus $T^2$ around a vertical axis through its ``hole". For $\theta \notin 2\pi \mathbb{Z}$, it has no fixed points. Thus
\begin{equation}
\chi(T^2) = \Lambda_{F(\theta)} = 0
\end{equation}
as expected.
\end{itemize}

\end{ex}

\chapter*{Conclusion}
In this dissertation, we introduced the path integral as a formulation of quantum mechanics and analysed some of its physical and mathematical properties. We introduced the modern idea of supersymmetry and showed how the path integral was naturally suited to supersymmetric calculations. This culminated in ``physics proofs" of the Gauss-Bonnet-Chern and Lefschetz fixed-point theorems. 

The expert reader might recognize that these are part of a wider class of index theorems that follow from the more general Atiyah-Singer index theorem, which can also be proven by a path integral in an appropriate supersymmetric model (see e.g. \cite{Nakahara}). Originally a proof of this theorem was intended, but due to the extra background knowledge required for this, only two special cases have been included. We hope that this provides a useful introduction to students interested in the field and serves as a good starting point for further study.

\appendix

\chapter{Mathematical results} \label{chap:appendix}

\section{Gaussian and Fresnel integrals} \label{sec:fresnel}
In this appendix, we prove the Gaussian and Fresnel integrals used throughout the dissertation.

\begin{thm} \textbf{(Gaussian integral)}
Let $A$ be an $n\times n$ symmetric, positive-definite matrix. Then
\begin{equation}
\int \mathrm{d}^n\boldsymbol{x}\exp \left(- \boldsymbol{x}^T A \boldsymbol{x}\right) = \frac{\pi^{n/2}}{\sqrt{\det A}} = \frac{1}{\sqrt{\det (A/\pi)}}.
\end{equation}
\begin{proof}
We shall use, without proof, the standard result:
\begin{equation} \label{eq:1dgaussian}
\int \dx \exp \left(- ax^2\right) = \sqrt{\frac{\pi}{a}}.
\end{equation}
(This can be proven by squaring the integral and evaluating it in polar coordinates.)\\
\\
As $A$ is symmetric and positive definite, then by the spectral theorem from linear algebra, there is a matrix $O$ such that $O^TO = \mathds{1} $ and $O^TAO = D$ is diagonal: $D = \mathrm{diag}(\lambda_1,\dots,\lambda_n)$, where the $\lambda_n$ are the eigenvalues of $A$. Specifically then $\det D = \det A$. \\
Introduce coordinates $\boldsymbol{y} = O \boldsymbol{x}$. Then as $O$ is orthogonal: $\boldsymbol{x}^T A \boldsymbol{x} = \boldsymbol{y}^T D \boldsymbol{y} = \sum_{i=1}^n \lambda_i y_i^2$. Further, $\det O = \det O^T = 1$, so $\mathrm{d}^n\boldsymbol{x} = \mathrm{d}^n\boldsymbol{y}$. Hence:
\begin{equation}
\int \mathrm{d}^n\boldsymbol{x}\exp \left(- \boldsymbol{x}^T A \boldsymbol{x}\right) = \int \mathrm{d}^n\boldsymbol{y}\exp \left(- \sum_{i=1}^n \lambda_i y_i^2\right)
= \prod_{i=1}^{n} \sqrt{\frac{\pi}{\lambda_i}} = \frac{\pi^{n/2}}{\sqrt{\det A}}
\end{equation}
using equation \ref{eq:1dgaussian} and that $\det A = \prod_{i=1}^n \lambda_i$.
\end{proof}
\end{thm}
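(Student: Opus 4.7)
The plan is to reduce the multidimensional integral to a product of one-dimensional Gaussians by diagonalizing $A$, taking the standard one-dimensional Gaussian $\int \dx\,\exp(-ax^2) = \sqrt{\pi/a}$ (for $a>0$) as a given. This one-dimensional identity can itself be justified by squaring and passing to polar coordinates, but since it is so standard I would invoke it without proof.

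First I would use the spectral theorem: since $A$ is symmetric, there exists an orthogonal matrix $O$ with $O^TO = \mathds{1}$ such that $O^TAO = D = \mathrm{diag}(\lambda_1,\dots,\lambda_n)$, where the $\lambda_i$ are the (necessarily positive, by positive-definiteness) eigenvalues of $A$. Next I would change variables $\boldsymbol{y} = O\boldsymbol{x}$. Because $O$ is orthogonal, $|\det O| = 1$, so $\dd^n\boldsymbol{x} = \dd^n\boldsymbol{y}$, and moreover $\boldsymbol{x}^TA\boldsymbol{x} = \boldsymbol{y}^TD\boldsymbol{y} = \sum_i \lambda_i y_i^2$. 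The integral then decouples as a product of one-dimensional Gaussians.

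Finally I would apply the one-dimensional formula to each factor to get $\prod_{i=1}^n \sqrt{\pi/\lambda_i}$, and use $\det A = \prod_i \lambda_i$ to conclude. Positive-definiteness is essential here, both to ensure convergence of each one-dimensional integral and to make the square root of $\det A$ real and positive.

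There is no real obstacle — the only subtlety is making sure positive-definiteness is used in the right place (namely to guarantee $\lambda_i > 0$ so that the one-dimensional Gaussians converge and $\sqrt{\lambda_i}$ is unambiguously real). Everything else is a straightforward change of variables followed by the factorization of the exponential in the eigenbasis.
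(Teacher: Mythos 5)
Your proposal is correct and follows exactly the same route as the paper: invoke the one-dimensional Gaussian integral without proof, diagonalize $A$ via the spectral theorem with an orthogonal change of variables, and factor the integral into one-dimensional pieces. Your explicit remark that positive-definiteness guarantees $\lambda_i > 0$ (hence convergence and a real $\sqrt{\det A}$) is a point the paper leaves implicit, but otherwise the arguments coincide.
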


\begin{thm} \textbf{(Grassmann Gaussian integral)}
Let $A$ be an $n\times n$ symmetric, positive-definite matrix and $\boldsymbol{\psi}, \boldsymbol{\tilde{\psi}}$ be vectors of Grassmann variables. Then
\begin{equation}
\int \mathrm{d}^n\boldsymbol{\tilde{\psi}}\mathrm{d}^n\boldsymbol{\psi}\exp \left( \boldsymbol{\tilde{\psi}}^T A \boldsymbol{\psi		}\right) = \det A.
\end{equation}
\end{thm}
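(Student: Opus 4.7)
The plan is to expand the exponential as a power series and use the nilpotency of the Grassmann variables to collapse the sum to a single combinatorial identity that manifestly reproduces the Leibniz formula for $\det A$.

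First I would write $\exp(\tilde{\psi}^T A \psi) = \sum_{k\geq 0} \frac{1}{k!}\bigl(\sum_{i,j} A_{ij}\tilde{\psi}_i \psi_j\bigr)^k$. Each Grassmann variable squares to zero, and the integration rules kill any monomial that omits at least one $\tilde{\psi}_i$ or at least one $\psi_j$, so the only contributing term in the Taylor series is the $k = n$ term. Expanding the $n$-th power and enforcing that the surviving index tuples $(i_1,\ldots,i_n)$ and $(j_1,\ldots,j_n)$ each run over $\{1,\ldots,n\}$ bijectively reduces the sum to a double sum over $\sigma,\tau \in S_n$ of the form $A_{\sigma(1)\tau(1)}\cdots A_{\sigma(n)\tau(n)}\,\tilde{\psi}_{\sigma(1)}\psi_{\tau(1)}\cdots\tilde{\psi}_{\sigma(n)}\psi_{\tau(n)}$.

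Next I would reorganise. Because each pair $\tilde{\psi}_i\psi_j$ is Grassmann-even, the $n$ pair-factors commute freely, so I can group them so that the $\tilde{\psi}$-indices appear in natural order, rewriting the integrand as a sum over $\pi := \tau\circ\sigma^{-1} \in S_n$ of $\bigl(\prod_k A_{k,\pi(k)}\bigr)\prod_k \tilde{\psi}_k\psi_{\pi(k)}$. The $n!$ pairs $(\sigma,\tau)$ representing each $\pi$ absorb the $1/n!$ from the exponential. Pulling every $\tilde{\psi}_k$ to the left of every $\psi_k$ then contributes a uniform sign $(-1)^{n(n-1)/2}$ (from $0 + 1 + \cdots + (n-1)$ transpositions), and re-sorting $\psi_{\pi(1)}\cdots\psi_{\pi(n)}$ into $\psi_1\cdots\psi_n$ contributes the further sign $\mathrm{sgn}(\pi)$. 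Collecting everything and applying the Leibniz formula $\det A = \sum_\pi \mathrm{sgn}(\pi)\prod_k A_{k,\pi(k)}$ reduces the integrand to $(-1)^{n(n-1)/2}(\det A)\,\tilde{\psi}_1\cdots\tilde{\psi}_n\,\psi_1\cdots\psi_n$. Finally, applying the Grassmann integration convention $\int \psi_1\cdots\psi_n\,\mathrm{d}\psi_1\cdots \mathrm{d}\psi_n = 1$ to each of the two families of variables with a fixed ordering convention for $\mathrm{d}^n\tilde{\psi}\,\mathrm{d}^n\psi$ produces a compensating sign $(-1)^{n(n-1)/2}$ that cancels the overall sign extracted above, leaving the clean answer $\det A$.

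The main obstacle is sign bookkeeping. Reordering $2n$ Grassmann variables through several stages and matching the implicit ordering of the measure $\mathrm{d}^n\tilde{\psi}\,\mathrm{d}^n\psi$ against the integrand is delicate; a single missed transposition shifts the answer by a factor of $-1$. I would pin down conventions at the outset and verify that the $(-1)^{n(n-1)/2}$ produced during reordering cancels precisely against the sign from matching the measure. A side benefit of working this out carefully is that the argument never uses symmetry or positive-definiteness of $A$, illuminating the fact that the Grassmann Gaussian identity holds for arbitrary matrices, in contrast to the bosonic case proved just above.
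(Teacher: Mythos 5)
Your proof is correct, but it takes a genuinely different route from the paper. The paper's proof piggybacks on the bosonic Gaussian computed just above it: it establishes the one-dimensional identity $\int \dd\tilde{\psi}\,\dd\psi\, e^{a\tilde{\psi}\psi}=a$ and then declares the rest ``identical to the real-variable case,'' i.e.\ it orthogonally diagonalizes $A$ via the spectral theorem, changes variables, and multiplies the resulting one-dimensional factors to get $\prod_i\lambda_i=\det A$. (That route silently uses the fact that a Grassmann linear change of variables carries the \emph{inverse} Jacobian determinant, which is harmless here only because $\det O=\pm1$.) You instead expand the exponential directly, observe that nilpotency kills everything except the $k=n$ term, and resum the surviving monomials into the Leibniz formula for $\det A$. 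What your approach buys is exactly what you note at the end: it never invokes symmetry or positive-definiteness, so it proves the identity for an arbitrary matrix $A$, whereas the paper's diagonalization argument genuinely needs the hypotheses it states. What it costs is the sign bookkeeping, which you have outlined plausibly but not fully nailed down: the claimed cancellation between the $(-1)^{n(n-1)/2}$ from interleaving $\tilde{\psi}_k\psi_{\pi(k)}$ pairs and the sign from matching the measure $\mathrm{d}^n\tilde{\psi}\,\mathrm{d}^n\psi$ against $\tilde{\psi}_1\cdots\tilde{\psi}_n\psi_1\cdots\psi_n$ depends on the ordering convention chosen for the measure, and you should check it against the $n=1$ case (where it must reproduce $\int\dd\tilde{\psi}\,\dd\psi\,e^{a\tilde{\psi}\psi}=a$) to fix that convention once and for all.
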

\begin{proof}
We use the one-dimensional relation for $a \in \mathbb{R}$:
\begin{equation}
\int \dd\tilde{\psi} \,\dd \psi\,  \exp(a \tilde{\psi}\psi) = a
\end{equation}
which follows from a simple Taylor expansion and the Grassmann integration rules.

The rest of the proof is identical to the real-variable case.
\end{proof}

We now prove the important (real-variable) Fresnel integral, which is a similar integral but with imaginary exponent.
\begin{thm} \textbf{(Fresnel integral)} 
Let $A$ be an $n\times n$ symmetric, positive-definite matrix. Then
\begin{equation}
\int \mathrm{d}^n\boldsymbol{x}\exp \left(i \boldsymbol{x}^T A \boldsymbol{x}\right) =\frac{(\pi i)^{n/2}}{ \sqrt{\det A}} = \frac{\pi^{n/2}}{ \sqrt{\det( A/i)}} = \frac{1}{ \sqrt{ \det\left( \frac{A}{\pi i}\right)}}.
\end{equation}
\end{thm}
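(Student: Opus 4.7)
The strategy is to mirror the proof of the Gaussian integral already given in this appendix: reduce to a one-dimensional base case and then diagonalize using the spectral theorem. Concretely, I would first establish the one-dimensional Fresnel identity $\int \dx\, \exp(iax^2) = \sqrt{\pi i /a}$ for real $a > 0$ (interpreted as an improper Riemann integral, as was done in the discussion of the stationary phase approximation in section \ref{sec:mathcons}). Next, apply the spectral theorem as in the Gaussian case to write $A = O^T D O$ with $O$ orthogonal and $D = \mathrm{diag}(\lambda_1,\dots,\lambda_n)$ whose entries are the (strictly positive) eigenvalues of $A$. Substituting $\boldsymbol{y} = O\boldsymbol{x}$ gives $\boldsymbol{x}^T A \boldsymbol{x} = \sum_i \lambda_i y_i^2$ with unit Jacobian, and the multidimensional integral factors into a product of one-dimensional Fresnel integrals.

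Once the one-dimensional case is in hand, the product collapses to $\prod_{i=1}^n \sqrt{\pi i / \lambda_i} = (\pi i)^{n/2} / \sqrt{\det A}$, matching all three equivalent forms in the statement after pulling $\pi$ or $\pi i$ through the square root (here $\det A = \prod_i \lambda_i > 0$ so there is no branch ambiguity in $\sqrt{\det A}$).

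The only real content is the one-dimensional identity, which is the main obstacle since the integrand is not Lebesgue integrable. My preferred route is analytic continuation: for $z$ in the open right half-plane the Gaussian result $\int \dx\, \exp(-z x^2) = \sqrt{\pi/z}$ (principal branch of the square root, with $\sqrt{z}>0$ for $z>0$) holds by the same orthogonal-change-of-variables argument applied complex-linearly, and both sides are holomorphic in $z$. Setting $z = \epsilon - i a$ with $\epsilon > 0$ and letting $\epsilon \downarrow 0$ then yields the boundary value $\sqrt{\pi/(-ia)} = \sqrt{\pi i/a}$, where the branch is pinned down by continuity from the right half-plane (so $\sqrt{-ia} = \sqrt{a}\, e^{-i\pi/4}$, hence $\sqrt{\pi/(-ia)} = \sqrt{\pi/a}\, e^{i\pi/4}$, which is exactly $\sqrt{\pi i/a}$).

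The delicate part of the limit is showing that the left-hand side converges to the improper Riemann integral on the imaginary axis. A clean alternative that avoids this subtlety is a contour-rotation argument: apply Cauchy's theorem to $\exp(-a z^2)$ on the closed contour formed by $[0,R]$ on the real axis, an arc of radius $R$ in the first quadrant, and the ray $\{r e^{i\pi/4} : 0 \le r \le R\}$; a Jordan-type estimate kills the arc as $R\to\infty$ (using $\mathrm{Re}(-a z^2) \le 0$ on the quadrant and the oscillation on the boundary rays), reducing the Fresnel integral to an honest Gaussian on the rotated ray with an explicit factor of $e^{i\pi/4}$. Either route delivers the one-dimensional case, after which the rest of the proof is the routine diagonalisation above.
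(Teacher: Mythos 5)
Your proposal is correct, and its ``clean alternative'' is in fact exactly the paper's own proof: the appendix establishes the one-dimensional identity $\int \dx\, \exp(iax^2) = \sqrt{\pi i/a}$ by applying Cauchy's theorem to $\exp(iaz^2)$ on a circular sector bounded by the real axis and the ray at angle $\pi/4$, kills the arc with a Jordan's-lemma estimate, and then remarks that the $n$-dimensional statement follows by the same orthogonal diagonalisation used for the Gaussian integral. (The paper puts the oscillatory exponent on the real axis and recovers a genuine Gaussian on the rotated ray; your version with $\exp(-az^2)$ lands the conjugate Fresnel integral on the ray, which is the same argument up to a complex conjugation at the end.) Your primary route via analytic continuation of $\sqrt{\pi/z}$ from the right half-plane is a genuinely different and perfectly legitimate derivation; its advantage is that it dispenses with contour geometry entirely, but, as you correctly flag, it trades that for the boundary-value step $\lim_{\epsilon\downarrow 0}\int e^{-(\epsilon - ia)x^2}\dx = \lim_{R\to\infty}\int_{-R}^{R} e^{iax^2}\dx$, which needs its own justification (a uniform-in-$\epsilon$ tail estimate, e.g.\ by integration by parts in the spirit of Van der Corput's lemma from section 1.3). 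Since you identify that gap and supply the contour argument to close it, the proof is complete; the reduction to one dimension via the spectral theorem is identical in both treatments, and the branch of $\sqrt{\det A}$ is unambiguous for positive-definite $A$ exactly as you say.
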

\begin{proof}
We shall prove that for $a>0$:
\begin{equation} \label{eq:1dfresnel}
\int \dx \exp \left(iax^2\right) = \sqrt{\frac{\pi i }{a}}.
\end{equation}
The full result then follows similarly to the proof for the Gaussian.\\
\\
We evaluate a contour integral of the holomorphic function $f(z) = \exp(iaz^2)$. The contour is a circular sector of radius $R$: $\Gamma = \Gamma_1 \cup \Gamma_2 \cup \Gamma_3$: 

	\begin{center}
	\begin{tikzpicture}[decoration={markings,
	mark=at position 3.1cm with {\arrow[line width=2pt]{>}},
	mark=at position 8.2cm with {\arrow[line width=2pt]{>}},
	mark=at position 12.8cm with {\arrow[line width=2pt]{>}},
	}
	]
	\draw[help lines,->] (-1,0) -- (8,0) coordinate (xaxis);
	\draw[help lines,->] (0,-1) -- (0,5) coordinate (yaxis);

	\path[draw,line width=1.6pt,postaction=decorate] (0,0) -- (6,0) node[below] {$R$} arc (6:45:6) node[above] {$\sqrt{i}R$} -- (0,0);

	\node[below] at (xaxis) {$\text{Re}(z)$};
	\node[left] at (yaxis) {$\text{Im}(z)$};
	\node[below left] {$O$};
	\node at (2,2.6) {$\Gamma_3$};
	\node at (6,2.2) {$\Gamma_2$};
	\node at (3,-0.6) {$\Gamma_1$};
	\end{tikzpicture}
	\end{center}

As $f$ is holomorphic, then by Cauchy's theorem:
\begin{equation}
\oint_{\Gamma} f(z) dz = \int_{\Gamma_1} f(z) dz + \int_{\Gamma_2} f(z) dz + \int_{\Gamma_3} f(z) dz = 0.
\end{equation}
Now use the explicit parameterisation of these contour parts to get that
\begin{equation}
\int_{\Gamma_1} f(z) dz = \int_0^R \dx\exp(iax^2)
\end{equation}
and
\begin{equation}
\int_{\Gamma_3} f(z) dz = \sqrt{i} \int_R^0 \mathrm{d}r \,\exp(ia(\sqrt{i}r)^2)= - \sqrt{i} \int_0^R\dx \exp(-ax^2).
\end{equation}
I will prove that 
\begin{equation}
\lim_{R\rightarrow \infty} \int_{\Gamma_2} f(z) dz = 0.
\end{equation}
Then by the previous equations:
\begin{equation}
\int_0^{\infty}\dx \exp(iax^2) =\int_{\Gamma_1} f(z) dz = -\int_{\Gamma_3} f(z) dz = \sqrt{i} \int_0^{\infty}\dx \exp(-ax^2) = \frac{1}{2}\sqrt{\frac{\pi i}{a}}.
\end{equation}
As the integrand is even, the result \ref{eq:1dfresnel} follows.\\
\\
We parameterise $\Gamma_2$ as $z = R \exp(it)$ with $t \in [0,\pi/4]$. Then $\mathrm{d}z = iR \exp(it) \dt$, so
\begin{equation}
\int_{\Gamma_2} f(z) dz = \int_0^{\pi/4} \dt iR \exp(it)\exp\left(iaR^2e^{2it}\right).
\end{equation}
We show the modulus of this integral is $\mathcal{O}(1/R)$, from which the result follows. To do this, we will use \emph{Jordan's lemma} from complex analysis, which states that $\frac{\sin t}{t} > \frac{2}{\pi}$ for $t \in [0,\pi/2]$. Therefore:
\begin{align}
\left|\int_0^{\pi/4} \dt iR \exp(it)\exp\left(iaR^2e^{2it}\right)\right| &\leq R \int_0^{\pi/4}\dt \left|\exp(it)\exp\left(iaR^2e^{2it}\right)\right| \nn\\
&= R \int_0^{\pi/4}\dt \left|\exp\left(iaR^2(\cos(2t) + i \sin(2t)\right)\right| \nn\\
&= R \int_0^{\pi/4}\dt \left|\exp\left(-aR^2\sin(2t)\right)\right| \left|\exp\left(i aR^2\cos(2t)\right)\right| 
\end{align}
where we used Euler's formula $e^{iz} = \cos z + i \sin z$. Now make the substitution $u = 2t$ and use that $\exp\left(-aR^2\sin(2t)\right) >0$ to get:
\begin{align}
\left| \int_{\Gamma_2} f(z) dz\right| &\leq \frac{R}{2} \int_0^{\pi/2} \mathrm{d}u \,\exp\left(-aR^2\sin u\right)\nn\\
&\leq \frac{R}{2} \int_0^{\pi/2}\mathrm{d}u\, \exp\left(-\frac{2aR^2}{\pi} u\right) \nn\\
&=  \frac{\pi}{4aR}( 1- e^{-aR^2})\leq \frac{\pi}{4aR} = \mathcal{O}(1/R)
\end{align}
using Jordan's lemma in the second-to-last line.
\end{proof}

\section{Campbell-Baker-Haussdorf formula} \label{sec:CBH}
To prove that 
\begin{equation}
\exp\left(-\frac{i\epsilon}{\hbar}(\hat{T} + \hat{V})\right) = \exp\left(-\frac{i\epsilon}{\hbar}\hat{T}\right)\exp\left(-\frac{i\epsilon}{\hbar}\hat{V}\right) + \mathcal{O}(\epsilon^2),
\end{equation}
we shall use the Campbell-Baker-Haussdorf formula \cite{Mackenzie,Blau}:

\begin{lemma} \label{lemma:CBHformula}
\textbf{(Campbell-Baker-Haussdorf formula)} Let $X,Y$ be two linear operators and let $\left[ X,Y \right]$ denote their commutator. Define $Z$ by $e^Z = e^Xe^Y$. Then $Z$ satisfies 
\begin{equation} \label{eq:CBHformula}
Z = X + Y + \frac{1}{2} [X,Y] + \frac{1}{12} \left([X,[X,Y]] + [Y,[X,Y]] \right) + \dots
\end{equation}
with $\dots$ denoting terms of order $4$ or higher in $X$ and $Y$.
\end{lemma}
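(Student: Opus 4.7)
The plan is to treat $Z=\log(e^X e^Y)$ as a formal power series in the non-commuting variables $X$ and $Y$, expand each exponential, multiply them out, and take the logarithm. Because the claimed identity is only asserted up to order four, I can freely discard cubic-in-$W$ terms with two or more ``large'' factors and work with truncated series throughout.

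First I would write $e^X e^Y = \mathds{1} + W$ where
\begin{equation}
W \;=\; \sum_{\substack{m,n\geq 0 \\ (m,n)\neq(0,0)}} \frac{X^m Y^n}{m!\,n!},
\end{equation}
and use the formal expansion $Z = \log(\mathds{1}+W) = W - \frac{1}{2}W^2 + \frac{1}{3}W^3 - \cdots$, which makes sense as a power series because $W$ has no constant term. The second step is to sort everything by total degree $k=m+n$ in $X$ and $Y$. At degree one, only $W$ contributes and gives $X+Y$. At degree two, $W$ contributes $\frac{1}{2}X^2 + XY + \frac{1}{2}Y^2$ while $-\frac{1}{2}W^2$ contributes $-\frac{1}{2}(X+Y)^2 = -\frac{1}{2}(X^2 + XY + YX + Y^2)$; the sum collapses to $\frac{1}{2}(XY-YX)=\frac{1}{2}[X,Y]$, matching the stated coefficient.

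For the degree-three term, I would collect contributions from the cubic monomials in $W$, the cross-terms of $W^2$ pairing a quadratic piece with a linear one, and the cubic piece of $W^3 = (X+Y)^3$. Expanding these with care for the order of non-commuting factors produces a linear combination of the six monomials $X^3, X^2Y, XYX, YX^2, XY^2, YXY, Y^2X, Y^3$. The symmetric pieces $X^3$ and $Y^3$ cancel, and the remaining five-term combinations must be shown to equal $\frac{1}{12}\bigl([X,[X,Y]] + [Y,[X,Y]]\bigr)$ by direct expansion of the right-hand side nested commutators.

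The main obstacle is precisely this degree-three bookkeeping: recognising that a specific asymmetric combination of non-commuting monomials assembles into the two nested commutators requires patient tracking rather than any deep idea. (The general Dynkin formula shows that \emph{every} term of $Z$ is a nested commutator, but for this lemma I only need the explicit form through order three.) Fortunately, for the physical application in Section~\ref{sec:propdefn} I only need the degree-one cancellation to conclude $e^{-i\epsilon \hat{T}/\hbar}e^{-i\epsilon \hat{V}/\hbar} = e^{-i\epsilon(\hat{T}+\hat{V})/\hbar}\cdot(\mathds{1} + \mathcal{O}(\epsilon^2))$, since the $\frac{1}{2}[-i\epsilon \hat{T}/\hbar,-i\epsilon \hat{V}/\hbar] = -\frac{\epsilon^2}{2\hbar^2}[\hat{T},\hat{V}]$ already exhibits the $\mathcal{O}(\epsilon^2)$ error explicitly.
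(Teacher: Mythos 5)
The paper does not actually prove this lemma: it is quoted from the references \cite{Mackenzie,Blau} and then used as a black box, so your formal-power-series argument is genuinely additional content rather than a variant of an existing proof. Your route --- write $e^Xe^Y=\mathds{1}+W$, expand $Z=\log(\mathds{1}+W)$ as $W-\tfrac12W^2+\tfrac13W^3-\cdots$, and collect by total degree --- is the standard elementary proof, and your degree-one and degree-two computations are correct and complete. Since the only place the lemma is used (the small-time factorization of $e^{-i\epsilon(\hat T+\hat V)/\hbar}$) needs nothing beyond the $\tfrac12[X,Y]$ term, your observation that the application survives on the quadratic order alone is also correct.

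There are two issues at third order. First, you do not actually carry out the degree-three bookkeeping; you describe the contributing pieces ($W$ at degree three, the degree-one-times-degree-two cross terms of $W^2$, and $(X+Y)^3$ from $W^3$) and assert the result, so as written the cubic term of the lemma is not established. Second, and more importantly, if you do carry it out you will find the coefficients $\tfrac1{12}\bigl(X^2Y-2XYX+YX^2\bigr)+\tfrac1{12}\bigl(XY^2-2YXY+Y^2X\bigr)$, which is $\tfrac1{12}\bigl([X,[X,Y]]+[Y,[Y,X]]\bigr)=\tfrac1{12}\bigl([X,[X,Y]]-[Y,[X,Y]]\bigr)$. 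With the paper's convention $[A,B]=AB-BA$ this differs by a sign in the second nested commutator from the statement you are trying to prove, so the final step of your plan (``show the five-term combinations equal $\tfrac1{12}([X,[X,Y]]+[Y,[X,Y]])$'') cannot succeed as literally stated: the lemma as printed contains a sign typo in its third-order term, and your expansion, done correctly, would detect it. Since the third-order term is never used in the paper, this does not affect anything downstream, but you should either correct the statement to $[Y,[Y,X]]$ or flag the discrepancy rather than promise a verification that will not close. (Minor: you list eight monomials while calling them six.)
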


We use this lemma to prove the following:
\begin{propn}
Let $\hat{T}$ and $\hat{V}$ be two operators and let $\hat{A}$ be defined by
\begin{equation} \label{eq:Xdefn}
\exp\left(-\frac{i\epsilon}{\hbar}(\hat{T} + \hat{V})\right) = \exp\left(-\frac{i\epsilon}{\hbar}\hat{T}\right)\exp\left(-\frac{i\epsilon}{\hbar}\hat{V}\right)\exp(\hat{A}).
\end{equation}
Then
\begin{equation} \label{eq:exponentialcorrection}
\hat{A} = \left(\frac{\epsilon}{\hbar}\right)^2\left( \frac{1}{2} [\hat{T}, \hat{V}] + \mathcal{O}(\epsilon^2) \right).
\end{equation}

\begin{proof}
Define $\hat{A}' = - \left(\frac{\hbar}{\epsilon}\right)^2 \hat{A}$ and rearrange equation \ref{eq:Xdefn} to get
\begin{equation}
\exp\left(-\frac{i\epsilon}{\hbar}(\hat{T} + \hat{V})\right)\exp\left( \left(\frac{\epsilon}{\hbar}\right)^2 \hat{A}' \right) = \exp\left(-\frac{i\epsilon}{\hbar}\hat{T}\right)\exp\left(-\frac{i\epsilon}{\hbar}\hat{V}\right).
\end{equation}

Applying the Campbell-Baker-Haussdorf formula to both sides yields
\begin{equation}
\exp\left( -\frac{i\epsilon}{\hbar}(\hat{T} + \hat{V}) + \left(\frac{\epsilon}{\hbar}\right)^2\hat{A}' + \mathcal{O}(\epsilon^3) \right) = \exp\left( -\frac{i\epsilon}{\hbar}(\hat{T} + \hat{V}) - \frac{1}{2}\left(\frac{\epsilon}{\hbar}\right)^2 [\hat{T},\hat{V}] + \mathcal{O}(\epsilon^3) \right).
\end{equation}

Expand both exponentials and equate terms of order $\epsilon^2$ to get:
\begin{equation}
\hat{A}' = -\frac{1}{2}[\hat{T},\hat{V}] + \mathcal{O}(\epsilon)
\end{equation}
which gives the stated result.
\end{proof}

\end{propn}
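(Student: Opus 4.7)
The plan is to apply the Campbell--Baker--Hausdorff formula (Lemma \ref{lemma:CBHformula}) to both sides of a suitable rearrangement of \ref{eq:Xdefn} and then equate exponents order by order in $\epsilon$.

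First I would rearrange \ref{eq:Xdefn} by multiplying on the right by $\exp(-\hat{A})$ to obtain
\begin{equation}
\exp\left(-\frac{i\epsilon}{\hbar}(\hat{T} + \hat{V})\right)\exp(-\hat{A}) = \exp\left(-\frac{i\epsilon}{\hbar}\hat{T}\right)\exp\left(-\frac{i\epsilon}{\hbar}\hat{V}\right).
\end{equation}
Both sides are now products of exactly two exponentials, which is the form BCH directly combines. Observing that $e^{X}e^{Y}$ differs from $e^{X+Y}$ only starting at the commutator order, one anticipates (and later verifies) that $\hat{A} = O(\epsilon^{2}/\hbar^{2})$, so it is natural to parametrise $\hat{A} = (\epsilon/\hbar)^{2}\hat{B}$ with $\hat{B}$ to be determined.

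Next I would apply equation \ref{eq:CBHformula} to the right-hand side with $X = -i\epsilon\hat{T}/\hbar$ and $Y = -i\epsilon\hat{V}/\hbar$. The quadratic commutator contributes $\frac{1}{2}[X,Y] = -\frac{\epsilon^{2}}{2\hbar^{2}}[\hat{T},\hat{V}]$, while each higher nested commutator in \ref{eq:CBHformula} carries at least three factors of $\epsilon/\hbar$ by homogeneity and is therefore $O(\epsilon^{3})$. Applying BCH analogously on the left-hand side, the cross-commutator $\frac{1}{2}[-\frac{i\epsilon}{\hbar}(\hat{T}+\hat{V}),\,-\hat{A}]$ is also $O(\epsilon^{3})$ once the ansatz for $\hat{A}$ is fed back in. Both sides thus collapse to single exponentials, and matching exponents yields
\begin{equation}
-\frac{i\epsilon}{\hbar}(\hat{T}+\hat{V}) - \hat{A} + O(\epsilon^{3}) = -\frac{i\epsilon}{\hbar}(\hat{T}+\hat{V}) - \frac{\epsilon^{2}}{2\hbar^{2}}[\hat{T},\hat{V}] + O(\epsilon^{3}),
\end{equation}
from which $\hat{A} = \frac{\epsilon^{2}}{2\hbar^{2}}[\hat{T},\hat{V}] + O(\epsilon^{3})$, which is the stated result.

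The main obstacle is the error bookkeeping: one must verify that every omitted BCH term is genuinely of order $\epsilon^{3}$ or smaller (which follows from the homogeneity in $X$ and $Y$ of each nested commutator appearing in \ref{eq:CBHformula}), and one must take care that the manipulation of these series is legitimate at the formal level appropriate for the possibly unbounded operators $\hat{T},\hat{V}$ of physical interest. As with the path integral itself in this chapter, the correct reading of BCH here is as an asymptotic expansion in $\epsilon$, truncated at the order we need.
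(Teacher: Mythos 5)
Your proposal is correct and follows essentially the same route as the paper: rearrange the defining equation so that each side is a product of exactly two exponentials, apply Campbell--Baker--Hausdorff to both sides, note that all nested commutators beyond the quadratic one are $\mathcal{O}(\epsilon^{3})$ by homogeneity, and equate exponents at order $\epsilon^{2}$. The only cosmetic difference is your parametrisation $\hat{A} = (\epsilon/\hbar)^{2}\hat{B}$ versus the paper's $\hat{A}' = -(\hbar/\epsilon)^{2}\hat{A}$, and you are slightly more careful than the paper in flagging that the a priori estimate $\hat{A}=\mathcal{O}(\epsilon^{2})$ must be verified before it is used to control the cross-commutator on the left-hand side.
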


Thus indeed
\begin{equation}
\exp\left(-\frac{i\epsilon}{\hbar}(\hat{T} + \hat{V})\right) = \exp\left(-\frac{i\epsilon}{\hbar}\hat{T}\right)\exp\left(-\frac{i\epsilon}{\hbar}\hat{V}\right) + \mathcal{O}(\epsilon^2).
\end{equation}

\section{Zeta-regularized constant infinite product}
In this section, we prove the following lemma:

\begin{lemma}
Under zeta-regularization, the following formula holds for any constant $b\in\mathbb{C}\setminus \{ 0\}$:
\begin{equation}
\prod_{n\geq 1} b = b^{-1/2}.
\end{equation}
\end{lemma}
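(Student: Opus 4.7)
The plan is to unpack the left-hand side via Definition \ref{defn:zetadet} and reduce everything to the single input $\zeta_R(0)=-\tfrac12$, where $\zeta_R$ is the Riemann zeta function used already in the free-particle and harmonic-oscillator determinant computations.

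Interpret the infinite product $\prod_{n\geq 1} b$ as the zeta-regularized determinant of the formal operator $A$ whose eigenvalues are $\lambda_n = b$ for all $n\geq 1$. By definition its spectral zeta function is
\begin{equation*}
\zeta_A(s) \;=\; \sum_{n=1}^{\infty} b^{-s}.
\end{equation*}
Since $b^{-s}$ is independent of $n$, I would pull it outside the sum and define $\zeta_A(s) := b^{-s}\,\zeta_R(0)$ by analytic continuation of $\sum_{n\geq 1} 1 = \lim_{s\to 0}\sum_{n\geq 1} n^{-s} = \zeta_R(0) = -\tfrac12$. This gives the clean formula $\zeta_A(s) = -\tfrac12\, b^{-s}$, which is manifestly entire in $s$.

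Then the rest is a one-line computation: differentiating,
\begin{equation*}
\zeta_A'(s) \;=\; \tfrac12 (\log b)\, b^{-s}, \qquad \zeta_A'(0) \;=\; \tfrac12 \log b,
\end{equation*}
so by the definition of the zeta-regularized determinant
\begin{equation*}
\prod_{n\geq 1} b \;=\; \exp\bigl(-\zeta_A'(0)\bigr) \;=\; \exp\bigl(-\tfrac12 \log b\bigr) \;=\; b^{-1/2}.
\end{equation*}

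The main obstacle is justifying the step $\sum_{n\geq 1} 1 \to \zeta_R(0) = -\tfrac12$, which is purely formal and is exactly the kind of ``absurd'' identity discussed after Definition \ref{defn:zetadet}. A more defensible alternative, which I would mention as a sanity check, is to note that if $B$ is any operator with convergent $\zeta_B$, then from $\zeta_{bB}(s) = b^{-s}\zeta_B(s)$ one gets $\det(bB)/\det(B) = b^{\zeta_B(0)}$; taking $B$ to be the reference operator with eigenvalues $n$ (so $\zeta_B = \zeta_R$ and $\zeta_B(0)=-\tfrac12$) recovers $\prod_n b = b^{-1/2}$ without ever needing to manipulate the divergent sum $\sum 1$ directly. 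For $b\in\mathbb{C}\setminus\{0\}$ one chooses a branch of $\log b$; the result is independent of that choice up to the standard $2\pi i$ ambiguity that does not affect $b^{-1/2}$ once a branch is fixed.
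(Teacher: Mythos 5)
Your proposal is correct and follows essentially the same route as the paper: both treat $\prod_{n\geq 1} b$ as $\det(b\mathds{1})$, regularize $\sum_{n\geq 1} 1$ to $\zeta(0)=-\tfrac12$ so that $\zeta_A(s)=-\tfrac12 b^{-s}$, and then compute $\exp(-\zeta_A'(0))=b^{-1/2}$. Your added remark deriving the result from the scaling identity $\zeta_{bB}(s)=b^{-s}\zeta_B(s)$ is a nice sanity check but does not change the substance of the argument.
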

\begin{proof}
This is the determinant of the operator $A_b = b \mathds{1}$ acting on a separable Hilbert space.

It has spectral zeta function:
\begin{equation}
\zeta_{A_b}(s) = \sum_{n\geq 1} b^{-s} = b^{-s} \sum_{n\geq 1} 1 = b^{-s} \zeta(0) = -\frac{1}{2} b^{-s}
\end{equation}
where we used the identity: $\zeta(0) = -\frac{1}{2}$. Hence
\begin{equation}
\zeta'_{A_b}(s) = \frac{1}{2} b^{-s} \log(b) 
\end{equation}

Therefore by definition \ref{defn:zetadet}
\begin{equation}
\prod_{n\geq 1} b = \det A_b = \exp\left(-\zeta'_{A_b}(0)\right) = \exp \left(-\frac{1}{2} \log b \right) = b^{-1/2}
\end{equation}
as required.
\end{proof}


\begin{thebibliography}{99}

\bibitem{allenweil}
	Carl B. Allendoerfer and Andre Weil, `The Gauss-Bonnet Theorem for Riemannian Polyhedra',
	\emph{Transactions of the American Mathematical Society}, Vol. 53, No. 1, (Jan. 1943), p. 101-129.

\bibitem{diffgeom}
	Dennis Barden and Charles Thomas,
	\emph{An introduction to differential manifolds},
	Imperial College Press,
	London,
	2003.

\bibitem{Barone}
	F.A. Barone and C. Farina,
	`The Zeta Function Method and the Harmonic Oscillator Propagator',
	\emph{American Journal of Physics} \textbf{69} (2001), p. 232-235; doi: 10.1119/1.1311784.

\bibitem{Blau}
	Matthias Blau, 
	`Notes on (semi-)advanced quantum mechanics: the path integral approach to quantum mechanics',
	web page at \url{http://www.blau.itp.unibe.ch/lecturesPI.pdf}. 
	Linked to from \url{http://www.blau.itp.unibe.ch/Lecturenotes.html}.

\bibitem{Hannabuss}
	Keith Hannabuss,
	\emph{An introduction to quantum theory},
	Clarendon Press,
	Oxford,
	1997.
	
\bibitem{Mirrorsymmetry}
	Kentaro Hori et al.,
	\emph{Mirror Symmetry},
	American Mathematical Society,
	Providence, Rhode Island,
	2003.
	
\bibitem{SiLi}
	Si Li,
	`Supersymmetric Quantum Mechanics and Lefschetz fixed-point formula',
	\emph{arXiv:hep-th/0511101}.
	
\bibitem{Mackenzie}
	Richard MacKenzie, 
	`Path integral methods and applications', 
	{\emph{arXiv:quant-ph/0004090}}.
	
\bibitem{hitoshiclassical}
	Hitoshi Murayama,
	`Notes on Classical Mechanics II',
	web page at \url{http://hitoshi.berkeley.edu/221A/classical2.pdf}.
	Linked to from \url{http://hitoshi.berkeley.edu/221A-F01.html}.

\bibitem{hitoshipath}
	Hitoshi Murayama,
	`Path Integral',
	web page at \url{http://hitoshi.berkeley.edu/221a/pathintegral.pdf}. 
	Linked to from \url{http://hitoshi.berkeley.edu/221A-F01.html}.

\bibitem{Nakahara}
	Mikio Nakahara,
	\emph{Geometry, topology and physics, second edition},
	Institute of Physics Publishing,
	Bristol,
	2003.
	
\bibitem{ooguri}
	Hirosi Ooguri,
	`Supersymmetry and Index Theorems',
	web page at \url{http://ocw.u-tokyo.ac.jp/lecture_files/sci_03/8/notes/en/ooguri08.pdf}.
	Linked to from \url{http://ocw.u-tokyo.ac.jp/lecture?id=11330}.
	
\bibitem{Parissis}
	Iannis Parissis,
	`Oscillatory integrals',
	web page at \url{http://math.aalto.fi/~parissi1/notes/oscillatory.pdf}.
	Linked to from \url{http://math.aalto.fi/~parissi1/}.

\bibitem{WittenSUSY}
	Edward Witten,
	`Supersymmetry and Morse Theory',
	\emph{Journal of Differential Geometry}, Volume 17, Number 4 (1982), p. 661-692.
	
\bibitem{WittenLocalization}
	Edward Witten,
	`Mirror Manifolds And Topological Field Theory',
	\emph{ 	arXiv:hep-th/9112056}.
	
\end{thebibliography}
\end{document}